\newtheorem{theorem}{Theorem}[section]
\newtheorem{lemma}[theorem]{Lemma}
\newtheorem{definition}[theorem]{Definition}
\newtheorem{example}[theorem]{Example}
\newtheorem{assumption}[theorem]{Assumption}
\newtheorem{proposition}[theorem]{Proposition}
\begin{document}


  \title{Distributed estimation through parallel approximants}
  \author{Aritra Chakravorty, William S. Cleveland, and Patrick J. Wolfe}
  \affil{\texttt{\{chakrav0,wsc,patrick\}@purdue.edu}}
\maketitle

\begin{abstract}
Designing scalable estimation algorithms is a core challenge in modern statistics. Here we introduce a framework to address this challenge based on parallel approximants, which yields estimators with provable properties that operate on the entirety of very large, distributed data sets. We first formalize the class of statistics which admit straightforward calculation in distributed environments through independent parallelization. We then show how to use such statistics to approximate arbitrary functional operators in appropriate spaces, yielding a general estimation framework that does not require data to reside entirely in memory. We characterize the $L^2$ approximation properties of our approach, and provide fully implemented examples of sample quantile calculation and local polynomial regression in a distributed computing environment. A variety of avenues and extensions remain open for future work.
\end{abstract}

\noindent%
{\it Keywords:}  Approximate inference, distributed learning, nonparametric estimation, parallel algorithms, statistical scalability

\vspace{\baselineskip}%
\noindent%
{\it AMS subject classifications:} 
62G05, 62B10, 65Y05, 68W10, 68W15

\section{Introduction}\label{sec:intro}

Historically, many canonical inference algorithms rely implicitly on data being stored entirely in working computer memory. By contrast, modern statistics demands scalability, requiring not only algorithms that can function efficiently within a partition-agnostic distributed data storage and computation framework, but also---equally critical but less well developed until recently---theory and methodology to enable trade-offs amongst inferential accuracy, speed, and robustness (e.g., \citep{AOS_2020_Szabo}).

Sampling-based approaches with requisite asymptotic requirements hold great appeal for general-purpose implementation (e.g., \citep{AOS_2018_Battey, JASA_2019_Wang, AOS_2021_Chen}). Modern computing environments, however, typically make use of distributed file systems enabling large-scale data storage and manipulation (e.g., Hadoop \citep{MSST_2010_Shvachko}) along with parallel processing frameworks that operate across the multiple compute and storage nodes of a scalable cluster (e.g., map-reduce \citep{CACM_2008_Dean}). This places emphasis on the need to balance independent parallel computations (easily repeatable at each node with commodity hardware) with communication between nodes (potentially costly in terms of time and memory) to aggregate intermediate results prior to final inferential output for non-parametric distributed estimation  \citet{arXiv_2022_Szabo}. Similar concerns arise in privacy-preserving computation and inference \citet{AOS_2020_Cai}.

Authors therefore tend to emphasize communication-efficient estimation algorithms (with a main earlier reference being \citep{JMLR_2013_Zhang}; see \citep{AOS_2021_Cai} for recent results on adaptivity over estimand function classes) within a general divide-and-conquer framework; for an overview of the growing literature emphasizing mathematical statistics, see for example \citet{AOS_2019_Banerjee} among others we cite here. The simplest approach to estimate a parameter of interest $\theta$ is of course simply to pool for variance reduction: supposing a data set is decomposed into $R$ disjoint subsets, one simply estimates $\hat{\theta}_{\:r}$ of $\theta$ at every $r$th node, and then averages to obtain the pooled estimator $\hat{\theta} := R^{-1} \sum_{r=1}^R \hat{\theta}_{\:r}$. Results on distributed linear regression via averaging \citet{AOS_2021_Dobriban}, quantile regression \citet{AOS_2019_Chen, AOS_2019_Volgushev}, and the estimation of principal eigenspaces \citet{AOS_2019_Fan} all contain thorough reviews of such methods and related work. Recently testing has also become a subject of investigation, exhibiting fundamentally different distributed properties than estimation in some regimes \citet{AOS_2022_Szabo}.

In this article we provide a framework to design and implement scalable estimation algorithms for very large data sets within modern partition-agnostic distributed computing environments. We focus first on what the authors of \citet{AOS_2021_Cai} and others call an independent distributed protocol: the class of statistics whose computation can be straightforwardly parallelized. We then use this class to approximate the more general functional operators that can typically arise in statistical inference. This leads in turn to concrete examples of scalable algorithms with performance guarantees. Our contribution provides practitioners with a generic technique, straightforwardly implemented using programming models such as map-reduce so that data are not required to reside entirely in memory. 

The remainder of this article is organized as follows. First, in section \ref{sec:EPS}, we introduce the notion of embarrassingly parallel statistics, providing definitions and examples along with connections to the classical notions of minimally sufficient statistics and exponential families. Next, in section \ref{sec:eps:func_opr}, we extend these concepts to complex-valued functions, which will depend on a parameter playing the role of an estimand in inference. We show how to approximate such functions through embarrassingly parallel statistics both in theory, by characterizing the $L^2$ approximation properties of our approach, as well as in practice, describing how to implement this estimation framework in a parallel, distributed computing environment. Then, in section \ref{sec:algo}, we illustrate our approach by providing two examples of statistically scalable algorithms with provable properties: a deterministic parallel scheme to approximate arbitrary sample quantiles, and a parallel approach to fitting a local regression model. We implement these approaches and provide a simulation study in section \ref{sec:sim}, and finally, we conclude in section \ref{sec:Disc} with a brief discussion of avenues and extensions that remain open for future work.

\section{Embarrassingly parallel statistics}\label{sec:EPS}

Consider the analysis of data sets whose elements $x$ have generic domain $\mathbb{X}$. Typically, an indexing scheme will be used to subdivide a generic data set $X$ (which we will often take to be a statistical sample, though without necessarily any implied technical restrictions) for the purpose of distributed storage and computation. Formally, let $X$ be a finite, nonempty multi-set and $I$ be a set representing the indexing scheme, where $|X|=|I|$ counts the total number of observations, including any repetitions. The index set $I$ may be partitioned $(R\,!)^{-1}\sum_{k=0}^R(-1)^k\binom{R}{k}(R-k)^{|I|}$ ways into $R\in\{1,\dots,|I|\}$ mutually exclusive and exhaustive nonempty subsets $I_{\:1},\dots,I_{\:R}$, such that $X$ in turn decomposes into multi-sets $X_{\:r}:=\{x_{\:i}:i\in I_{\:r}\}$ for $r\in\{1,\ldots,R\}$. Let $[I]$ denote such a partition of $I$, and $X_{[I]}$ its corresponding collection of data multi-subsets of $X$. Denote the number of subsets in $[I]$ as $\#[I] = R$. Finally, let the set of all possible sample data from $\mathbb{X}$ be denoted $\mathcal{P}(\mathbb{X})$, such that $X \in \mathcal{P}(\mathbb{X})$ and consequently likewise for any $X_{\:r}$.

\subsection{Definitions and examples}

Consider some statistic of interest $\mathsf{T}\colon\mathcal{P}(\mathbb{X})\to\mathbb{E}$, with $\mathbb{E}$ the range of $\mathsf{T}$. Typically $\mathsf{T}(X)\in\mathbb{R}^{d_{\mathsf{T}(X)}}$ for some $d_{\mathsf{T}(X)}\in\mathbb{N}^+$ which may depend on $X$ (if, for example, we let $\mathsf{T}$ be the set of order statistics). We call any $\mathsf{T}$ finite dimensional if there exists a fixed $d_{\:\mathsf{T}}<\infty$ such that $\mathsf{T}(X)\in\mathbb{R}^{d_{\:\mathsf{T}}}$ for any $X$.

Next, for an arbitrary partition $[I]$ with $\#[I]=R$ and the corresponding division in multi-sets $\big\{X_{\:1},\dots,X_{\:R}\big\}$, let $\mathsf{T}(X_{\:[I]})$ denote the collection $\big\{\mathsf{T}(X_{\:1}),\dots,\mathsf{T}(X_{\:R})\big\}\in\mathbb{E}^R$. Then we have following fundamental definitions.

\begin{definition}\label{eps:def:SEP_statistic}
We call any finite-dimensional $\mathsf{T}\colon\mathcal{P}(\mathbb{X})\to\mathbb{E}$ a Strongly Embarrassingly Parallel (SEP) statistic if for every partition $[I]$ there exists a function $\mathcal{F}_{\:[I]}\colon\mathbb{E}^{\#[I]}\to\mathbb{E}$, symmetric in its arguments, such that $\mathsf{T}(X)=\mathcal{F}_{\:[I]}\big(\mathsf{T}(X_{\:[I]})\big)$.
\end{definition}

\begin{definition}\label{eps:def:WEP_statistic}
We call a statistic $\mathsf{T}$ a Weakly Embarrassingly Parallel (WEP) statistic if $\mathsf{T}$ can be written as a function of finitely many SEP statistics.
\end{definition}

Thus an SEP statistic can be calculated in an embarrassingly parallel way with no inter-dependencies, through a function $\mathcal{F}_{\:[I]}$ applied after localized computation on $\mathsf{T}(X_{\:[I]})$. A WEP statistic can be expressed directly in terms of SEP primitives. 

These definitions enable a precise characterization of the set of functions for which exact calculations can be done via parallel computation, as the following examples illustrate.

\begin{example}[Sample mean vs.\ standard deviation]\label{eps:eg:sample_mean}
For the sample mean $\bar{X}$, observe that for any partition $[I]=\big\{I_{\:1},\dots,I_{\:R}\big\}$, we have: $\bar{X}$ $=$ $\sum_{r=1}^R\big(\nicefrac{|I_{\:r}|}{|I|}\big)\cdot \bar{X_{\:r}}$, which implies that $\bar{X}$ is SEP. Here, $\mathcal{F}_{\:[I]}$ is the weighted sum with weights $(\nicefrac{|I_{\:1}|}{|I|})$, $\dots$, $(\nicefrac{|I_{\:R}|}{|I|})$. For the sample standard deviation $\mathsf{S}(X)$, observe that: 
\begin{align*}
\mathsf{S}(X)=\sqrt{\sum_{r=1}^R\left(\frac{(|I_{\:r}|-1)}{|I|-1} {\mathsf{S}(X_{\:r})}^2+\frac{|I_{\:r}|}{|I|-1} (\bar{X_{\:r}}-\bar{X})^2\right)}.
\end{align*}
Therefore, while $\mathsf{S}(X)$ is not itself an SEP statistic, the 2-tuple $\mathsf{T}(X):=\big\{\bar{X},\mathsf{S}(X)\big\}$ \emph{is} SEP. Since $\mathsf{S}(X)$ is a function of $\mathsf{T}(X)$, it is WEP.
\end{example}

\begin{example}[Sample maximum vs.\ median]\label{eps:eg:sample_max}
Let $\mathsf{T}$ be the sample maximum (or minimum). Then $\mathsf{T}(X)=\mathcal{F}_{\:[I]}(\mathsf{T}\big(X_{\:[I]})\big)$ holds for every partition $[I]$ and corresponding collection of multi-sets $X_{\:[I]}$, with $\mathcal{F}_{\:[I]}$ being the maximum (or minimum) function. Hence $\mathsf{T}$ is SEP. However, if $\mathsf{T}$ instead denotes the sample median (i.e., fiftieth percentile by ordinal rank), then $\mathsf{T}$ is neither SEP nor WEP, because no corresponding map $\mathcal{F}_{\:[I]}$ exists, nor can $\mathsf{T}$ be expressed in terms of finitely many SEP statistics. Medians of local subsets $X_{\:1},\dots, X_{\:R}$ do not retain sufficient information to determine $\mathsf{T}$; rather, the median depends on $X$ through the entire ordered sample $\{x_{\:(1)},\ldots,x_{\:(|X|)}\}$, and so fails to respect our requirement of finite dimensionality.
\end{example}

\subsection{Minimally sufficient statistics and sums of transformations}\label{sec:eps:min_suff}

Embarrassingly parallel statistics as introduced in Definition \ref{eps:def:SEP_statistic} and Definition \ref{eps:def:WEP_statistic} above are precisely those whose structure guarantees straightforward distributed computation. They connect directly to the classical notion of minimally sufficient statistics.

\begin{theorem}\label{eps:thm:min_suff}
Suppose elements of $X$ are i.i.d.\ observations of a random variable from a family of probability distributions $P_{\:(\theta)}(.)\colon\sigma(\mathbb{X})\to(0,1)$ parameterized by $\theta$, absolutely continuous with respect to a common $\sigma$-finite measure. Then any finite-dimensional minimal sufficient statistic for $\theta$ is SEP.
\end{theorem}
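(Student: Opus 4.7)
The plan is to invoke the Bahadur--Lehmann--Scheff\'e characterization of minimal sufficiency: a sufficient statistic $\mathsf{T}$ is minimal if and only if $\mathsf{T}(X)=\mathsf{T}(X')$ whenever the likelihood ratio $L(\theta;X)/L(\theta;X')$ is free of $\theta$. Combined with the Neyman--Fisher factorization applied blockwise, this will let me identify exactly when two samples must share the same value of $\mathsf{T}$.

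First, using the i.i.d.\ assumption I factor the joint density along any partition $[I]$ as $p_\theta(X)=\prod_{r=1}^{R}p_\theta(X_{\:r})$. Applying the factorization theorem to the sufficient statistic $\mathsf{T}$ on each block yields $p_\theta(X_{\:r})=g_\theta(\mathsf{T}(X_{\:r}))\,h(X_{\:r})$, so the full joint density splits into a $\theta$-dependent piece that touches the data only through the local values $\mathsf{T}(X_{\:1}),\ldots,\mathsf{T}(X_{\:R})$ and a $\theta$-free remainder $\prod_r h(X_{\:r})$.

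Next, fix a partition $[I]$ and consider two samples $X,X'$ whose blocks share the sizes $|I_{\:r}|$ and satisfy $\mathsf{T}(X_{\:r})=\mathsf{T}(X'_{\:r})$ for every $r$. The likelihood ratio $p_\theta(X)/p_\theta(X')$ then collapses to the $\theta$-free quantity $\prod_r h(X_{\:r})/h(X'_{\:r})$, so by minimality $\mathsf{T}(X)=\mathsf{T}(X')$. This shows that $\mathsf{T}(X)$ depends on $X$ only through the tuple $\bigl(\mathsf{T}(X_{\:1}),\ldots,\mathsf{T}(X_{\:R})\bigr)$, and I define $\mathcal{F}_{\:[I]}$ to be the induced map (well defined, and valued in $\mathbb{E}$ by finite-dimensionality of $\mathsf{T}$). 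Symmetry is essentially free: the partition $[I]$ is an unordered collection of blocks and the multi-set $X$ is invariant under any relabeling of its blocks, hence $\mathsf{T}(X)$, and therefore $\mathcal{F}_{\:[I]}$, is invariant under permutations of its arguments.

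The hard part will be the implicit requirement that $\mathsf{T}$ be (minimal) sufficient on samples of each of the various sizes $|I_{\:r}|$, not merely on the full sample. The cleanest route is to interpret ``finite-dimensional minimal sufficient statistic'' as a rule defined uniformly across sample sizes, as is standard for regular exponential families where $\mathsf{T}$ reduces to a sum of per-datum transforms and the entire argument becomes essentially automatic; otherwise one must invoke the factorization on the full sample and carefully group contributions block by block, which is routine book-keeping but deserves explicit care in the final write-up.
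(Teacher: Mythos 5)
Your proof is correct and proceeds from the same starting point as the paper's: factor the joint density across blocks using the i.i.d.\ assumption, apply the Neyman--Fisher factorization to each block, and conclude that $\mathsf{T}(X)$ depends on $X$ only through the collection of block statistics. The one place you diverge is in how you invoke minimality. The paper observes that a second application of the factorization theorem makes the tuple $\bigl(\mathsf{T}(X_{\:1}),\dots,\mathsf{T}(X_{\:R})\bigr)$ jointly sufficient, and then directly applies the definition of minimal sufficiency (a minimal sufficient statistic is a function of every sufficient statistic). You instead establish well-definedness of the induced map via the Lehmann--Scheff\'e likelihood-ratio criterion: if two samples have matching block statistics then their likelihood ratio is free of $\theta$, forcing $\mathsf{T}(X)=\mathsf{T}(X')$. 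The two routes are equivalent, but the paper's is marginally lighter (it needs only the definitional property, not the full likelihood-ratio characterization, whose converse direction requires knowing that minimal sufficient statistics are essentially unique). On the other hand, you make two observations the paper leaves implicit: you actually argue the required symmetry of $\mathcal{F}_{\:[I]}$ (which Definition~\ref{eps:def:SEP_statistic} demands but the paper's proof never mentions), and you flag the subtlety that $\mathsf{T}$ must be interpreted as a minimal sufficient rule uniformly across sample sizes so that the block-level factorizations are legitimate --- a hypothesis the paper also tacitly assumes. Both of these deserve to be said, so your write-up is arguably the more complete of the two.
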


\begin{proof}
See Appendix \ref{proof:eps:thm:min_suff}.
\end{proof}

\begin{example}[Normal parameter estimation]\label{eps:eg:Normal}
Suppose elements of the data $X$ are i.i.d.\ $\text{Normal}(\mu,\sigma^2)$. Then $\bar{X}$ is known to be minimally sufficient for $\mu$---and hence is SEP, just as we verified earlier in Example \ref{eps:eg:sample_mean}. If $\sigma$ is unknown but $\mu$ is known, then 
\begin{align*}
\mathsf{S'}(X):=\sqrt{\frac{\sum_{i\in I}(x_{\:i}-\mu)^2}{|X|-1}}
=
\sqrt{\sum_{r=1}^R\frac{(|I_{\:r}|-1)}{|I|-1}{\mathsf{S'}(X_{\:r})}^2}
\end{align*}
is minimally sufficient for $\sigma$---and hence $\mathsf{S'}(X)$ is SEP, which we can also observe directly by noting that $\mathsf{S'}(X)$ can be expressed entirely in terms of $\mathsf{S'}(X_{\:r})$. Finally, the 2-tuple $\big\{\bar{X},\mathsf{S}(X)\big\}$ comprising the sample mean and standard deviation is minimally sufficient for $(\mu,\sigma)$ in the case where both parameters are unknown---and hence the pair $\big\{\bar{X},\mathsf{S}(X)\big\}$ is SEP, exactly as we verified in Example \ref{eps:eg:sample_mean}.
\end{example}

Consider Theorem \ref{eps:thm:min_suff} and recall that a class of distributions is called an $L$-parameter exponential family if its densities take the general form $p_{\:(\theta)}(x)=h(x)\cdot\exp\big\{\sum_{l=1}^L\eta_{\:l}(\theta)\cdot\tau_{\:l}(x)-A(\theta)\big\}$. If elements of $X$ are i.i.d.\ observations of a random variable $x$ which admits density $p_{\:(\theta)}(x)$, and $\big\{\eta_{\:1}(\theta),\dots,\eta_{\:L}(\mathbf{\theta})\big\}$ is a linearly independent set, then the $L$-tuple $\big\{\sum_{i\in I}\tau_{\:1}(x_{\:i}),\dots,\sum_{i\in I}\tau_{\:L}(x_{\:i})\big\}$ is a minimal sufficient statistic for $\theta$. The Darmois--Koopman--Pitman theorem asserts that in such settings, exponential families are unique among distributions with fixed support in admitting minimal sufficient statistics of finite dimension. Motivated by the general form $\sum_{i\in I}\tau(x_{\:i})$, we introduce a subclass of SEP statistics that will in turn enable scalable estimation.

\begin{definition}\label{eps:def:SOT_statistic}
We call any $\mathsf{T}\colon\mathcal{P}(\mathbb{X})\to\mathbb{E}$ a Sum of Transformations (SOT) statistic if there exists a transformation $\tau\colon\mathbb{X}\to\mathbb{E}$ such that $\mathsf{T}(X)=\sum_{i\in I}\tau(x_{\:i})$.
\end{definition}

\begin{proposition}\label{eps:prop:SOT_SEP_statistic}
Fix a multi-set $X$ such that $1 < |X| < \infty$, and let $\mathcal{C}_{\:\text{SOT}}(X)$, $\mathcal{C}_{\:\text{SEP}}(X)$, and $\mathcal{C}_{\:\text{WEP}}(X)$ respectively denote the sets of all possible SOT, SEP, and WEP statistics generated by $X$. Then we have $\mathcal{C}_{\:\text{SOT}}(X)\subset\mathcal{C}_{\:\text{SEP}}(X)\subset\mathcal{C}_{\:\text{WEP}}(X)$, where both inclusions are proper.
\end{proposition}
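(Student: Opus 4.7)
The plan is to establish the two chains of containment separately, and in each case exhibit an explicit separating example to make the inclusion proper. The forward containments are immediate from the definitions; the work lies in verifying that specific candidate statistics fail to be SOT or SEP, respectively.

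For $\mathcal{C}_{\text{SOT}}(X)\subseteq\mathcal{C}_{\text{SEP}}(X)$, I would note that if $\mathsf{T}(X)=\sum_{i\in I}\tau(x_{\:i})$, then for any partition $[I]=\{I_{\:1},\dots,I_{\:R}\}$ we have $\mathsf{T}(X)=\sum_{r=1}^R\sum_{i\in I_{\:r}}\tau(x_{\:i})=\sum_{r=1}^R\mathsf{T}(X_{\:r})$, so the summation map serves as a symmetric $\mathcal{F}_{\:[I]}$, establishing that $\mathsf{T}$ is SEP. The inclusion $\mathcal{C}_{\text{SEP}}(X)\subseteq\mathcal{C}_{\text{WEP}}(X)$ is even simpler: any SEP statistic is trivially a function (the identity) of finitely many (namely one) SEP statistics, hence WEP by Definition \ref{eps:def:WEP_statistic}.

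For the strictness of the first inclusion, I would use the sample maximum from Example \ref{eps:eg:sample_max}, which is already known to be SEP. To show it is not SOT on an arbitrary (nontrivial) base set, I would argue by contradiction: suppose $\max(X)=\sum_{i\in I}\tau(x_{\:i})$ for some $\tau\colon\mathbb{X}\to\mathbb{R}$. Evaluating on singletons forces $\tau(x)=x$ for all $x\in\mathbb{X}$, while evaluating on a pair $\{x_{\:1},x_{\:2}\}$ with $x_{\:1}\neq x_{\:2}$ yields $\max(x_{\:1},x_{\:2})=x_{\:1}+x_{\:2}$, which cannot hold whenever both entries are nonzero of the same sign. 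This gives the required contradiction.

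For the strictness of the second inclusion, I would exhibit the sample standard deviation $\mathsf{S}(X)$ as a WEP but non-SEP statistic. WEP-ness follows from Example \ref{eps:eg:sample_mean}, since $\mathsf{S}$ is a coordinate projection of the SEP pair $\{\bar{X},\mathsf{S}(X)\}$. To show it fails to be SEP, I would fix the partition shape $R=2$ with equal block sizes and construct two multi-sets with identical local standard deviations on each block but different overall standard deviations, for example $X=\{0,2,4,6\}$ split as $\{0,2\}\cup\{4,6\}$ versus $X'=\{1,3,3,5\}$ split as $\{1,3\}\cup\{3,5\}$, where each local standard deviation equals $\sqrt{2}$ but the global standard deviations differ. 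The existence of such a collision rules out any function $\mathcal{F}_{\:[I]}$ of the local standard deviations producing the global one, so $\mathsf{S}$ is not SEP.

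The main obstacle is the strictness portion, specifically the negative claims: showing that a given statistic is \emph{not} SOT or \emph{not} SEP. The rest reduces to unwinding definitions. Both strict-inclusion arguments hinge on producing a counterexample and then ruling out all candidate aggregation maps, which in both cases I handle by a small, deterministic collision argument rather than any appeal to probability or measure theory.
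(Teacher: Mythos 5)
Your proof is correct, and its overall skeleton coincides with the paper's: both containments are read directly off the definitions, and the sample maximum is the witness of choice for $\mathcal{C}_{\text{SOT}}\subsetneq\mathcal{C}_{\text{SEP}}$. Where you differ is in how the negative claims are actually discharged. To place $\max(X)$ in $\mathcal{C}_{\text{SEP}}$ the paper routes through Theorem \ref{eps:thm:min_suff}, invoking $\max(X)$ as a finite-dimensional minimal sufficient statistic for i.i.d.\ Uniform$(0,\theta)$ data; you instead cite the direct verification from Example \ref{eps:eg:sample_max}, which is all that is needed. More substantively, the paper merely \emph{asserts} that $\max(X)$ ``cannot be written in the form $\sum_{i\in I}\tau(x_i)$,'' whereas your two-step collision argument (singletons force $\tau(x)=x$, pairs then contradict) actually proves it. The same pattern holds for $\mathcal{C}_{\text{SEP}}\subsetneq\mathcal{C}_{\text{WEP}}$: the paper claims the properness ``follows directly from the definitions,'' implicitly leaning on the unproved assertion in Example \ref{eps:eg:sample_mean} that $\mathsf{S}(X)$ is not SEP, while you produce the concrete collision $X=\{0,2,4,6\}$ versus $X'=\{1,3,3,5\}$ (local standard deviations both $\sqrt{2}$, globals $\sqrt{20/3}$ versus $\sqrt{8/3}$), which rules out any aggregation map and closes that gap. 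In short, you use the same witnesses the paper uses or alludes to, but replace two of its bare assertions with elementary, self-contained arguments; the paper's detour through minimal sufficiency buys nothing here that Example \ref{eps:eg:sample_max} does not already give.
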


\begin{proof}
See Appendix \ref{proof:eps:prop:SOT_SEP_statistic}.
\end{proof}

Definition \ref{eps:def:SOT_statistic} provides a tool to design distributed algorithms, because it guarantees scalability by way of Proposition \ref{eps:prop:SOT_SEP_statistic}. The key properties of this framework hold for more general algebraic structures and operations beyond the choice of $\mathbb{E}=\mathbb{R}$ endowed with addition; indeed, for any commutative and associative binary operation $\oplus$ defined on $\mathbb{E}$, every SOT statistic $\mathsf{T}(X)=\sum^{(\oplus)}_{\:i\in I}\tau(x_{\:i})$ will be SEP. This ensures the applicability of Proposition \ref{eps:prop:SOT_SEP_statistic} and what follows to broader types of summary statistics or features (e.g., graph adjacency structures, special classes of matrices, etc.).

\section{Parallel approximants for distributed inference}\label{sec:eps:func_opr}

We now extend the concepts introduced above to complex-valued functions, which will depend on a parameter $\theta \in \Theta$ that plays the role of an estimand in inference. We exhibit a class of such functions that guarantees embarrassingly parallel statistics, and then prove that a convergent sequence of approximations of this type exists for any function in $L^2(\mathbb{X}\times \Theta)$. This yields a generic approximate inference procedure that is well matched to the architecture of distributed computing systems.


To relate our data set of interest $X$ to some unknown $\theta$, where $\theta$ lies in some parameter space $\Theta$, consider complex-valued functions $\mathsf{G}(X,\theta)\colon\mathcal{P}(\mathbb{X})\times\:\Theta\to\mathbb{C}$. Given $\theta\in\Theta$ and a partition $[I]$ of $I$ with $\#[I]=R$, define $\mathsf{G}(X_{\:[I]},\theta)$ as the collection $\big\{\mathsf{G}(X_{\:1},\theta),\dots,\mathsf{G}(X_{\:R},\theta)\big\}$. 

In analogy to Definition \ref{eps:def:SEP_statistic} and Definition \ref{eps:def:SOT_statistic}, we then have the following.

\begin{definition}\label{eps:def:SEP_func}
We call any $\mathsf{G}(X,\theta) \colon \mathcal{P}(\mathbb{X})\times\:\Theta\to\mathbb{C}$ a Strongly Embarrassingly Parallel (SEP) function if for every partition $[I]$, there exists a function $\mathcal{F}_{\:[I]}^{(\theta)}\colon\mathbb{C}^{\#[I]}\to\mathbb{C}$, symmetric in its arguments, such that $\mathsf{G}(X,\theta)=\mathcal{F}_{\:[I]}^{(\theta)}\big(\mathsf{G}(X_{\:[I]},\theta)\big)$.
\end{definition}

\begin{definition}\label{eps:def:SOT_func}
We call any $\mathsf{G}(X,\theta) \colon\mathcal{P}(\mathbb{X})\times\:\Theta\to\mathbb{C}$ a Sum of Transformations (SOT) function if there exists a transformation $\Gamma\colon \mathbb{X}\times\:\Theta\to\mathbb{C}$ such that $\mathsf{G}(X,\theta)=\sum_{\:i\in I}\Gamma(x_{\:i},\theta)$. We say $\mathsf{G}(X,\theta)$ is generated by $\Gamma(x,\theta)$.
\end{definition}

In direct analogy to Proposition \ref{eps:prop:SOT_SEP_statistic}, any SOT function is an SEP function. The canonical setting of likelihood-based inference makes these concepts concrete.

\begin{example}[Data log-likelihood]\label{eps:eg:log_lhd}
Suppose elements of $X$ are i.i.d.\ observations of a random variable that admits some density $p(x,\theta)$. Letting $\Gamma(x,\theta)=\log{p(x,\theta)}$, the log-likelihood of the data $X$ is an SOT function: $\mathsf{G}(X,\theta) = \sum_{\:i\in I}\log{p(x_{\:i},\theta)}$.
\end{example}

It is natural to ask which inferential settings give rise naturally to embarrassingly parallel statistics.  More formally, consider the function $\mathfrak{g}_{\:X,\:\mathsf{G}}\colon\Theta\to\mathbb{C}$, defined as $\mathfrak{g}_{\:X,\:\mathsf{G}}(\theta):=\mathsf{G}(X,\mathbf{\theta})$ for $\theta\in\Theta$, which we see belongs to $\mathbb{C}^{\:\Theta}$, the set of functions from $\Theta$ to $\mathbb{C}$. Then, for a fixed $\mathsf{G}$, the set $\mathcal{S}_{\:\mathsf{G}}=\{\mathfrak{g}_{\:X,\:\mathsf{G}}\colon X\in\mathbb{X}\}$ is a function space. 

\begin{example}[Sample maximum vs.\ median revisited]\label{eps:eg:max_opr}
Take $\mathbb{X}=\Theta=\mathbb{R}$ and let $\Gamma(x,\theta)=\mathbf{1}(x>\theta)$, such that $\mathsf{G}(X,\theta)$ $=$ $\sum_{\:i\in I}\mathbf{1}(x_{\:i}>\theta)$. Then if we consider the following functional operator $\mathsf{T}$ on $\mathcal{S}_{\:\mathsf{G}}$: 
\begin{align*}
\mathsf{T}(\mathfrak{g}_{\:X,\:\mathsf{G}})=\inf_{\hat{\theta}\in\Theta}\big\{\hat{\theta}:\mathfrak{g}_{\:X,\:\mathsf{G}}(\hat{\theta})=\min_{\theta\in\Theta}\,\mathfrak{g}_{\:X,\:\mathsf{G}}(\theta)\big\},
\end{align*}
we see that $\mathsf{T}$ is the sample maximum, which we recall from Example \ref{eps:eg:sample_max} is SEP. However, if instead we let $\Gamma(x,\theta)=|x-\theta|$, so that $\mathsf{G}(X,\theta)$ $=$ $\sum_{\:i\in I}|x_{\:i}-\theta|$, then we recognize $\mathsf{T}$ from Example \ref{eps:eg:sample_max} as the sample median, which is neither SEP nor WEP. 
\end{example}

It is nevertheless possible to ensure that arbitrary functional operators on $\mathcal{S}_{\:\mathsf{G}}$ yield WEP statistics, by considering transformations $\Gamma(x,\theta)$ of the following form.

\begin{definition}\label{eps:def:FAS_func}
We call any $\Gamma\colon \mathbb{X}\times\:\Theta\to\mathbb{C}$ a Finitely Additively Separable (FAS) function if there exist $J \in \mathbb{N}$ pairs of maps $\{f_{\:1}(x)$, $g_{\:1}(\theta)\}$, $\dots$, $\{f_{\:J}(x)$, $g_{\:J}(\theta)\}$, with each $f_{\:\cdot}(x)\colon\mathbb{X}\to\mathbb{C}$ and $g_{\:\cdot}(\theta)\colon\Theta\to\mathbb{C}$, and constants $\eta_{\:1},\dots,\eta_{\:J}\in\mathbb{C}$, such that $\Gamma$ is of the form $\Gamma(x,\theta)=\sum_{j=1}^J\eta_{\:j}\cdot f_{\:j}(x)\cdot g_{\:j}(\theta)$.
\end{definition}

\begin{proposition}\label{eps:prop:SOT_FAS_WEP}
Any arbitrary functional operator on the function space $\mathcal{S}_{\:\mathsf{G}}$ for an SOT function $\mathsf{G}(X,\theta)$ generated by an FAS function $\Gamma(x,\theta)$ is a WEP statistic.
\end{proposition}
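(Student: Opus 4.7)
The plan is to exploit the fact that when $\Gamma$ is FAS, the $\theta$-dependence and the $x$-dependence factor apart term by term, so $\mathsf{G}(X,\theta)$ collapses into a $\theta$-weighted linear combination of finitely many SOT statistics evaluated on $X$ alone. The WEP conclusion then follows almost immediately from how $\mathfrak{g}_{X,\mathsf{G}}$ depends on $X$.

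First, I would substitute the FAS representation $\Gamma(x,\theta)=\sum_{j=1}^{J}\eta_{j}f_{j}(x)g_{j}(\theta)$ into the SOT representation $\mathsf{G}(X,\theta)=\sum_{i\in I}\Gamma(x_{i},\theta)$ and interchange the two finite sums, obtaining
\begin{align*}
\mathsf{G}(X,\theta)\;=\;\sum_{j=1}^{J}\eta_{j}\,g_{j}(\theta)\,\mathsf{T}_{j}(X),\qquad \mathsf{T}_{j}(X)\;:=\;\sum_{i\in I}f_{j}(x_{i}).
\end{align*}
By construction each $\mathsf{T}_{j}$ is an SOT statistic, hence SEP by Proposition~\ref{eps:prop:SOT_SEP_statistic}, with the coordinatewise summation function as its aggregator $\mathcal{F}_{[I]}$.

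Next, I would note that for a fixed $\mathsf{G}$ (equivalently, fixed collection $\{(\eta_{j},f_{j},g_{j})\}_{j=1}^{J}$), the function $\mathfrak{g}_{X,\mathsf{G}}\in\mathcal{S}_{\mathsf{G}}$ depends on $X$ only through the $J$-tuple $(\mathsf{T}_{1}(X),\dots,\mathsf{T}_{J}(X))\in\mathbb{C}^{J}$. Explicitly, the data-independent map $\Psi\colon\mathbb{C}^{J}\to\mathbb{C}^{\Theta}$ defined by $\Psi(t_{1},\dots,t_{J})=\sum_{j=1}^{J}\eta_{j}g_{j}(\cdot)t_{j}$ satisfies $\mathfrak{g}_{X,\mathsf{G}}=\Psi\bigl(\mathsf{T}_{1}(X),\dots,\mathsf{T}_{J}(X)\bigr)$. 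Then for any functional operator $\mathsf{T}\colon\mathcal{S}_{\mathsf{G}}\to\mathbb{E}$ the composition gives $\mathsf{T}(\mathfrak{g}_{X,\mathsf{G}})=(\mathsf{T}\circ\Psi)\bigl(\mathsf{T}_{1}(X),\dots,\mathsf{T}_{J}(X)\bigr)$, i.e.\ a (deterministic) function of finitely many SEP statistics, which is precisely Definition~\ref{eps:def:WEP_statistic}.

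There is no genuine obstacle: the FAS structure is exactly the device that lets each $g_{j}(\theta)$ be pulled outside the observation-indexed sum, thereby reducing the entire $\theta$-parameterized family to a fixed, data-independent map applied to the finite-dimensional sufficient summary $(\mathsf{T}_{1}(X),\dots,\mathsf{T}_{J}(X))$. The only subtleties worth flagging are bookkeeping ones: the finite sums are freely interchangeable, and Definition~\ref{eps:def:WEP_statistic} places no constraint on the range $\mathbb{E}$ of $\mathsf{T}$, so no assumption on the operator beyond being well defined on $\mathcal{S}_{\mathsf{G}}$ is needed.
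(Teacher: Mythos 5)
Your proof is correct and follows essentially the same route as the paper's: substitute the FAS expansion into the SOT sum, interchange the two finite sums to expose $\mathsf{G}(X,\theta)$ as a $\theta$-weighted combination of finitely many SOT (hence SEP) statistics, then observe that $\mathfrak{g}_{X,\mathsf{G}}$ depends on $X$ only through that finite tuple, so any functional operator composes through it and is therefore WEP. The only cosmetic difference is that you make the data-independent map $\Psi$ explicit and write the operator as $\mathsf{T}\circ\Psi$, which is a slightly cleaner presentation of the same idea.
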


\begin{proof}
See Appendix \ref{proof:eps:prop:SOT-FAS-WEP}.
\end{proof}

\begin{example}[Sample moments]\label{eps:eg:FAS_func}
Take $\mathbb{X}=\Theta=\mathbb{R}$ and let $\Gamma(x,\theta):=(x^k-\theta)^2$ for some $k\in\mathbb{N}$, whence $\Gamma(x,\theta)=x^{2k}-2\theta x^k+\theta^2$. Let $f_{\:1}(x)=x^{2k}$, $f_{\:2}(x)=x^{k}$, $f_{\:3}(x)=1$; $g_{\:1}(\theta)=1$, $g_{\:2}(\theta)=\theta$, $g_{\:3}(\theta)=\theta^2$; and $\eta_{\:1}=1$, $\eta_{\:2}=-2$, $\eta_{\:3}=1$. Thus we may write $\Gamma(x,\theta)=\sum_{j=1}^3\eta_{\:j}\cdot f_{\:j}(x)\cdot g_{\:j}(\theta)$, verifying that $\Gamma$ is indeed an FAS function. If we consider the following functional operator $\mathsf{T}$ on $\mathcal{S}_{\:\mathsf{G}}$: 
\begin{align*}
\mathsf{T}(\mathfrak{g}_{\:X,\:\mathsf{G}})=\big\{\hat{\theta}\in\Theta:\mathfrak{g}_{\:X,\:\mathsf{G}}(\hat{\theta})=\min_{\theta\in\Theta}\,\mathfrak{g}_{\:X,\:\mathsf{G}}(\theta)\big\},
\end{align*}
we see that $\mathsf{T}$ is the $k$th sample moment, which is SEP. Any central sample moment is also a function of lower orders of raw sample moments, and so they are WEP.
\end{example}

Having seen several examples of SEP and WEP statistics, we now show how to approximate an arbitrary $\Gamma \in L^2(\mathbb{X}\times \Theta)$ by a sequence of FAS functions, in turn yielding embarrassingly parallel statistics as guaranteed by Proposition \ref{eps:prop:SOT_FAS_WEP}.

\begin{theorem}[Approximation by FAS functions]\label{eps:thm:SOT_FAS_L2}
For any $\Gamma \in L^2(\mathbb{X}\times \Theta)$, there exists a sequence $\Gamma_{\:1}, \Gamma_{\:2}, \ldots$ of FAS functions in $L^2(\mathbb{X}\times \Theta)$ for which $\lVert \Gamma - \Gamma_J \rVert_2 \overset{J\to\infty}{\longrightarrow} 0$. This sequence comprises the partial sums $\{\Gamma_J(x,\theta); J \in \mathbb{N} \}$ of $\Gamma_{\:\infty}$ defined as follows:
\begin{align*}
\Gamma_{\:\infty}(x,\theta) := \sum_{j=1}^\infty\eta_{\:j}\cdot u^*_{\:j}(x)\cdot v_{\:j}(\theta),
\end{align*}
where $\{u_{\:j}\}_{j=1}^\infty$ and $\{v_{\:j}\}_{j=1}^\infty$ are orthonormal systems in $L^2(\mathbb{X})$ and $L^2(\Theta)$ respectively, and the coefficients $\{\eta_{\:j}\}_{j=1}^\infty$ are non-negative reals, non-increasing in $j$. The $L^2$ approximation error of any $\Gamma_J$ is correspondingly given by $\lVert \Gamma - \Gamma_{J} \rVert_2^2 = \lVert \Gamma \rVert_2^2 - \sum_{j=1}^{J}\eta_{\:j}^2$. 

If furthermore the sequence $\{\eta_{\:j}\}_{j=1}^\infty $ is summable, $u_{\:j}(x)$ and $v_{\:j}(\theta)$ are uniformly bounded for all $j$, and $\Gamma = \Gamma_\infty$ everywhere on $\mathbb{X}\times\Theta$, then $\lVert \Gamma - \Gamma_J \rVert_\infty \overset{J\to\infty}{\longrightarrow} 0$.
\end{theorem}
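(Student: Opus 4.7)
The plan is to identify $\Gamma$ with the kernel of a Hilbert--Schmidt integral operator and read off the stated expansion from its Schmidt (singular value) decomposition. Define $K\colon L^2(\Theta)\to L^2(\mathbb{X})$ by $(Kf)(x):=\int_\Theta \Gamma(x,\theta)\,f(\theta)\,d\theta$. Because $\Gamma\in L^2(\mathbb{X}\times\Theta)$, $K$ is Hilbert--Schmidt with $\lVert K\rVert_{\mathrm{HS}}=\lVert\Gamma\rVert_2$, and hence compact. Apply the spectral theorem to the compact, self-adjoint, positive operator $K^*K$ on $L^2(\Theta)$ to obtain an orthonormal system of eigenvectors $\{v_{\:j}\}$ corresponding to its strictly positive eigenvalues $\{\eta_{\:j}^2\}$, arranged in non-increasing order. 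For each such $j$ set $u_{\:j}:=\eta_{\:j}^{-1}Kv_{\:j}\in L^2(\mathbb{X})$; these are orthonormal in $L^2(\mathbb{X})$, and the standard Schmidt expansion of Hilbert--Schmidt kernels (after at most a relabeling by complex conjugation to conform to the notation in the statement) yields
\[
\Gamma(x,\theta)=\sum_{j=1}^\infty \eta_{\:j}\, u_{\:j}^*(x)\, v_{\:j}(\theta)
\]
with convergence in $L^2(\mathbb{X}\times\Theta)$.

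By construction, each partial sum $\Gamma_{\:J}(x,\theta)=\sum_{j=1}^J \eta_{\:j}\cdot u_{\:j}^*(x)\cdot v_{\:j}(\theta)$ matches Definition \ref{eps:def:FAS_func} with $f_{\:j}=u_{\:j}^*$, $g_{\:j}=v_{\:j}$, and constants $\eta_{\:j}$, so each $\Gamma_{\:J}$ is an FAS function, whence by Proposition \ref{eps:prop:SOT_FAS_WEP} the induced functionals are WEP. The $L^2$ convergence $\lVert\Gamma-\Gamma_{\:J}\rVert_2\to 0$ is immediate from the Schmidt expansion. Since the rank-one tensors $\{u_{\:j}^*\otimes v_{\:j}\}$ form an orthonormal system in $L^2(\mathbb{X}\times\Theta)$, Parseval's identity gives $\lVert\Gamma\rVert_2^2=\sum_{j=1}^\infty\eta_{\:j}^2$ and hence the claimed error formula $\lVert\Gamma-\Gamma_{\:J}\rVert_2^2=\sum_{j=J+1}^\infty\eta_{\:j}^2=\lVert\Gamma\rVert_2^2-\sum_{j=1}^J\eta_{\:j}^2$.

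For the $L^\infty$ statement, assume $\{\eta_{\:j}\}$ is summable, that $|u_{\:j}(x)|\le M_u$ and $|v_{\:j}(\theta)|\le M_v$ uniformly in $j$, and that $\Gamma=\Gamma_{\:\infty}$ pointwise on $\mathbb{X}\times\Theta$. Then for every $(x,\theta)$,
\[
\bigl|\Gamma(x,\theta)-\Gamma_{\:J}(x,\theta)\bigr|=\biggl|\sum_{j=J+1}^\infty \eta_{\:j}\, u_{\:j}^*(x)\, v_{\:j}(\theta)\biggr|\le M_u M_v\sum_{j=J+1}^\infty\eta_{\:j},
\]
and the right-hand side is independent of $(x,\theta)$ and vanishes as $J\to\infty$ by summability; taking the supremum yields $\lVert\Gamma-\Gamma_{\:J}\rVert_\infty\to 0$.

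The main obstacle is the first paragraph: correctly invoking the Schmidt decomposition, handling possibly degenerate singular values, and aligning the conjugation convention of the resulting expansion with the form stated in the theorem. These are standard consequences of the spectral theory of compact operators on Hilbert spaces, so the heavy lifting is purely citational; the $L^2$ error identity then reduces to Parseval's theorem, and the $L^\infty$ bound reduces to a term-by-term estimation of the tail of an absolutely convergent series.
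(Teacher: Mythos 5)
Your proof is correct and rests on the same underlying mathematics as the paper's: the Schmidt (singular-value) decomposition of $\Gamma$ viewed as a Hilbert--Schmidt kernel, with the $L^2$ error formula following from Parseval and the $L^\infty$ bound from a tail estimate on an absolutely convergent series. The only difference is stylistic: where you derive the Schmidt expansion self-containedly from the spectral theorem for the compact positive operator $K^*K$, the paper simply cites \citet[Theorem~4]{Simsa1992}, which packages precisely this decomposition as the optimal $L^2$ approximation of a bivariate $L^2$ function by finite sums of products of univariate functions; the $L^\infty$ argument in the paper is identical to yours. Your approach is more explicit and thus more instructive, but it does not change the substance of the proof.
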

\begin{proof}
See Appendix \ref{proof:eps:thm:SOT_FAS_L2}.
\end{proof}

Theorem \ref{eps:thm:SOT_FAS_L2}, together with Proposition \ref{eps:prop:SOT_FAS_WEP}, validates our choice of FAS functions as an appropriate approximating class for the purpose of enabling statistical scalability through parallelization. The orthonormal systems giving rise to the sequence of optimally approximating FAS functions depend on $\Gamma$. Guided by these results, we are free either to adapt our approximation approach to a particular choice of $\Gamma$ in a given inference problem, or to adopt fixed sets of orthonormal bases that are known to have good approximation properties for appropriately matched target function spaces. Two examples of this reasoning that we shall employ in the sequel are as follows.

\begin{example}[Approximating the modulus of a difference]\label{eps:eg:abs}
Recall from Example \ref{eps:eg:sample_max} and Example \ref{eps:eg:max_opr} the sample median, whence $\Gamma(x,\theta)=|x-\theta|$. This expression admits the following convergent Fourier expansion whenever $\mathbb{X}=\Theta=(0,1)$:
\begin{equation*}
|x-\theta|_{\:J}:=\frac{\pi}{2}-\frac{4}{\pi}\sum_{j=1}^J\frac{\cos\big((2j-1)\cdot(x-\theta)\big)}{(2j-1)^2}; \quad |x-\theta| < \pi, \quad J\in\mathbb{N}^+.
\end{equation*}
Since $\cos(A(x-\theta)) = \cos(Ax) \cdot \cos(A\theta) + \sin(Ax) \cdot \sin(A\theta)$, we deduce that $|x-\theta|_{\:J}$ is an FAS function and hence will give rise to WEP statistics.
\end{example}

\begin{example}[Approximating indicator functions]\label{eps:eg:ind}
Analogously to Example \ref{eps:eg:abs}, we may approximate the indicators $\mathsf{1}(x<\theta)$ or $\mathsf{1}(x\leq\theta)$ using the convergent FAS Fourier expansion 
\begin{equation*}
\mathsf{1}_{\:J}(x,\theta):=\frac{1}{2}-\frac{2}{\pi}\sum_{j=1}^J\frac{\sin\big((2j-1)\cdot(x-\theta)\big)}{(2j-1)}; \quad |x-\theta| < \pi, \,\, x \neq \theta, \quad J\in\mathbb{N}^+.
\end{equation*}
\end{example}

While families other that trigonometric functions may yield improved approximating properties for various target functions and function spaces, these choices lead immediately to a simple and flexible implementation of distributed estimation using parallel approximants.

\section{Distributed estimation through parallel approximants}\label{sec:algo}

We now employ the approach described in section \ref{sec:eps:func_opr} above to construct estimators for specific inferential settings. We first give a procedure to compute sample quantiles, and then we specify an approximate fitting procedure for local polynomial regression. In each case we exhibit, under appropriate technical conditions, a convergent sequence of weakly embarrassingly parallel functions suitable for implementation in a distributed computing environment.

\subsection{Sample quantile determination}\label{sec:EPF-quantile}

We first propose an approach for parallel approximation of sample quantiles, suitable whenever data sets are sufficiently large and distributed so as to preclude the typical brute-force approach of sorting a sample in its entirety. Recalling from Example \ref{eps:eg:sample_max} that the sample minimum and maximum are self-evidently SEP whereas the sample median fails even to be WEP, we shall by contrast exhibit a sequence of WEP statistics which (under suitable technical conditions) give rise to an estimator of any $p$th sample quantile, $p\in(0,1)$, with desirable theoretical properties. Not only is this approach straightforward to implement and to optimize numerically within modern distributed computing environments, as detailed in the next section of this article, but also it enables multiple sample quantiles to be computed within a single map-reduce step for various choices of $p$.

For the purposes of establishing the behavior of our approach in the large-sample limit, consider a probability triple $(\Omega,\mathcal{F},P)$ giving rise to independent and identically distributed realizations $\{x_i, i \in I\}$ of a random variable $\tilde{x}\colon\Omega\to(0,1)$. Let $\hat{q}_{\:p}(X)$ denote a $p$th sample quantile of $\tilde{x}$ based on observed data $X$, so that $\hat{q}_{\:p}(X):=\big\{\theta\in(0,1)\colon\hat{F}(X,\theta^-)\leq p\leq\hat{F}(X,\theta)\big\}$, where $\hat{F}(X,\theta):=\nicefrac{1}{|X|}\big(\sum_{i\in I}\mathsf{1}(x_{\:i}\leq\theta)\big)$ is the empirical cumulative distribution function of $\tilde{x}$ based on $X$, and $\hat{F}(X,\theta^-):=\nicefrac{1}{|X|}\big(\sum_{i\in I}\mathsf{1}(x_{\:i}<\theta)\big)$ is the left-continuous version of $\hat{F}(X,\theta)$. We then assume the following.

\begin{assumption}\label{Quantile:assumption:positive-density}
The probability measure $P$ is absolutely continuous with respect to Lebesgue measure $\lambda$ on $(0,1)$, and the Radon--Nikodym derivative $f^{\:(P)}$ of $P$ with respect to $\lambda$ is almost surely positive in the interval $(0,1)$.
\end{assumption}

\begin{assumption}\label{Quantile:assumption:uniform-convergence}
The sequence of functions $\zeta_{\:J}^{\:(P)}(\theta)\colon(0,1)\to\mathbb{R}$ converges uniformly to $f^{\:(P)}(\theta)$ for $\theta\in(0,1)$, where
\begin{align*}
\zeta_{\:J}^{\:(P)}(\theta):=\mathbb{E}_{\:P}\:\Big[\frac{1}{\pi}\cdot \frac{\sin\big(2J(\tilde{x}-\theta)\big)}{\sin(\tilde{x}-\theta)}\Big]\text{ for }J\in\mathbb{N}.
\end{align*}
\end{assumption}

Assumption \ref{Quantile:assumption:uniform-convergence} acts a smoothness condition at $0$ and $1$ relatable to the Dirichlet kernel as it arises in Fourier analysis. We explore this assumption (which can be restated in various equivalent ways) in the simulation study of section \ref{sec:simulations} below, noting that if we identify $P$ with the uniform distribution on $(0,1)$, then for any fixed $0<\delta<1$, it can be shown that $\zeta_{\:J}^{\:(P)}(\theta)$ uniformly converges to $f^{\:(P)}(\theta)$ for $\theta\in(\delta,1-\delta)$---but it cannot be concluded that $\zeta_{\:J}^{\:(P)}(\theta)$ converges uniformly to $f^{\:(P)}(\theta)$ for $\theta\in(0,1)$.

Equipped with these assumptions, we have the following result.

\begin{theorem}\label{Quantile:thm:_prop}
Consider an i.i.d.\ sample $X = \{x_i, i \in I\}$ of observations of a random variable on $(0,1)$, distributed according to some probability measure $P$. Let the setting of Assumption \ref{Quantile:assumption:positive-density} be in force and fix any $p\in(0,1)$. Then as $| X | \to \infty$, almost surely $[P]$ we have that $X$ admits a unique $p$th sample quantile $\hat{q}_{\:p}(X)$, which we identify with the well known $M$-estimator \citet{AOS_1997_Koltchinskii, Book_2005_Koenker}
\begin{align*}
\hat{q}_{\:p}(X)=\operatornamewithlimits{argmin}_{\theta\in[0,1]}\sum_{i\in I} \left[ \frac{1}{2} \left| x_{\:i}-\theta \right| + \left(p - \frac{1}{2} \right) \cdot \left( x_{\:i}-\theta \right) \right].
\end{align*}
For any $J\in\mathbb{N}$, the following approximant is a WEP statistic:
\begin{align*}
\hat{q}_{\:J,\:p}(X)=\operatornamewithlimits{argmin}_{\theta\in[0,1]}\sum_{i\in I} \left[ \frac{1}{2}| x_{\:i}-\theta|_{\:J} + \left(p - \frac{1}{2} \right) \cdot \left( x_{\:i}-\theta \right) \right],
\end{align*}
where $| x_{\:i}-\theta|_{\:J}$ is the FAS function of Example \ref{eps:eg:abs}. If Assumption \ref{Quantile:assumption:uniform-convergence} holds, then
\begin{align*}
\lim_{J\to\infty}\limsup_{|X|\to\infty}\hat{q}_{\:J,\:p}(X)\overset{a.s.\,[P]}{=}\lim_{|X|\to\infty}\hat{q}_{\:p}(X) \overset{a.s.\,[P]}{=} 
\lim_{J\to\infty}\liminf_{|X|\to\infty}\hat{q}_{\:J,\:p}(X).
\end{align*}
\end{theorem}

\begin{proof}
See Appendix \ref{proof:Quantile:thm:_prop}.
\end{proof}

Theorem \ref{Quantile:thm:_prop} thus exhibits a sequence of WEP statistics suitable to estimate the $p$th sample quantile, for any $p\in(0,1)$. In the next section we describe a map-reduce implementation of $\hat{q}_{\:J,\:p}(X)$, along with a simulation study exploring the accuracy and computational scalability of this approach as a function of $J$ for large $|X|$ ($3 \times 10^9$ simulated data points).

Comparisons are shown in the subsequent section for simulated Normal and Uniform data relative to sample quantile approximation via binning, an approach which can also be implemented in parallel for large, distributed data sets straightforwardly using a map-reduce programming model. These two comparisons also serve to highlight the importance of Assumption \ref{Quantile:assumption:uniform-convergence}, with boundary effects observed near $0$ and $1$ for data distributed uniformly on the unit interval. Such effects are seen to decrease at any fixed point near $0$ or $1$ as $J$ increases, consistent with the discussion following Assumption \ref{Quantile:assumption:uniform-convergence}.

\subsection{Local regression for distributed data}\label{sec:EPF-Lowess}

As a second example of a distributed estimation algorithm which preserves desirable large-sample properties, we consider the local regression model as introduced by \citet{JASA_1979_Cleveland}, with response variable $\tilde{y}$ and a single explanatory variable $\tilde{x}$. The random response variable $\tilde{y}$ is assumed to be stochastically related to $\tilde{x}$ as:
\begin{align*}
\tilde{y}=\mu(\tilde{x})+\epsilon,\quad \epsilon\sim\text{Normal}(0,1).
\end{align*}
In this setting, we have for every $i\in I$ the pair $(x_{\:i},y_{\:i})$. For a given fraction of observations $\alpha\in(0,1)$ and positive integer $K$, the fitted value $\hat{\mu}(x)$ at a given point $x\in(0,1)$ is obtained by fitting a $K$th-order weighted polynomial regression model using $n_{\:\alpha}=\lceil\alpha\times|X|\rceil$ points in the local neighborhood of $x$. 

For large, distributed data sets, it is computationally intensive to identify these local neighborhoods repeatedly at scale---both for initial model fitting as well as for prediction. However, as we now show, it is possible to determine these neighborhoods through the use of embarrassingly parallel statistics. To establish the theoretical properties of this approach, we assume a probability triple $(\Omega,\mathcal{F},P)$ giving rise to i.i.d.\ realizations $\{x_{\:i}, i \in I\}$ for the explanatory variable $\tilde{x}\colon\Omega\to(0,1)$. Let us therefore define $\hat{F}_{\:x}(X,h):=\big(\nicefrac{1}{|X|}\big)\cdot\sum_{i\in I}\mathsf{1}(x-h\leq x_{\:i}\leq x+h)$ for $h\in(0,1)$ as a generalization of the empirical distribution function, along with the left-continuous version thereof, $\hat{F}_{\:x}(X,h^{\text{-}}):=\big(\nicefrac{1}{|X|}\big)\cdot\sum_{i\in I}\mathsf{1}(x-h< x_{\:i}< x+h)$. 

Now define $\hat{h}_{\:\alpha,\:x}(X)$ in relation to the distance from $x$ to its $n_{\:\alpha}$th-nearest neighbor in $X$:
\begin{align*}
 h=\hat{h}_{\:\alpha,\:x}(X)\Leftrightarrow h\in(0,1) \text{ and }\hat{F}_{\:x}(X,h^{\text{-}})\leq\alpha\leq\hat{F}_{\:x}(X,h).
\end{align*}
Observe that $\hat{h}_{\:\alpha,\:x}(X)$, if it exists, need not be unique. It can be viewed as a generalization of the quantile function $\hat{q}_{\:p}(X)$ from section \ref{sec:EPF-quantile}, by comparing $\hat{q}_{\:p}(X)$ to $\hat{h}_{\:p,\:0}(X)$ for $x\in(0,1)$.

We now describe how to compute a WEP variant of $\hat{h}_{\:\alpha,\:x}(X)$. Recall from Example \ref{eps:eg:ind} that $\mathsf{1}_{\:J}(x,\theta)$ is a $J$-term Fourier approximation to $\mathsf{1}(x\leq\theta)$. Similarly we may define the following approximation to $\mathsf{1}(x-h\leq\tilde{x}\leq x+h)$, which is likewise an FAS function:
\begin{align*}
\mathsf{1}_{\:J,\:x}(\tilde{x},h):=\mathsf{1}_{\:J}(\tilde{x},x-h)-\mathsf{1}_{\:J}(\tilde{x},x+h).
\end{align*}
Thus equipped, define $\hat{F}_{\:J,\:x}(X,h):=\big(\nicefrac{1}{|X|}\big)\cdot\sum_{i\in I}\mathsf{1}_{\:J,\:x}(x_{\:i},h)$ for $h\in(0,1)$. Observe that $\hat{F}_{\:J,\:x}(X,h)$ is an SOT function generated by the FAS function $\mathsf{1}_{\:J,\:x}(\tilde{x},h)$. Hence, by Proposition \ref{eps:prop:SOT_FAS_WEP}, any solution of $\hat{F}_{\:J,\:x}(X,h)=\alpha$ for $h$ will be a WEP statistic.

As a final preparatory step, assume the following technical conditions.

\begin{assumption}\label{Lowess:assumption:positive-density}
The probability measure $P$ is absolutely continuous with respect to the Lebesgue measure $\lambda$ on $(0,1)$, and the Radon--Nikodym derivative $f^{\:(P)}$ of $P$ with respect to $\lambda$ is almost surely positive in the interval $(0,1)$.
\end{assumption}

\begin{assumption}\label{Lowess:assumption:uniform-convergence}
The sequence of functions $\eta_{\:J,\:x}^{\:(P)}(h)\colon(0,1)\to\mathbb{R}$ converges uniformly to the limit function $f_{\:x}^{\:(P)}(h)=f^{\:(P)}(x-h)+f^{\:(P)}(x+h)$ in $h\in(0,1)$ for any given $x\in(0,1)$, where
\begin{align*}
\eta_{\:J,\:x}^{\:(P)}(h):=\mathbb{E}_{\:P}\:\bigg[\frac{2}{\pi}\cdot\Big(\frac{\sin\big(2J(\tilde{x}-x+h)\big)}{\sin(\tilde{x}-x+h)}+\frac{\sin\big(2J(\tilde{x}-x-h)\big)}{\sin(\tilde{x}-x-h)}\Big)\bigg].
\end{align*}
\end{assumption}

Thus equipped, we may approximate $\hat{h}_{\:\alpha,\:x}(X)$ in the large-sample limit by choosing $J$ sufficiently large and solving $\hat{F}_{\:J,\:x}(X,h)=\alpha$ for $h$. Because we are assured that any solution will be WEP, this approach enables the identification of local neighborhoods at scale.

\begin{theorem}\label{Lowess:thm:prop}
Consider the local regression setting described above, and let Assumption \ref{Lowess:assumption:positive-density} be in force. Then for any $\alpha\in(0,1)$ and $x\in(0,1)$, $\hat{h}_{\:\alpha,\:x}(X)$ exists almost surely $[P]$. Eventually as $|X|$ and $J$ grow large, almost surely $[P]$ at least one solution $\hat{h}_{\:J,\:\alpha,\:x}(X)$ exists for $h\in(0,1)$ to the expression $\hat{F}_{\:J,\:x}(X,h)=\alpha$. If furthermore Assumption \ref{Lowess:assumption:uniform-convergence} holds, then 
\begin{align*}
\lim_{J\to\infty}\limsup_{|X|\to\infty}\hat{h}_{\:J,\:\alpha,\:x}(X)\overset{a.s.\,[P]}{=}\lim_{|X|\to\infty}\hat{h}_{\:\alpha,\:x}(X) \overset{a.s.\,[P]}{=} \lim_{J\to\infty}\liminf_{|X|\to\infty}\hat{h}_{\:J,\:\alpha,\:x}(X).
\end{align*}
\end{theorem}

\begin{proof}
See Appendix \ref{overview:proof:Lowess:thm:_prop}.
\end{proof}

We conclude this section by showing how the approximation provided by Theorem \ref{Lowess:thm:prop} is incorporated into the overall local regression procedure. To implement local regression at a chosen location $x$, we consider the exact neighborhood weight for a given data point $x_{\:i}$ to be $W_{\:i}\big(x,\hat{h}_{\:\alpha,\:x}(X)\big)$, where $W_{\:i}(x,h)=w\big(\nicefrac{|x_{\:i}-x|}{h}\big)$ and $w$ is Tukey's tri-weight function:
\begin{align*}
w(u) := 
\begin{cases}
(1-u^3)^3 & \text{if } 0 \leq u < 1,\\
0 & \text{if } u \geq 1.
\end{cases}
\end{align*}
Once the exact weights $W_{\:i}\big(x,\hat{h}_{\:\alpha,\:x}(X)\big)$ are known for all $x_{\:i}$, then a polynomial of degree $K$ can be fitted to the data by minimizing the residual sum of squares $\mathsf{RSS}\big(\boldsymbol{\beta},X,Y,x,\hat{h}_{\:\alpha,\:x}(X)\big)$ with respect to $\boldsymbol{\beta}=\big(\beta_{\:0},\dots,\beta_{\:\mathsf{K}}\big)$, where
\begin{align*}
\mathsf{RSS}(\boldsymbol{\beta},X,Y,x,h):=\sum_{i\in I}W_{\:i}(x,h)\big(y_{\:i}-\sum_{\mathsf{k}=0}^\mathsf{K}\beta_{\:k}(x_{\:i}-x)^\mathsf{k}\big)^2.
\end{align*}
It is straightforward to show that this quantity may be expressed as follows:
\begin{align*}
\mathsf{RSS}(\boldsymbol{\beta},X,Y,x,h)\!&\,=a(X,Y,x,h)-2\boldsymbol{\beta}'\underline{\boldsymbol{a}}(X,Y,x,h)+\boldsymbol{\beta}'\underline{\underline{\boldsymbol{A}}}(X,x,h)\boldsymbol{\beta},\text{ where} \\
a(X,Y,x,h)&:=\sum_{i\in I}W_{\:i}(x,h)\cdot y_{\:i}^2, \\
\underline{\boldsymbol{a}}(X,Y,x,h)&:=\Big\{\sum_{i\in I}W_{\:i}(x,h)\cdot y_{\:i}\cdot(x_{\:i}-x)^\mathsf{k}\Big\}_{\:0\leq \mathsf{k}\leq \mathsf{K}},\\
\underline{\underline{\boldsymbol{A}}}(X,x,h)&:=\Big\{\Big\{\sum_{i\in I}W_{\:i}(x,h)(x_{\:i}-x)^{\mathsf{k}+\mathsf{k}'}\Big\}\Big\}_{0\leq \mathsf{k},\mathsf{k}'\leq \mathsf{K}}.
\end{align*}
The function $\mathsf{RSS}\big(\boldsymbol{\beta},X,Y,x,\hat{h}_{\:\alpha,\:x}(X)\big)$ then achieves its minimum in $\boldsymbol{\beta}$ at the point
\begin{align*}
\hat{\boldsymbol{\beta}}(x):={\underline{\underline{\boldsymbol{A}}}\big(X,x,\hat{h}_{\:\alpha,\:x}(X)\big)}^{-1}\underline{\boldsymbol{a}}\big(X,Y,x,\hat{h}_{\:\alpha,\:x}(X)\big),
\end{align*}
so that the exact fitted value of local regression at the point $x$ is $\hat{\mu}(x)=\hat{\beta}_0(x)$. In the approximate approach to local regression fitting outlined here, we instead minimize the function $\mathsf{RSS}\big(\boldsymbol{\beta},X,Y,x,\hat{h}_{\:J,\:\alpha,\:x}(X)\big)$ for $\boldsymbol{\beta}$, yielding 
\begin{align*}
\hat{\boldsymbol{\beta}}_{\:J}(x):={\underline{\underline{\boldsymbol{A}}}\big(X,x,\hat{h}_{\:J,\:\alpha,\:x}(X)\big)}^{-1}\underline{\boldsymbol{a}}\big(X,Y,x,\hat{h}_{\:J,\:\alpha,\:x}(X)\big).
\end{align*}
Then, our approximate fitted value of the local regression at the point $x$ is $\hat{\mu}_{\:J}(x)=\hat{\beta}_{\:J,\:0}(x)$, which is a WEP statistic. In this way we have exhibited a scalable version of local regression suitable for large, distributed data sets.

\section{Implementation in a distributed setting and accompanying simulation study}\label{sec:sim}

In this section we implement the two estimators derived in section \ref{sec:algo} using a map-reduce programming model suitable for large, distributed data sets. We begin by describing how to calculate strongly and weakly embarrassingly parallel statistics in parallel, distributed settings. We then show how to reduce the computational burden of our estimators further, replacing serial computation of trigonometric terms with serial multiplication based on the algebra of orthogonal polynomials. This leads directly to a distributed algorithm which can be used to scale local regression efficiently to large data sets. Finally, we conclude this section with an illustrative end-to-end example comparing our method of sample quantile determination to a simple parallel approach based on binning.

\subsection{Implementation in a parallel, distributed computing environment}\label{sec:MapReduce}

By design, the framework we have presented above is straightforward to implement using programming models such as map-reduce \citet{CACM_2008_Dean}, commonly used for large-scale data processing in parallel, distributed computing environments. Given a finite, nonempty multi-set of interest $X$ and the set $I$ representing its indexing scheme, distributed environments work with a partition $[I] = \{I_{\:1}, \dots, I_{\:R}\}$ that decomposes $X$ into corresponding multi-sets $X_{\:r}:=\{x_{\:i}:i\in I_{\:r}\}$ for $r\in\{1,\ldots,R\}$. Depending on the environment, $[I]$ may be specified explicitly by the user (e.g., Hadoop \citep{MSST_2010_Shvachko}) or determined implicitly to optimize overall system performance (e.g., Spark \citep{CACM_2016_Zaharia}). We discuss both choices.

Map-reduce input thus takes the form of a set of key-value pairs, which we may label $(1,X_{\:1}), \ldots, (R,X_{\:R})$. Suppose SEP statistics $\mathsf{T}_{\:1}(X), \ldots, \mathsf{T}_{\:L}(X)$ are to be computed. The Map step for the $r$th key-value pair $(r,X_{\:r})$ will apply the functions $\mathsf{T}_{\:1}(\cdot), \ldots, \mathsf{T}_{\:L}(\cdot)$ to the value $X_{\:r}$ to generate subset statistics $\mathsf{T}_{\:1}(X_{\:r})$, $\dots$, $\mathsf{T}_{\:L}(X_{\:r})$, consequently yielding the $L$ key-value pairs $\big(1,\mathsf{T}_{\:1}(X_{\:r})\big), \ldots, \big(L,\mathsf{T}_{\:L}(X_{\:r})\big)$. After completion of the Map step, there are $L \cdot R$ intermediate key-value pairs: $\left\{ \big(l,\mathsf{T}_{\:l}(X_{\:r})\big); \,\, 1\leq l\leq L, \, 1\leq r\leq R \right\}$. 

Now, since each $\mathsf{T}_{\:l}(\cdot)$ is SEP, by Definition \ref{eps:def:SEP_statistic} there exists for each $l$ a function $\mathcal{F}_{\:[I],\:l}(\cdot)$ such that $\mathsf{T}_{\:l}(X)=\mathcal{F}_{\:[I],\:l}\big(\mathsf{T}_{\:l}(X_{\:1}),\dots,\mathsf{T}_{\:l}(X_{\:R})\big)$. The $l$th Reduce step therefore collects all intermediate key-value pairs with key $l$, using $\mathcal{F}_{[I],l}$ as a reducer function to convert these into the output key-value pair $\big(l,\mathsf{T}_{\:l}(X)\big)$. Map-reduce thus returns $\mathsf{T}_{\:1}(X), \dots, \mathsf{T}_{\:L}(X)$ as required.

In the case that each $\mathsf{T}_{\:l}(X)$ is SOT with associated transformation $\tau_{\:l}(x)$, then the Map step for the $r$th key-value pair $(r,X_{\:r})$ will compute for each $i\in I_{\:r}$ the $L$ terms $\tau_{\:1}(x), \ldots, \tau_{\:L}(x)$, and sum these terms over $i\in I_{\:r}$ to generate $L$ number of subset statistics $\mathsf{T}_{\:1}(X_{\:r}),\dots,\mathsf{T}_{\:L}(X_{\:r})$. The $l$th Reduce step will simply apply as $\mathcal{F}_{\:[I],\:l}(\cdot)$ the summation operator, summing the corresponding $R$ values across all intermediate key-value pairs with key $l$. Since $I$ is the disjoint union of $I_{\:1}, \ldots, I_{\:R}$, this yields output key-value pair $\big(l,\mathsf{T}_{\:l}(X)\big)$ as required. 
If Spark is used when each $\mathsf{T}_{\:l}(X)$ is SOT, then $X$ can be declared as a column of a resilient distributed data set (RDD). This initial RDD is then transformed to an intermediate RDD, using a flat-map transformation with the function $\lambda(x) = \big(\tau_{\:1}(x), \dots, \tau_{\:L}(x)\big)$. This intermediate RDD is again transformed, using the summation operator for a Reduce transformation.

Finally, suppose that WEP statistics are to be computed. Consider a family of such statistics $\{ \mathsf{W}_{\:p}(X), p \in \mathcal{P} \}$, parameterized in terms of a (potentially uncountably infinite) set $\mathcal{P}$. Then each $\mathsf{W}_{\:p}(X)$ is itself a function of finitely many SEP statistics. If this functional dependence takes the form $\mathsf{W}_{\:p}(X)=\mathsf{G}_{\:p}\big(\mathsf{T}_{\:1}(X),\dots,\mathsf{T}_{\:L}(X)\big)$ for every $p\in\mathcal{P}$, then any number of WEP statistics can be evaluated in a single map-reduce step. This is a powerful practical feature in problems that can be appropriately parameterized, as is the case for the choices of sample quantiles $p \in (0,1)$ in Theorem \ref{Quantile:thm:_prop} and fitted points $x \in (0,1)$ in Theorem \ref{Lowess:thm:prop}.

\subsection{Use of orthogonal polynomials to reduce computation}\label{sec:General:MapReduce}

In the settings of both Theorem \ref{Quantile:thm:_prop} and Theorem \ref{Lowess:thm:prop}, it is possible to reduce the computational burden further by exploiting the algebra of orthogonal polynomials.

\subsubsection{Sample quantile determination}\label{sec:Quantile:MapReduce}
First consider the case of Theorem \ref{Quantile:thm:_prop}. Here we approximate the $p$th sample quantile $\hat{q}_{\:p}(X)$ by the WEP statistic $\hat{q}_{\:J,\:p}(X)$, which is obtained as the minimizer of the objective function $\mathsf{G}_{\:p,\:J}(X,\theta)$ for $\theta\in[0,1]$, where
\begin{align*}
\mathsf{G}_{\:p,\:J}(X,\theta):=\sum_{i\in I} \left[ \frac{1}{2}| x_{\:i}-\theta|_{\:J} + \left(p - \frac{1}{2} \right) \cdot \left( x_{\:i}-\theta \right) \right].
\end{align*} 
In practice we work with the standardized objective function $\bar{\mathsf{G}}_{\:J,\:p}(X,\theta):=\nicefrac{1}{|X|}\cdot\mathsf{G}_{\:J,\:p}(X,\theta)$. For $j\in\mathbb{N}$ and $z\in\mathbb{R}$, let $\mathsf{c}_{\:2j-1}(z):=\cos\big((2j-1)z\big)$ and $\mathsf{c}_{\:2j}(z):=\sin\big((2j-1)z\big)$. Also for $j\in\mathbb{N}$, consider the SEP statistic $\mathsf{C}_{\:j}(X):=\sum_{i\in I}\mathsf{c}_{\:j}(x_{\:i})$ and its standardized counterpart $\bar{\mathsf{C}}_{\:j}(X):=\nicefrac{1}{|X|}\cdot\mathsf{C}_{\:j}(X)$. It then follows from Lemma \ref{Quantile:lemma:population-expression} in Appendix \ref{proof:Quantile:thm:_prop} that
\begin{multline*}
\bar{\mathsf{G}}_{\:p,\:J}(X,\theta)=\Big(\frac{\pi}{4}-\big(p-\frac{1}{2}\big)\cdot \theta\Big)+\big(p-\frac{1}{2}\big)\cdot\bar{X}\\
-\:\frac{2}{\pi}\cdot \sum_{j=1}^J\frac{\Big(\bar{\mathsf{C}}_{\:2j-1}(X)\cdot \mathsf{c}_{\:2j-1}(\theta)+\bar{\mathsf{C}}_{\:2j}(X)\cdot \mathsf{c}_{\:2j}(\theta)\Big)}{(2j-1)^{2}}.
\end{multline*}

Conceptually, we proceed as follows. Consider an arbitrary set $\mathcal{Q}\subseteq (0,1)$, where we wish to compute $\hat{q}_{\:J,\:p}(X)$ for all $p\in\mathcal{Q}$ and some fixed $J$. First, we transform elements of the input data $X$ to the interval $(0,1)$. We then compute the $2J+1$ SEP statistics $|X|,\mathsf{C}_{\:1}(X),\dots,\mathsf{C}_{\:2J}(X)$ in a map-reduce step as described in section \ref{sec:MapReduce}. Since $\mathsf{G}_{\:J,\:p}(X,\theta)$ is an SOT function generated by the FAS function $\nicefrac{1}{2} \cdot | x -\theta|_{\:J} + \left(p - \nicefrac{1}{2} \right) \cdot \left( x -\theta \right)$, Proposition \ref{eps:prop:SOT_FAS_WEP} applies, and so its minimizer will be a WEP statistic. By minimizing $\bar{\mathsf{G}}_{\:J,\:p}(X,\theta)$ (inverse transforming its minimizer if necessary), we obtain the approximate quantile $\hat{q}_{\:J,\:p}(X)$.

This conceptual approach can be improved by the use of orthogonal polynomials to replace serial computation of trigonometric functions by serial multiplication \citet{Thesis_2019_Chakravorty}. To do so we shall require $K$ integer parameters $J_{\:1},\dots,J_{\:K}$ such that $J=\prod_{k=1}^K J_{\:k}$. Let $\boldsymbol{J}:=\big(J_{\:1},\dots,J_{\:K}\big)$ and define integers $L_{\:k}:=\prod_{l=1}^k J_{\:l}$ for $1\leq k<K$. Also define the index set
\begin{align*}
\mathbb{N}_{\{\boldsymbol{J}\}}:=\{1,\dots,2 \cdot J_{\:1}\big\}\times\{1,\dots,J_{\:2}\big\}\times\dots\times\{1,\dots,J_{\:K}\big\}.
\end{align*}

Let $\mathsf{c}^{\:2j-1}(x):=\cos^{2j-1}(x)$ and $\mathsf{c}^{\:2j}(x):=\sin(x)\cos^{2j-2}(x)$ for $j\in\mathbb{N}^+$. Given $J'\in\mathbb{N}^+$, let us also define $\tilde{\mathsf{c}}^{\:j}(x,J')=\cos^{j-1}(2J'x)$ for $j\in\mathbb{N}^+$. Then, for the $K$-dimensional vector $\boldsymbol{j}=\big(j_{\:1},j_{\:2},\dots,j_{\:K}\big)\in\mathbb{N}_{\{\boldsymbol{J}\}}$, define 
\begin{align*}
\mathsf{c}^{\boldsymbol{j}}(x,\boldsymbol{J})=\mathsf{c}^{\:j_{\:1},\:j_{\:2},\:\dots,\:j_{\:K}}(x,\boldsymbol{J}):=\mathsf{c}^{\:j_{\:1}}(x)\cdot \prod_{k=2}^K\tilde{\mathsf{c}}^{\:j_{\:k}}(x,L_{\:k-1}).
\end{align*}
Now, define the $K$-dimensional array-valued function
\begin{align*}
\boldsymbol{\mathsf{c}}^{\:(1:2J_{\:1}),\:(1:J_{\:2}),\:\dots\:,\:(1:J_{\:K})}(x,\boldsymbol{J}):=\{\{\mathsf{c}^{\boldsymbol{j}}(x,\boldsymbol{J})\}\}_{\left\{\boldsymbol{j}\in\mathbb{N}_{\{\boldsymbol{J}\}}\right\}}.
\end{align*}
Finally, let us define the following $K$-dimensional array-valued SEP statistic, with dimensions $\{2J_{\:1},J_{\:2},\dots,J_{\:K}\}$, and its corresponding standardized version:
\begin{align*}
&\boldsymbol{\mathsf{C}}^{\:(1:2J_{\:1}),\:(1:J_{\:2}),\:\dots,\:(1:J_{\:K})}(X,\boldsymbol{J}):=\sum_{i\in I}\boldsymbol{\mathsf{c}}^{\:(1:2J_{\:1}),\:(1:J_{\:2}),\:\dots,\:(1:J_{\:K})}(x_{\:i},\boldsymbol{J}), \\
&\bar{\boldsymbol{\mathsf{C}}}^{\:(1:2J_{\:1}),\:(1:J_{\:2}),\:\dots,\:(1:J_{\:K})}(X,\boldsymbol{J}):=\nicefrac{1}{|X|}\cdot\boldsymbol{\mathsf{C}}^{\:(1:2J_{\:1}),\:(1:J_{\:2}),\:\dots,\:(1:J_{\:K})}(X,\boldsymbol{J}),
\end{align*}
where we observe that for $\boldsymbol{j}\in\mathbb{N}_{\{\boldsymbol{J}\}}$, the $\boldsymbol{j}$th element of $\bar{\boldsymbol{\mathsf{C}}}^{\:(1:2J_{\:1}),\:(1:J_{\:2}),\:\dots,\:(1:J_{\:K})}(X,\boldsymbol{J})$ is
\begin{align*}
\bar{\mathsf{C}}^{\:\boldsymbol{j}}(X,\boldsymbol{J})=\nicefrac{1}{|X|}\cdot\mathsf{C}^{\:\boldsymbol{j}}(X,\boldsymbol{J})=\nicefrac{1}{|X|}\cdot\sum_{i\in I}\mathsf{c}^{\:\boldsymbol{j}}(x_{\:i},\boldsymbol{J}).
\end{align*}

It can then be shown (\citet[Section 3.6]{Thesis_2019_Chakravorty}) that for $K\in\mathbb{N}^+$ and $\boldsymbol{J}\in{\mathbb{N}^+}^K$, there exists a linear transformation $\mathcal{T}^{(K)}_{\:\boldsymbol{J}}:\mathbb{R}^{2J_{\:1}}\times\dots\times\mathbb{R}^{J_{\:K}}\to\mathbb{R}^{2J}$ such that
\begin{align*}
\bar{\boldsymbol{\mathsf{C}}}_{\:(1:2J)}(X)=\mathcal{T}^{(K)}_{\:\boldsymbol{J}}\big(\bar{\boldsymbol{\mathsf{C}}}^{\:(1:2J_{\:1}),\:(1:J_{\:2}),\:\dots,\:(1:J_{\:K})}(X,\boldsymbol{J})\big).
\end{align*}

This result implies that instead of directly computing SEP statistics $\mathsf{C}_{\:j}(X)$ for $j=1,\dots,2J$ in a map-reduce step (along with $|X|$), we may instead compute SEP statistics $\mathsf{C}^{\:\boldsymbol{j}}(X)$ for $\boldsymbol{j}\in\mathbb{N}_{\{\boldsymbol{J}\}}$ (along with $|X|$). While $2J+1$ SEP statistics must still be computed, we now have only $K \ll J$ cosine terms to be computed for each observation $x_{\:i}$. The Reduce operation in turn still involves summation over subsets. Finally, after the entirety of this map-reduce step, the statistic $\mathsf{C}^{\:\boldsymbol{j}}(X,\boldsymbol{J})$ is standardized to obtain the SEP statistic $\bar{\mathsf{C}}^{\:\boldsymbol{j}}(X,\boldsymbol{J})$ for $\boldsymbol{j}\in\mathbb{N}_{\{\boldsymbol{J}\}}$, and $\bar{\mathsf{C}}_{\:j}(X)$ for $j\in\{1,2,\dots,2J\}$ is then obtained efficiently in aggregate by applying the linear transformation $\mathcal{T}^{(K)}_{\:\boldsymbol{J}}$. We then denote our final result by $\hat{q}_{\:\boldsymbol{J},\:p}(X)$, noting that if $\boldsymbol{J}=\big(J_{\:1},\dots,J_{\:K}\big)$, then $\hat{q}_{\:\boldsymbol{J},\:p}(X) = \hat{q}_{\:J,\:p}(X)$ for $J=\prod_{k=1}^K J_{\:k}$.

\subsubsection{Local regression for distributed data}

For local regression, we employ a two-step map-reduce approach based on Theorem \ref{Lowess:thm:prop}, with the aid of the transformation $\mathcal{T}^{(K)}_{\:\boldsymbol{J}}$ introduced in section \ref{sec:Quantile:MapReduce} above. We consider a distributed set $\{X,Y\}$ of training data, indexed by the usual index set $I$ such that we have the pair $\{x_{\:i},y_{\:i}\}$ for each $i\in I$, and a test set $X_{\:0}$ stored in local memory, for which it is desired to predict the response variable $\tilde{y}$ for each $x\in X_{\:0}$. In the first map-reduce step, we compute $2J$ SEP statistics: $\bar{\mathsf{C}}^{\:\boldsymbol{j}}(X)$ for $\boldsymbol{j}\in\mathbb{N}_{\{\boldsymbol{J}\}}$, as described in section \ref{sec:MapReduce}. We then obtain $\bar{\mathsf{C}}^{\:j}(X)$ for $j\in\{1,2,\dots,2J\}$ by the transformation $\mathcal{T}^{(K)}_{\:\boldsymbol{J}}$ outlined in section~\ref{sec:Quantile:MapReduce}. 
Next, from lemma \ref{Lowess:lemma:delta_expression} of the Supplementary Material,
\begin{align}
\ \hat{F}_{\:J,\:x}(X,h)=\frac{4}{\pi}\cdot \sum_{j=1}^J\big(\bar{\mathsf{C}}_{\:2j-1}(X)\cdot\mathsf{c}_{\:2j-1}(x)+\bar{\mathsf{C}}_{\:2j}(X)\cdot\mathsf{c}_{\:2j}(x)\big)\cdot\frac{\mathsf{c}_{\:2j}(h)}{2j-1}.
\end{align}

Then, for each $x\in X_{\:0}$, we solve the equation $\hat{F}_{\:J,\:x}(X,h) = \alpha$ as observed in Theorem \ref{Lowess:thm:prop}, (noting that such a solution exists for sufficiently small $\alpha$) to obtain the corresponding value of $\hat{h}_{\:J,\:\alpha,\:x}(X)$. In the second map-reduce step we compute the $\mathsf{K}\times \mathsf{K}$ matrix-valued SEP statistic $\underline{\underline{\boldsymbol{A}}}\big(X,x,\hat{h}_{\:J,\:\alpha,\:x}(X)\big)$ as well as the $\mathsf{K}\times 1$ array-valued SEP statistic $\underline{\boldsymbol{a}}\big(X,Y,x,\hat{h}_{\:J,\:\alpha,\:x}(X)\big)$ for $x\in X_{\:0}$, as described in section \ref{sec:MapReduce}. In the post-Reduce step, we minimize the function $\mathsf{RSS}\big(\boldsymbol{\beta},X,Y,x,\hat{h}_{\:J,\:\alpha,\:x}(X)\big)$ for $\boldsymbol{\beta}$ by computing for each $x\in X_{\:0}$
\begin{align*}
\hat{\boldsymbol{\beta}}_{\:J}(x)={\underline{\underline{\boldsymbol{A}}}\big(X,x,\hat{h}_{\:J,\:\alpha,\:x}(X)\big)}^{-1}\underline{\boldsymbol{a}}\big(X,Y,x,\hat{h}_{\:J,\:\alpha,\:x}(X)\big).
\end{align*}
Finally, our approximate fitted value of the local regression at the point $x$ is $\hat{\mu}_{\:J}(x)=\hat{\beta}_{\:J,\:0}(x)$, for each $x\in X_{\:0}$.

\subsection{Comparing to sample quantile approximation via binning in a distributed setting}\label{sec:simulations}

We now report the results of a simulation study comparing our method of determining $\hat{q}_{\:\boldsymbol{J},\:p}(X)$ to a standard method based on linearly interpolating histogram bin counts, implemented within the Hadoop distributed file system (HDFS) with $|X|$ on the order of three billion observations (22.4 gigabytes in HDFS). Specifying an invertible distribution function $F^{\:(P)}(x)$ on $(0,1)$ and letting $N = |X|+1$, we first take $X$ to be a uniform random permutation of the values ${F^{\:(P)}}^{\text{-}1}(\nicefrac{1}{N}),{F^{\:(P)}}^{\text{-}1}(\nicefrac{2}{N}),\dots,{F^{\:(P)}}^{\text{-}1}(\nicefrac{N-1}{N})$. These permuted values are then identified with the corresponding set $\{q_{\:p}^{\:(P)}:p=\nicefrac{1}{N},\nicefrac{2}{N},\dots,\nicefrac{N-1}{N}\}$ of population quantiles of ${F^{\:(P)}}$. This manner of simulation avoids the computation of exact sample quantiles $\hat{q}_{\:p}^{\:(P)}(X)$ by brute-force sorting when $|X|$ is large and distributed, since the difference $\big|\:q_{\:p}^{\:(P)}-\hat{q}_{\:p}^{\:(P)}(X)\:\big|$ can be expected to be negligible for sufficiently large $|X|$.

Approximating quantiles by linearly interpolating histogram bin counts is a natural comparison that can be implemented straightforwardly in parallel using map-reduce. Here the range of $X$ is partitioned into $B$ equi-spaced bins via the intervals $(b_{\:0},b_{\:1}]$, $(b_{\:1},b_{\:2}]$, $\dots$, $(b_{\:B-1},b_{\:B})$. The number of observations in each bin is then calculated, leading for $r=1,\dots,B$ to the set of SEP statistics $\mathsf{T}_{\:r}(X)=\sum_{i\in I}\mathsf{1}(b_{r-1}<x_{\:i}\leq b_r)$. From these cumulative frequency counts, quantile values can be approximated by linear interpolation. 

As a baseline computational comparison, using binary search to assign an $x_{\:i}$ to one of $B$ bins requires $\mathcal{O}(\log B)$ operations on average for each $i\in I$, whereas calculating $\hat{q}_{\:J,\:p}(X)$ using the linear transformation $\mathcal{T}^{(K)}_{\:\boldsymbol{J}}$ requires computation of $K \ll J$ cosine terms for each $i\in I$, with $J=\prod_{k=1}^K J_{\:k}$. The logarithm of the maximal coefficient in $\mathcal{T}^{(K)}_{\:\boldsymbol{J}}$ grows as $(2J_{\:1}-1)+\sum_{k=2}^K (J_{\:k}-1)$, however, and so $K$ should grow as $J$ grows in order to curtail the growth of round-off error in any practical implementation. In the setting considered here, with $J \in \{ 32, 108, 128, 256, 648 \}$, we found $K=3$ or $K=4$ to be  adequate. Minimizing $(2J_{\:1}-1)+\sum_{k=2}^K (J_{\:k}-1)$ for fixed $J$ and $K$ then leads naturally to the factorization choices $\boldsymbol{J} \in \{(2,4,4), (3,6,6), (2,4,4,4), (4,8,8), (3,6,6,6)\}$.

For this simulation study we took $N = 3 \times 10^{9}$ and considered two examples from the Beta family of distributions: a $\operatorname{Beta}(19,19)$ density, which behaves like a Normal density rescaled to the unit interval in a way that satisfies Assumptions \ref{Quantile:assumption:positive-density} and \ref{Quantile:assumption:uniform-convergence} (see section \ref{sec:verify_beta(19,19)} of the accompanying Supplementary Material), and a $\operatorname{Beta}(1,1)$ or $\operatorname{Uniform}(0,1)$ distribution, which satisfies Assumption \ref{Quantile:assumption:positive-density} but not Assumption \ref{Quantile:assumption:uniform-convergence} (see section \ref{sec:verify_uniform(0,1)} of the Supplementary Material). In each case we compared $\hat{q}_{\:\boldsymbol{J},\:p}(X)$, for the choices of $\boldsymbol{J}$ listed above, with $\hat{q}_{\:p}^{\:B}(X)$ based on linearly interpolating $B$ histogram bin counts, with $B=1\times 10^4,\dots,5\times 10^4$ bins.

\begin{figure}[t]
\begin{center}
\includegraphics[width=1\columnwidth]{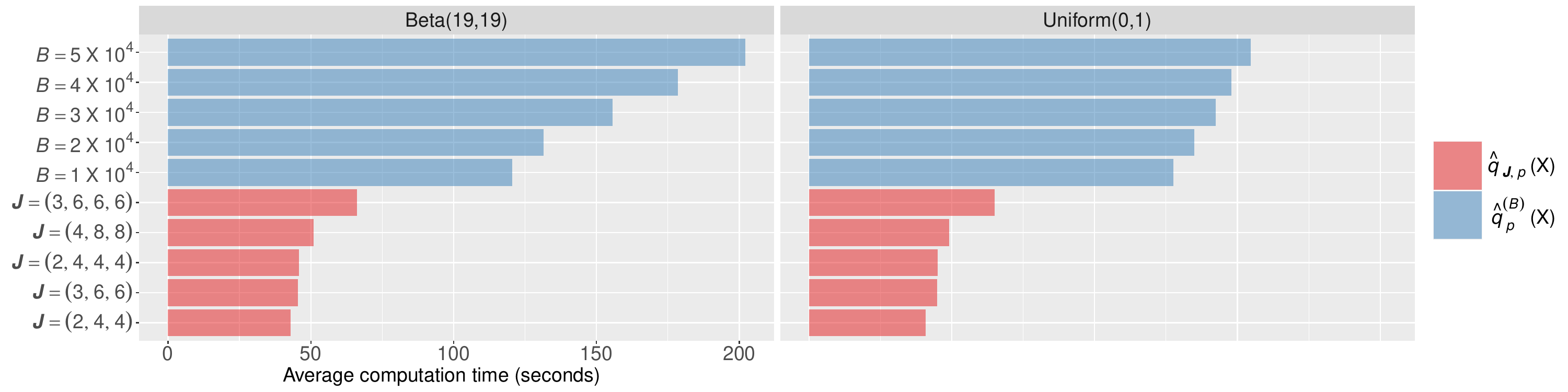}
\caption{\label{fig:runtime} Running times (averaged over 30 repetitions) for map-reduce implementations of quantile computations, based on order $J=\prod_{k=1}^K J_{\:k}$ parallel approximants ($\hat{q}_{\:\boldsymbol{J},\:p}(X)$, red) and on linearly interpolating $B$ histogram bin counts ($\hat{q}_{\:p}^{\:B}(X)$, blue), for near-Normal and uniform distributions, and with $|X| = 3 \times 10^{9}-1$ observations.}
\end{center}
\end{figure}

Figure~\ref{fig:runtime} compares the running times of these approaches on a $200$-core cluster capable of running Hadoop jobs, with each of $199$ cores assigned to run one process at a time. Simulated data were partitioned into $199$ approximately equi-sized blocks within HDFS, so that $198$ of these blocks contained $201$ subsets, with each subset having $7.5 \times 10^{4}$ observations, and the final block contained $202$ subsets. This procedure yields blocks of approximate size $115$ MB in HDFS, which is well within the recommended block size range for a Hadoop job.

It is immediately apparent from Fig.~\ref{fig:runtime} that for the largest value of $J=648$ considered here, $\hat{q}_{\:\boldsymbol{J},\:p}(X)$ is faster to calculate than $\hat{q}_{\:p}^{\:B}(X)$, even for the smallest value of $B = 1 \times 10^4$ considered here. Figure~\ref{fig:Beta_d_244} shows that, for the case of a $\operatorname{Beta}(19,19)$ density and the smallest value of $J$ considered here ($\boldsymbol{J}=(2,4,4)$, so that $J=2\times 4\times4 =32$), the median error in quantile computation is lower for $\hat{q}_{\:J,\:p}(X)$ than for any $\hat{q}_{\:p}^{\:B}(X)$, with $B=1,\dots,5\times 10^4$.

\begin{figure}[t]
\begin{center}
\includegraphics[width=\columnwidth]{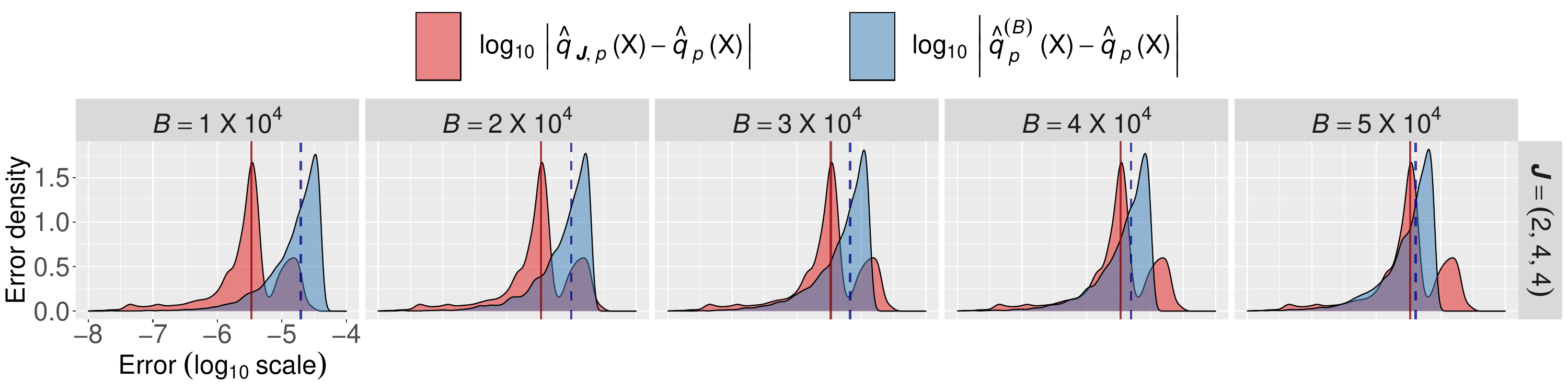}
\caption{\label{fig:Beta_d_244}Quantile computation errors for $\hat{q}_{\:(2,4,4),\:p}(X)$---the smallest value of $J$ considered---relative to $\hat{q}_{\:p}^{\:B}(X)$, for a $\operatorname{Beta(19,19)}$ density and $|X| = 3 \times 10^{9}-1$ observations. Increasing $B$ to achieve near-equal median error over all $p$ (vertical lines) requires over four times as much computation time as $\hat{q}_{\:(2,4,4),\:p}(X)$ does (cf.~Fig.~\ref{fig:runtime}).}
\end{center}
\end{figure}

\begin{figure}
\begin{center}
\includegraphics[width=1\columnwidth]{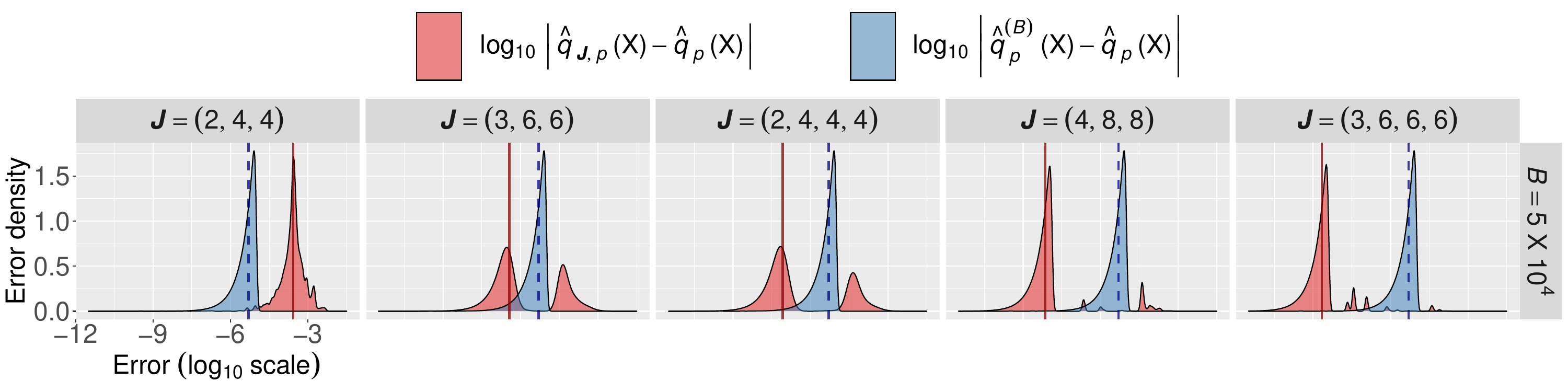}
\caption{\label{fig:uniform_b_5k} Quantile computation errors for $\hat{q}_{\:p}^{\:5\times 10^4}(X)$---the largest value of $B$ considered---relative to $\hat{q}_{\:\boldsymbol{J},\:p}(X)$,  with increasing values of $J$, for a $\operatorname{Uniform(0,1)}$ distribution and $|X| = 3 \times 10^{9}-1$ observations. Relative omputation times for $\hat{q}_{\:\boldsymbol{J},\:p}(X)$ remain uniformly lower (cf.~Fig.~\ref{fig:runtime}), and median errors quickly outperform $\hat{q}_{\:p}^{\:5\times 10^4}(X)$.}
\end{center}
\end{figure}

\begin{figure}
\begin{center}
\includegraphics[width=1\columnwidth]{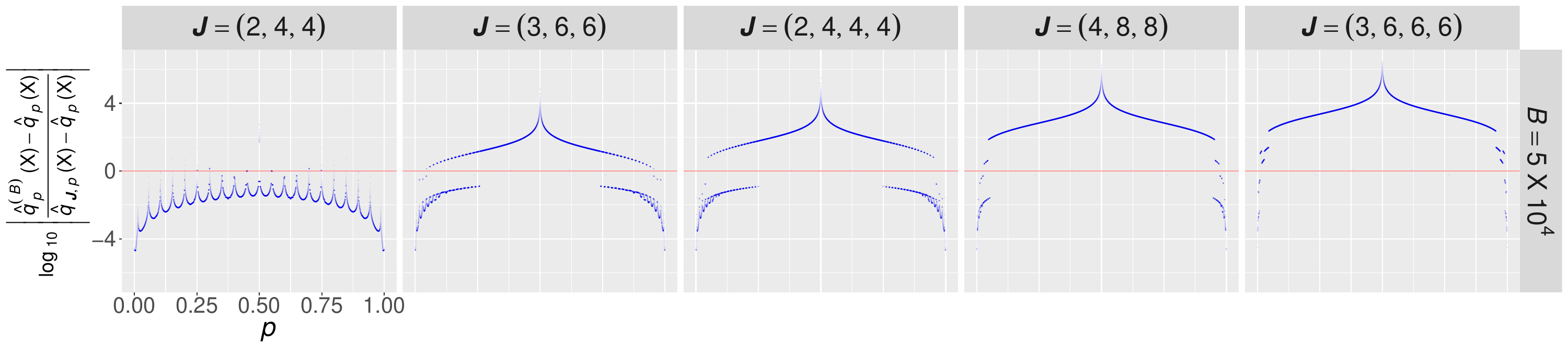}
\caption{\label{fig:uniform_xy_5k} Relative error of Fig.~\ref{fig:uniform_b_5k} as a function of $p$, showing (as $J$ increases) the effect of violating Assumption \ref{Quantile:assumption:uniform-convergence}.}
\end{center}
\end{figure}

Figure~\ref{fig:uniform_b_5k} shows a scenario similar to Fig.~\ref{fig:Beta_d_244} for a $\operatorname{Uniform(0,1)}$ distribution: superior relative performance at a lower computational cost. Finally, Fig.~\ref{fig:uniform_xy_5k} shows the implication of violating the smoothness condition of Assumption \ref{Quantile:assumption:uniform-convergence}: As discussed in section~\ref{sec:algo}, when $p$ is close to the boundary points $0$ and $1$, the guarantee of \emph{uniform} convergence is lost.

\section{Discussion}\label{sec:Disc}

In this article we have introduced the concept of embarrassingly parallel statistics---of both strong and weak type---in order to enable statistical scalability and approximate inference in distributed computing environments. By introducing appropriate approximations to functions which on parameters that play the role of estimands in inference, and then providing with guarantees on limiting behavior, we have demonstrated how to build scalable inference algorithms for very large data sets. We have provided two concrete examples, sample quantile approximation and local regression fitting, each of which comes with theoretical guarantees and admits straightforward implementation via programming models such as map-reduce.

\begin{appendices}

\section{Proof of Theorem \ref{eps:thm:min_suff}}\label{proof:eps:thm:min_suff}

Let the probability density function associated with $P_{(\theta)}(\cdot)$ be denoted $p_{(\theta)}(x)$. Also, let $\mathsf{T}(X)$ be a finite dimensional minimal sufficient statistic for $\theta$, if we assume that such a statistic exists (otherwise there is nothing to prove). Let $X_{\:[I]}=\big\{\:X_{\:1},\dots,X_{\:R}\:\big\}$ denote the collection of multi-subsets of $X$ for an arbitrary partition $[I]$. Since $\mathsf{T}(X)$ is sufficient, by Fisher--Neyman's factorization theorem, there exist non-negative functions $g_{(\theta)}\colon\Theta\to\mathbb{R}$ and $h\colon\mathbb{X}\to\mathbb{R}$, such that: $p_{(\theta)}(X_{\:r})=g_{(\theta)}\big(T(X_{\:r})\big)\cdot h(X_{\:r})$ for $r=1,\dots,R$. Now since the sub-samples $X_{\:1},\dots,X_{\:R}$ are mutually independent, we have:
\begin{align*}
p_{(\theta)}(X)&=\prod_{r=1}^Rp_{(\theta)}(X_{\:r})=\prod_{r=1}^Rg_{(\theta)}\big(\mathsf{T}(X_{\:r})\big)\cdot h(X_{\:r}) \\
&=\prod_{r=1}^Rg_{(\theta)}\big(T(X_{\:r})\big)\cdot \prod_{r=1}^Rh(X_{\:r}).
\end{align*}

Another application of the factorization theorem shows that the collection $\mathsf{T}(X_{\:[I]})$ is sufficient for $\theta$. From the definition of minimal sufficiency, it follows in turn that $\mathsf{T}(X)$ is a function of $\mathsf{T}\big(X_{\:[I]}\big)$. Thus there exists $\mathcal{F}_{\:[I]}$ such that $\mathsf{T}(X)=\mathcal{F}_{\:[I]}\big(\mathsf{T}(X_{\:[I]})\big)$, and hence in accordance with Definition \ref{eps:def:SEP_statistic}, $\mathsf{T}(X)$ is SEP as claimed.

\section{Proof of Proposition \ref{eps:prop:SOT_SEP_statistic}}\label{proof:eps:prop:SOT_SEP_statistic}

It follows directly from Definition \ref{eps:def:SEP_statistic} and Definition \ref{eps:def:WEP_statistic} that $\mathcal{C}_{\:\text{SEP}}(X)\subset\mathcal{C}_{\:\text{WEP}}(X)$ but that $\mathcal{C}_{\:\text{WEP}}(X)\not\subset\mathcal{C}_{\:\text{SEP}}(X)$, and thus $\mathcal{C}_{\:\text{SEP}}(X)$ is a proper subset of $\mathcal{C}_{\:\text{WEP}}(X)$ as claimed. 

To show that $\mathcal{C}_{\:\text{SOT}}(X)\subset\mathcal{C}_{\:\text{SEP}}(X)$, suppose $\mathsf{T}(X)$ is an SOT statistic, with associated transformation $\tau(x)$. Given any partition $[I]=\big\{\:I_{\:1},\dots,I_{\:R}\:\big\}$ of $I$, let $X_{\:[I]}=\big\{\:X_{\:1},\dots,X_{\:R\:}\big\}$ be the corresponding collection of multi-sets. Since $I$ is the disjoint union of $I_{\:1},\dots,I_{\:R}$, we have:
\begin{align*}
\mathsf{T}(X)& =\sum_{i\in I}\tau(x_{\:i})=\sum_{i\in\cup_{r=1}^RI_{\:r}}\tau(x_{\:i})=\sum_{r=1}^R\sum_{x_{\:i}\in X_{\:r}}\tau(x_{\:i}) \\
&=\sum_{r=1}^R\mathsf{T}(X_{\:r})=\mathcal{F}_{[I]}(\mathsf{T}(X_{[I]}),
\end{align*}
with $\mathcal{F}_{\:[I]}$ the summation operator. Hence in accordance with Definition \ref{eps:def:SEP_statistic}, $\mathsf{T}(X)$ is SEP, thereby establishing that $\mathcal{C}_{\:\text{SOT}}(X)\subset\mathcal{C}_{\:\text{SEP}}(X)$. 

To show that $\mathcal{C}_{\:\text{SEP}}(X)\not\subset\mathcal{C}_{\:\text{SOT}}(X)$, and hence that $\mathcal{C}_{\:\text{SOT}}(X)$ is a proper subset of $\mathcal{C}_{\:\text{SEP}}(X)$ as claimed, suppose elements of $X$ are i.i.d.\ observations of a random variable distributed uniformly over the interval $[0,\theta]$. Then $\mathsf{T}(X):=\max(X)$ is a minimal sufficient statistic for parameter $\theta$, and is furthermore of finite dimension. Hence by Theorem \ref{eps:thm:min_suff}, $\mathsf{T}(X)\in \mathcal{C}_{\:\text{SEP}}(X)$. However, since $\mathsf{T}(X)$ cannot be written in the form $\sum_{i\in I}\tau(x_{\:i})$, it follows from Definition \ref{eps:def:SOT_statistic} that $\mathsf{T}(X)\not\in \mathcal{C}_{\:\text{SOT}}(X)$.

\section{Proof of Proposition \ref{eps:prop:SOT_FAS_WEP}}\label{proof:eps:prop:SOT-FAS-WEP}

Let the SOT function $\mathsf{G}(X,\theta)$ be generated by $\Gamma(x,\theta)$, with  $\Gamma(x,\theta)=\sum_{j=1}^J\eta_{\:j \cdot }f_{\:j}(x) \cdot g_{\:j}(\theta)$. Thus
\begin{align*}
\mathsf{G}(X,\theta)=\sum_{i\in I}\Gamma(x_{\:i},\theta)=\sum_{i\in I}\sum_{j=0}^J\eta_{\:j}\cdot f_{\:j}(x_{\:i})\cdot g_{\:j}(\theta)=\sum_{j=0}^J\eta_{\:j}\cdot \mathsf{F}_{\:j}(X)\cdot g_{\:j}(\theta),
\end{align*}
where each $\mathsf{F}_{\:j}(X)$ is an SOT (and thus, by Proposition \ref{eps:prop:SOT_SEP_statistic}, an SEP) statistic with associated transformation $f_{\:j}(x)$. We see that for any fixed $\theta$, the SOT function $\mathsf{G}(X,\theta)$ depends on $X$ solely through the values of the SEP statistics $\mathsf{F}_{\:1}(X),\dots,\mathsf{F}_{\:J}(X)$. So, the set $\mathcal{S}_{\:\mathsf{G}}$ is characterized fully by the set $\big\{\mathsf{F}_{\:1}(X),\dots,\mathsf{F}_{\:J}(X)\big\}$, and thus any arbitrary operator on $\mathcal{S}_{\:\mathsf{G}}$ must be a function of these $J$ SEP statistics. Hence in accordance with Definition \ref{eps:def:WEP_statistic}, any such operator constitutes a WEP statistic as claimed.

\section{Proof of Theorem \ref{eps:thm:SOT_FAS_L2}}\label{proof:eps:thm:SOT_FAS_L2}

This result is a consequence of \citet[Theorem~4]{AM_1992_Simsa}, where the author derives a Hilbert--Schmidt-type decomposition $\Gamma_\infty$ of any function $\Gamma \in L^2(\mathbb{X}\times\Theta)$ by way of (in the language of Definition \ref{eps:def:FAS_func}) finitely additively separable functions, so that $\Gamma = \Gamma_\infty$ holds almost everywhere on $\mathbb{X}\times\Theta$. The number of non-zero coefficients $\eta_j$ in such a decomposition will in fact be precisely the number of non-zero eigenvalues of a certain positive semi-definite operator related to $\Gamma$, which will be finite if and only if $\Gamma$ lies in a weakly closed subset comprising a finite sum of products of uni-variate functions respectively in $L^2(\mathbb{X})$ and $L^2(\Theta)$. 

The extension from convergence in $L^2(\mathbb{X}\times\Theta)$ to uniform convergence proceeds as follows. First, with respect to the orthonormal systems $\{u_{\:j}\}_{j=1}^\infty$ and $\{v_{\:j}\}_{j=1}^\infty$, suppose there exist real numbers $M_{\:x} \geq \sup_{x\in\mathbb{X},j\in\mathbb{N}} |u_{\:j}(x)|$ and $M_{\:\theta} \geq \sup_{\theta\in\Theta,j\in\mathbb{N}} |v_{\:j}(\theta)|$. Then if furthermore $\Gamma(x,\theta) = \Gamma_\infty(x,\theta)$ holds for all $(x,\theta)\in\mathbb{X}\times\Theta$, we may write:
\begin{align*}
\lVert\Gamma-\Gamma_J\rVert_\infty 
&= \lVert\sum_{j=J+1}^{\infty} \! \eta_{\:j}u^*_{\:j}v_{\:j}\rVert_\infty \\
&\leq \sum_{j=J+1}^{\infty} \! \eta_{\:j} \sup_{x\in\mathbb{X}} |u_{\:j}(x)| \sup_{\theta\in\Theta} |v_{\:j}(\theta)| \leq M_{\:x}\cdot M_{\:\theta} \cdot \! \sum_{j=J+1}^{\infty} \! \eta_{\:j}.
\end{align*}
Finally, we see that if $\sum_{j=1}^\infty\eta_{\:j}<\infty$, then $\lVert \Gamma - \Gamma_J \rVert_\infty \overset{J\to\infty}{\longrightarrow} 0$ as claimed.

\section{Proof of Theorem \ref{Quantile:thm:_prop}}\label{proof:Quantile:thm:_prop}

Observe from the statement of Theorem \ref{Quantile:thm:_prop} that $\hat{q}_{\:J,\:p}(X)$ is defined to be the minimizer of a continuous function in the compact interval $[0,1]$, and is guaranteed to exist. However, due to the periodic nature of this function in $[0,1]$, it may fail to be unique. Indeed, Lemma \ref{Quantile:lemma:population-expression} shows that, $\hat{q}_{\:J,\:p}(X)$ almost surely $[P]$ lies in the interval $(0,1)$, and furthermore in Lemma \ref{Quantile:lemma:sample-asymptotic}, it is guaranteed to approach $\hat{q}_{\:J}(X)$, as $J \to \infty$.

We require several auxiliary results before we prove the main result in Theorem \ref{Quantile:thm:_prop}. Recall that for our given probability triple $\big(\Omega,\mathcal{F},P\big)$, we have the random variable $\tilde{x}:\Omega\to(0,1)$. We work within the setting of Theorem \ref{Quantile:thm:_prop}, such that for each $i\in I$, the $x_{\:i}$'s are i.i.d.\ observations of $\tilde{x}$. For an outcome $w\in\Omega$, we thus observe $x_{\:i}(w)$, with  $X(w,n)$ denoting the observed data $\big(x_{\:1}(w)$, $\dots$, $x_{\:n}(w)\big)$. Given the parameter $\theta\in\Theta$, with $\Theta=(0,1)$ being the parameter space, let $F^{\:(P)}(\theta)=P(\tilde{x}\leq \theta)$. If Assumption \ref{Quantile:assumption:positive-density} holds, then $F^{\:(P)}(\theta):=\int_0^\theta f^{\:(P)}(t)\,dt$ for $\theta\in(0,1)$, and observe that $F^{\:(P)}(\theta)$ has derivative $f^{\:(P)}(\theta)$.

First, we have the following three results:

\begin{lemma}\label{Quantile:lemma:inverse-continuity}
Suppose Assumption \ref{Quantile:assumption:positive-density} holds. Then: \\
(A)\ Given $\theta\in(0,1)$, let $\theta_{\:n}\in(0,1)$ for $n\in\mathbb{N}$, such that $\lim\limits_{n\to\infty}F^{\:(P)}(\theta_{\:n})\to F^{\:(P)}(\theta)$. Then $\lim\limits_{n\to\infty}\theta_{\:n}\to \theta$. \\
(B)\ There exists a unique solution $q_{\:p}^{\:(P)}\in(0,1)$ for $\theta$ in the expression $F^{\:(P)}(\theta)=p$.
\end{lemma}

We now use Lemma \ref{Quantile:lemma:inverse-continuity} to establish that $\hat{q}_{\:p}(X)$ converges almost surely to $q_{\:p}^{\:(P)}$ for any fixed $p\in(0,1)$:

\begin{lemma}\label{Quantile:lemma:exact-convergence}
Suppose Assumption \ref{Quantile:assumption:positive-density} holds. Then: \\
(A)\ Almost surely $[P]$, all elements of $X$ are distinct.\\
(B)\ For any fixed $p\in(0,1)$, $\lim_{|X|\to\infty}\hat{q}_{\:p}(X)\overset{a.s.}{=}q_{\:p}^{\:(P)}$.
\end{lemma}

Before getting into our next result, let us define the following sequence of functions with common domain $(-\pi,\pi)$:
\begin{align*}
\mathsf{1}_{\:J}(z):=\frac{1}{2}-\frac{2}{\pi}\cdot\sum_{j=1}^J\frac{\sin\big((2j-1)z\big)}{2j-1},\text{ for }z\in[-\pi,\pi]\text{ and } J\in\mathbb{N}.
\end{align*}
Observe that $\mathsf{1}_{\:J}(z)$ is the Fourier-series approximation to both of the indicator functions $\mathsf{1}(z<0)$ and $\mathsf{1}(z\leq 0)$. Recall from Example \ref{eps:eg:ind} that, we defined $\mathsf{1}_{\:J}(x,\theta)$ to be Fourier-series approximation to both $\mathsf{1}(x<\theta)$ and $\mathsf{1}(x\leq\theta)$, and $\mathsf{1}(z<0)$ can be identified as a primitive version if we note that: $\mathsf{1}_{\:J}(z)=\mathsf{1}_{\:J}(z,0)$. Now we have the following result:

\begin{lemma}\label{Quantile:lemma:indicator-uniform-convergence}
(A)\: The sequence of functions $\{ \mathsf{1}_{\:J}(z)\}_{J \in \mathbb{N}}$ converges uniformly to the limit $\mathsf{1}(z<0)$ or $\mathsf{1}(z\leq 0)$ in the interval $(-\pi,-\delta)\bigcup(\delta,\pi)$, for any $0<\delta<\pi$.

(B)\: The functions $\{ \mathsf{1}_{\:J}(z)\}_{J \in \mathbb{N}}$ are uniformly bounded in the interval $[-\pi,\pi]$.
\end{lemma}

Observe: $P(\tilde{x}\leq\theta)=\mathbb{E}_{\:P}\big(\mathsf{1}(\tilde{x}\leq\theta)\big)$; hence, by definition : $F^{\:(P)}(\theta)=\mathbb{E}_{\:P}\big(\mathsf{1}(\tilde{x}\leq\theta)\big)$. If we define the left-continuous version of the right-continuous function $F^{\:(P)}(\theta)$ as: $F^{\:(P)}(\theta^-):=P(\tilde{x}<\theta)$, the we also have: $F^{\:(P)}(\theta^-)=\mathbb{E}_{\:P}\big(\mathsf{1}(\tilde{x}<\theta)\big)$. 

Remember from section \ref{sec:Quantile:MapReduce}: $\mathsf{c}_{\:2j-1}(z)=\cos\big((2j-1)z\big)$ and $\mathsf{c}_{\:2j}(z)=\sin\big((2j-1)z\big)$, for $j\in\mathbb{N}$ and $z\in\mathbb{R}$. Then, we can write for $\tilde{x}\in(0,1)$ and $\theta\in(0,1)$:
\begin{align*}
\mathsf{1}_{\:J}(\tilde{x},\theta)=\frac{1}{2}-\frac{2}{\pi}\cdot \sum_{j=1}^J\frac{\sin\big((2j-1)\cdot (\tilde{x}-\theta)\big)}{2j-1}=\frac{1}{2}-\frac{2}{\pi}\cdot \sum_{j=1}^J\frac{\mathsf{c}_{\:2j}(\tilde{x}-\theta)}{2j-1}.
\end{align*}
Further note that: $\mathsf{c}_{\:2j}(\tilde{x}-\theta)=\mathsf{c}_{\:2j}(\tilde{x})\cdot \mathsf{c}_{\:2j-1}(\theta)-\mathsf{c}_{\:2j-1}(\tilde{x})\cdot \mathsf{c}_{\:2j}(\theta)$, hence we may write:
\begin{multline}\label{Quantile:proof:eq:b}
\mathbb{E}_{\:P}\big(\mathsf{1}_{\:J}(\tilde{x},\theta)\big) \\
=\frac{1}{2}-\frac{2}{\pi}\cdot\sum_{j=1}^J\left(\mathbb{E}_{\:P}\big(\mathsf{c}_{\:2j}(\tilde{x})\big)\cdot\frac{\mathsf{c}_{\:2j-1}(\theta)}{2j-1}-\mathbb{E}_{\:P}\big(\mathsf{c}_{\:2j-1}(\tilde{x})\big)\cdot\frac{\mathsf{c}_{\:2j}(\theta)}{2j-1}\right)\!.\!\!\!
\end{multline}

Given $J\in\mathbb{N}$ and $p\in(0,1)$, define $q_{\:J,\:p}^{\:(P)}\in(0,1)$ as a solution to $\mathbb{E}_{\:P}\big(\mathsf{1}_{\:J}(\tilde{x},\theta)\big)=p$ for $\theta$, provided such a solution exists. Let us then introduce the sequence of sets $\mathcal{Q}_{\:J}^{\:(P)}$ for $J\in\mathbb{N}$ as follows:
\begin{equation*}
    \mathcal{Q}_{\:J}^{\:(P)}:=\big\{p\colon\exists\ \theta\in\Theta\text{ such that }\mathbb{E}_{\:P}\big(\mathsf{1}_{\:J}(\tilde{x},\theta)\big)=p\big\}.
\end{equation*}
Note that if $q_{\:\alpha,\:J}^{\:(P)}$ exists, it need not be unique, since $\mathbb{E}_{\:P}\big(\mathsf{1}_{\:J}(\tilde{x},\theta)\big)$ is a weighted sum of periodic trigonometric functions in $\theta$, it is possible to have multiple solutions to $\mathbb{E}_{\:P}\big(\mathsf{1}_{\:J}(\tilde{x},\theta)\big)=p$. Therefore we shall define the sets:
\begin{equation*}
    \mathcal{G}_{\:J,\:p}^{\:(P)}:=\big\{\theta\in\Theta\colon\mathbb{E}_{\:P}\big(\mathsf{1}_{\:J}(\tilde{x},\theta)\big)=p\big\}.
\end{equation*}

Finally, for an arbitrary set $S\subset\mathbb{R}$, define $\bigtriangledown(S):=\max\big\{|a-b|\colon a\in S,b\in S\big\}$. 

We now show that under basic regulatory conditions, $q_{\:J,\:p}^{\:(P)}$ exists and converges to the limit $q_{\:p}^{\:(P)}$ when $J$ becomes large.

\begin{lemma}\label{Quantile:lemma:population-asymptotic}
Suppose Assumption \ref{Quantile:assumption:positive-density} holds. Then: \\
(A)\ $\lim_{J\to\infty}\mathbb{E}_{\:P}\:\big(\mathsf{1}_{\:J}(\tilde{x},\theta)\big)\to F^{\:(P)}(\theta)$ uniformly in $\theta\in(0,1)$. \\
(B)\ $\lim_{J\rightarrow\infty}\mathcal{Q}_{\:J}^{\:(P)}=(0,1)$. \\
(C)\ $\lim_{J\to\infty}q_{\:J,\:p}^{\:(P)}\to q_{\:p}^{\:(P)}$ and $\lim_{J\to\infty}\bigtriangledown\big(\mathcal{G}_{\:J,\:p}^{\:(P)}\big)\to 0$ for $p\in(0,1)$.
\end{lemma}

Similar to the argument preceding \citet[Definition 1.1]{AOS_1997_Koltchinskii}, a $p$th sample quantile $\hat{q}_{\:p}(X)$ can be identified as an optimizer to the optimization problem \citet{Book_2005_Koenker}: $\hat{q}_{\:p}(X)=\operatornamewithlimits{argmin}_{\theta\in(0,1)} \mathsf{G}_{\:p}(X,\theta)$, where $\mathsf{G}_{\:p}(X,\theta)=\sum_{i\in I}\rho_{\:p}(x_{\:i}-\theta)$ and $\rho_{\:p}(z)=\big(\nicefrac{1}{2}\big)\cdot|z|+\big(p-\nicefrac{1}{2}\big)\cdot z$ for $z\in\mathbb{R}$. Now from Example \ref{eps:eg:abs}, we know that when the data is scaled to $(0,1)$, the Fourier series of $|x_{\:i}-\theta|$ provides a convergent FAS approximation, and so we have the following expression for $x_{\:i}\in(0,1)$ and $\theta\in(0,1)$:
\begin{align}\label{Quantile:proof:eq:e}
\rho_{\:p}(x_{\:i}-\theta)&=\frac{1}{2}\cdot|x_{\:i}-\theta|+\big(p-\frac{1}{2}\big)\cdot(x_{\:i}-\theta) \\
&=\frac{\pi}{4}-\frac{2}{\pi}\cdot \sum_{j=1}^{\infty}\frac{\cos\big((2j-1)\cdot (x_{\:i}-\theta)\big)}{(2j-1)^2}+\big(p-\frac{1}{2}\big)\cdot(x_{\:i}-\theta). \nonumber
\end{align}
We denote the $J$th partial sum in the expression of $\rho_{\:p}(x_{\:i}-\theta)$ as $\rho_{\:J,\:p}(x_{\:i}-\theta)$ .From the statement of Theorem \ref{Quantile:thm:_prop} we have $\hat{q}_{\:J,\:p}(X)=\operatornamewithlimits{argmin}_{\theta\in(0,1)} \mathsf{G}_{\:J,\:p}(X,\theta)$, where $\mathsf{G}_{\:J,\:p}(X,\theta)=\sum_{i\in I}\rho_{\:J,\:p}(x_{\:i}-\theta)$. Recall that in section \ref{sec:Quantile:MapReduce}, we introduced the standardized function $\bar{\mathsf{G}}_{\:J,\:p}(X,\theta)=\nicefrac{1}{|X|}\cdot\mathsf{G}_{\:J,\:p}(X,\theta)$, for which it still holds that $\hat{q}_{\:J,\:p}(X)=\operatornamewithlimits{argmin}_{\theta\in[0,1]} \bar{\mathsf{G}}_{\:J,\:p}(X,\theta)$. We then have the following:

\begin{lemma}\label{Quantile:lemma:population-expression}
We have that \\
(A)\ \quad $\bar{\mathsf{G}}_{\:J,\:p}(X,\theta) =\frac{\pi}{4}-\frac{2}{\pi}\cdot \sum_{j=1}^J\frac{\bar{\mathsf{C}}_{\:2j-1}(X)\cdot \mathsf{c}_{\:2j-1}(\theta)+\bar{\mathsf{C}}_{\:2j}(X)\cdot \mathsf{c}_{\:2j}(\theta)}{(2j-1)^{2}}+\big(p-\frac{1}{2}\big)\cdot\big(\bar{X}-\theta\big)$ and \\
 \quad $\frac{\partial}{\partial\theta}\big(\bar{\mathsf{G}}_{\:J,\:p}(X,\theta)\big)=\hat{F}_{\:J}(X,\theta)-p\text{, where }\hat{F}_{\:J}(X,\theta)=\frac{1}{|X|}\cdot\big(\sum_{i\in I}\mathsf{1}_{\:J}(x_{\:i},\theta)\big)$. \\
 (B)\ \quad$\frac{1}{|X|}\cdot\big(\mathsf{G}_{\:p}(X,\theta)-\mathsf{G}_{\:J,\:p}(X,\theta)\big)=O(J^{-1})$ for arbitrary $X$ and $p\in(0,1)$. \\
 (C)\ \quad $\hat{q}_{\:J,\:p}(X)$ lies in the open interval $(0,1)$ for large $J$.
\end{lemma}

Part C of Lemma \ref{Quantile:lemma:population-expression} asserts that $\hat{q}_{\:J,\:p}(X)$ is a solution to the equation $\hat{F}_{\:J}(X,\theta)=p$ for $\theta\in(0,1)$ for sufficiently large $J$. We now characterize the different solution sets that are possible when considering the expression $\hat{F}_{\:J}(X,\theta)=p$. Fix $J\in\mathbb{N}$ and $p\in(0,1)$, and for a given data set $X$, we expand the definition of $\hat{q}_{\:J,\:p}(X)$ as a solution to the equation $\hat{F}_{\:J}(X,\theta)=p$ for $\theta\in(0,1)$, whenever such a solution exists. For $J\in\mathbb{N}$, define the set
\begin{align*}
\hat{\mathcal{Q}}_{\:J}(X):=\big\{p\colon\exists\ \theta\in(0,1)\text{ such that }\hat{F}_{\:J}(X,\theta)=p\big\},
\end{align*}
and for $p\in(0,1)$, $J\in\mathbb{N}$, $n\in\mathbb{N}$ and $w\in\Omega$ define the set:
\begin{align*}
\hat{\mathcal{G}}_{\:J,\:p,\:n}(w)=\Big\{\theta\in(0,1)\colon\hat{F}_{\:J}\big(X(w,n),\theta\big)=p\Big\}.
\end{align*}
Observe that for small $n$, it is possible that $\hat{\mathcal{G}}_{\:J,\:p,\:n}(w)=\emptyset$, meaning there is no solution to $\hat{F}_{\:J}\big(X(w,n),\theta\big)=\alpha$ for $\theta\in(0,1)$. It is also possible that there are multiple solutions to $\hat{F}_{\:J}\big(X(w,n),\theta\big)=p$. However, $\hat{\mathcal{G}}_{\:J,\:p,\:n}(w)$ is always a finite set, as the equation $\hat{F}_{\:J}\big(X(w,n),\theta\big)=p$ can't have infinitely many solutions for $\theta\in(0,1)$. From part A of Lemma \ref{Quantile:lemma:population-asymptotic}, we have 
\begin{align*}
\lim_{J\to\infty}\mathbb{E}_{\:P}\big(\mathsf{1}_{\:J}(\tilde{x},\theta)\big)=F^{\:(P)}(\theta)\text{ for }\theta\in(0,1).
\end{align*}
In perspective of \ref{Quantile:proof:eq:b}, we may then write
\begin{align}\label{Quantile:proof:eq:c}
\frac{1}{2}-\frac{2}{\pi}\cdot\sum_{j=1}^{\infty}\left(\mathbb{E}_{\:P}\big(\mathsf{c}_{\:2j}(\tilde{x})\big)\cdot\frac{\mathsf{c}_{\:2j-1}(\theta)}{2j-1}-\mathbb{E}_{\:P}\big(\mathsf{c}_{\:2j-1}(\tilde{x})\big)\cdot\frac{\mathsf{c}_{\:2j}(\theta)}{2j-1}\right)=F^{\:(P)}(\theta).
\end{align}

We also have that ${\mathsf{c}_{\:2j-1}}'(\theta)=-(2j-1)\cdot \mathsf{c}_{\:2j}(\theta)$ and
${\mathsf{c}_{\:2j}}'(\theta)=(2j-1)\cdot \mathsf{c}_{\:2j-1}(\theta)$, and so the derivative of the $j$th term of the left-hand side of equation \ref{Quantile:proof:eq:c} is $\nicefrac{2}{\pi}\Big(\mathbb{E}_{\:P}\big(\mathsf{c}_{\:2j-1}(\tilde{x})\big)\mathsf{c}_{\:2j-1}(\theta)+\mathbb{E}_{\:P}\big(\mathsf{c}_{\:2j}(\tilde{x})\big)\mathsf{c}_{\:2j}(\theta)\Big)$. The
corresponding partial in $J$ is precisely $\zeta_{\:J}^{\:(P)}(\theta)$ as defined in Assumption \ref{Quantile:assumption:uniform-convergence}:
\begin{align*}
\zeta_{\:J}^{\:(P)}(\theta)&=\mathbb{E}_{\:P}\left(\frac{1}{\pi}\cdot \frac{\sin\left(2J(\tilde{x}-\theta)\right)}{\sin(\tilde{x}-\theta)}\right)\\
&=\frac{2}{\pi}\sum_{j=1}^J\Big(\mathbb{E}_{\:P}\big(\mathsf{c}_{\:2j-1}(\tilde{x})\big)\mathsf{c}_{\:2j-1}(\theta)+\mathbb{E}_{\:P}\big(\mathsf{c}_{\:2j}(\tilde{x})\big)\mathsf{c}_{\:2j}(\theta)\Big).
\end{align*}
From \citet[Theorem 9.13]{Book_1957_Apostol}, we know if $\zeta_{\:J}^{\:(P)}(\theta)$ converges uniformly to a limit, then that limit will be the derivative of the right-hand side of equation \ref{Quantile:proof:eq:c}, which is $f^{\:(P)}(\theta)$. Therefore, for $J\in\mathbb{N}$, define $\rho_{\:J}^{\:(P)} \in \mathbb{R} \cup \infty $ as follows:
\begin{align*}
\rho_{\:J}^{\:(P)}:=\sup_{\theta\in(0,1)}\big|\:f^{\:(P)}(\theta)-\zeta_{\:J}^{\:(P)}(\theta)\:\big|.
\end{align*}
Below, to prove Theorem \ref{Quantile:thm:_prop}, we shall use Assumption \ref{Quantile:assumption:uniform-convergence} to force $\rho_{\:J}^{\:(P)}$ to zero in $J$. For the moment, however, we use $\rho_{\:J}^{\:(P)}$ simply to characterize limit points with respect to $\hat{\mathcal{G}}_{\:J,\:p}(w)$, defined for $p\in(0,1)$, $J\in\mathbb{N}$ and $w\in\Omega$ as the set
\begin{align*}
\hat{\mathcal{G}}_{\:J,\:p}(w):=\bigcup_{n\in\mathbb{N}}\hat{\mathcal{G}}_{\:J,\:p,\:n}(w).
\end{align*}

\begin{lemma}\label{Quantile:lemma:sample-asymptotic}
Suppose Assumption \ref{Quantile:assumption:positive-density} holds. Then: \\
(A)\ $\lim\limits_{|X|\to\infty}\sup\limits_{\theta\in\Theta}\big|\:\hat{F}_{\:J}(X,\theta)-\mathbb{E}_{\:P}\big(\mathsf{1}_{\:J}(\tilde{x},\theta)\big)\:\big|\overset{a.s.}{=} 0$. \\
(B)\ $\lim\limits_{|X|\rightarrow\infty}\lim\limits_{J\rightarrow\infty}\hat{\mathcal{Q}}_{\:J}(X)\overset{a.s.}{=}(0,1)$. \\
(C)\ Recalling that $q_{\:J,\:p}^{\:(P)}$ is any solution to $\mathbb{E}_{\:P}\big(\mathsf{1}_{\:J}(\tilde{x},\theta)\big)=p$, for $J$ sufficiently large, there exists a null set $N$ such that for $w\in N'$, if $\hat{q}_{\:J,\:p}\big(X(w,\cdot)\big)$ is a limit point of the set $\hat{\mathcal{G}}_{\:J,\:p}(w)$, then $\Big|\:F^{\:(P)}\Big(\hat{q}_{\:J,\:p}\big(X(w,\cdot)\big)\Big)-F^{\:(P)}\left(q_{\:J,\:p}^{\:(P)}\right)\:\Big|\leq\rho^{\:(P)}_{\:J}$.
\end{lemma}

\begin{proof}[Proof of Theorem \ref{Quantile:thm:_prop}]
Suppose Assumption \ref{Quantile:assumption:positive-density} and Assumption \ref{Quantile:assumption:uniform-convergence} hold. First of all, since Assumption \ref{Quantile:assumption:positive-density} is true, from part C of Lemma \ref{Quantile:lemma:population-asymptotic}, we know that $\lim\limits_{J\to\infty}q_{\:J,\:p}^{\:(P)}=q_{\:\alpha}^{\:(P)}$, and hence that $\lim\limits_{J\to\infty}F\big(q_{\:J,\:p}^{\:(P)}\big)=F\big(q_{\:p}^{\:(P)}\big)$. Secondly, since Assumption \ref{Quantile:assumption:uniform-convergence} is true, $\zeta_{\:J}^{\:(P)}(\theta)$ converges uniformly to $f^{\:(P)}(\theta)$ for $\theta\in(0,1)$, and so, for the set of real numbers $\{\rho_{\:J}^{\:(P)}\}_{J\in\mathbb{N}}$, we have $\lim\limits_{J\to\infty}\rho_{\:J}^{\:(P)}\to 0$. Now, fix $\epsilon>0$, and observe we can choose a $J_{\:1}$ large enough such that for $J>J_{\:1}$, both $\big|\:F\big(q_{\:J,\:p}^{\:(P)}\big)-F\big(q_{\:p}^{\:(P)}\big)\:\big|<\nicefrac{\epsilon}{2}$ and $\rho_{\:J}^{\:(P)}<\nicefrac{\epsilon}{2}$. 

Observe that $\limsup\limits_{n\to\infty}\hat{q}_{\:J,\:p}\big(X(w,n)\big)$ is a limit point of the set $\hat{\mathcal{G}}_{\:J,\:p}(w)$. Since Assumption \ref{Quantile:assumption:uniform-convergence} holds, from part C of Lemma \ref{Quantile:lemma:sample-asymptotic}, we know that we can choose a $J_{\:2}$ large enough such that for $J>J_{\:2}$, there exists some null set $N$ such that for $w\in N'$, we have the inequality:
\begin{align*}
\quad\left|\:F^{\:(P)}\Big(\limsup_{n\to\infty}\hat{q}_{\:J,\:p}\big(X(w,n)\big)\Big)-F^{\:(P)}\big(q_{\:J,\:p}^{\:(P)}\big)\:\right|\leq\rho^{\:(P)}_{\:J}.
\end{align*}
Then we have for any $J>\operatorname{max}(J_{\:1}, J_{\:2})$ that
\begin{align*}
&\hspace{-2em}\left|\:F^{\:(P)}\Big(\limsup_{n\to\infty}\hat{q}_{\:J,\:p}\big(X(w,n)\big)\Big)-F^{\:(P)}\big(q_{\:p}^{\:(P)}\big)\:\right| \\
&\leq\left|\:F^{\:(P)}\Big(\limsup_{n\to\infty}\hat{q}_{\:J,\:p}\big(X(w,n)\big)\Big)-F^{\:(P)}\big(q_{\:J,\:p}^{\:(P)}\big)\:\right| +\left|\:F\big(q_{\:J,\:p}^{\:(P)}\big)-F\big(q_{\:p}^{\:(P)}\big)\:\right| \\
&< \rho^{\:(P)}_{\:J} + \frac{\epsilon}{2},
\end{align*}
which is less than $\epsilon$ by the result of the preceding paragraph. Since $\epsilon$ is arbitrary, we conclude
\begin{align*}
\lim_{J\to\infty}F^{\:(P)}\Big(\limsup_{n\to\infty}\hat{q}_{\:J,\:p}\big(X(w,n)\big)\Big)=F^{\:(P)}\big(q_{\:p}^{\:(P)}(x)\big).
\end{align*}

Next, under Assumption \ref{Quantile:assumption:positive-density}, part A of Lemma \ref{Quantile:lemma:inverse-continuity} implies that
\begin{align*}
\lim_{J\to\infty}\limsup_{n\to\infty}\hat{q}_{\:J,\:p}\big(X(w,n)\big)=q_{\:p}^{\:(P)}.
\end{align*}
Since this is true for any $w\in N'$ relative to the null set $N$, we have that
\begin{align*}
\lim_{J\to\infty}\limsup_{|X|\to\infty}\hat{q}_{\:J,\:p}(X)\overset{a.s.}{=}q_{\:p}^{\:(P)}.
\end{align*}

Under Assumption \ref{Quantile:assumption:positive-density}, from part B of Lemma \ref{Quantile:lemma:exact-convergence}, we have the result that
\begin{align*}
\lim_{|X|\to\infty}\hat{q}_{\:p}(X)\overset{a.s.}{=}q_{\:p}^{\:(P)}.
\end{align*}
and so we must have that
\begin{align*}
\lim_{J\to\infty}\limsup_{|X|\to\infty}\hat{q}_{\:J,\:p}(X)\overset{a.s.}{=}\lim_{|X|\to\infty}\hat{q}_{\:p}(X).
\end{align*}

The result $\lim\limits_{J\to\infty}\liminf\limits_{|X|\to\infty}\hat{q}_{\:J,\:p}(X)\overset{a.s.}{=}\lim\limits_{|X|\to\infty}\hat{q}_{\:p}(X)$ follows analogously
\end{proof}

\section{Outline of Proof of Theorem \ref{Lowess:thm:prop}}\label{overview:proof:Lowess:thm:_prop}

Here we outline the proof of Theorem \ref{Lowess:thm:prop}, whose structure parallels that of the proof of Theorem \ref{Quantile:thm:_prop}. It and several auxiliary lemmata appear in full in the Supplementary Material.

Recall that the local regression setting of Theorem \ref{Lowess:thm:prop} assumes a probability triple $(\Omega,\mathcal{F},P)$ giving rise to a set of i.i.d.\ realizations $X = \{x_{\:i}, i \in I\}$ for the explanatory variable $\tilde{x}\colon\Omega\to(0,1)$. Let Assumption \ref{Lowess:assumption:positive-density} be in force throughout. 

Following the arguments of Lemma \ref{Quantile:lemma:inverse-continuity}, we show in Lemma \ref{Lowess:lemma:inverse_continuity} of the Supplementary Material that the inverse function of $F_{\:x}^{\:(P)}(h) := P(x-h\leq \tilde{x}\leq x+h)$ is continuous and that for fixed $\alpha\in(0,1)$, there exists a unique solution $h_{\:\alpha,\:x}^{\:(P)}\in(0,1)$ for $h$ to $F_{\:x}^{\:(P)}(h)=\alpha$. Recall its sample counterpart $\hat{h}_{\:\alpha,\;x}(X)$, defined in section \ref{sec:EPF-Lowess} with respect to $\hat{F}_{\:x}(X,h^{-}) = \big(\nicefrac{1}{|X|}\big)\cdot\sum_{i\in I}\mathsf{1}(x-h< x_{\:i}< x+h)$ and $\hat{F}_{\:x}(X,h) = \big(\nicefrac{1}{|X|}\big)\cdot\sum_{i\in I}\mathsf{1}(x-h\leq x_{\:i}\leq x+h)$ as
\begin{align*}
h=\hat{h}_{\:\alpha,\:x}(X) \Leftrightarrow h\in(0,1) \text{ and } \hat{F}_{\:x}(X,h^{-}) \leq \alpha \leq \hat{F}_{\:x}(X,h).
\end{align*}
Paralleling Lemma \ref{Quantile:lemma:exact-convergence}, we show in Lemma \ref{Lowess:lemma:exact_convergence} of the Supplementary Material first that $\hat{h}_{\:\alpha,\;x}(X)$ almost surely $[P]$ exists, and then that $\hat{h}_{\:\alpha,\;x}(X)$ almost surely $[P]$ converges to $h_{\:\alpha,\:x}^{\:(P)}$ as the sample size $|X|$ grows large.

Next, consider our $J$-term Fourier approximation $\hat{F}_{\:J,\:x}(X,h)=\big(\nicefrac{1}{|X|}\big)\cdot\sum_{i\in I}\mathsf{1}_{\:J,\:x}(x_{\:i},h)$. Recall from section \ref{sec:Quantile:MapReduce}
the notation $\mathsf{c}_{\:2j-1}(z)=\cos\big((2j-1)z\big)$ and $\mathsf{c}_{\:2j}(z)=\sin\big((2j-1)z\big)$, as well as $\mathsf{C}_{\:j}(X) =\sum_{i\in I}\mathsf{c}_{\:j}(x_{\:i})$ and $\bar{\mathsf{C}}_{\:j}(X)=\nicefrac{1}{|X|}\cdot\mathsf{C}_{\:j}(X)$. Lemma \ref{Lowess:lemma:delta_expression} of the Supplementary Material in turn yields the expressions
\begin{align}\label{Lowess:proof:eq:a}
\ \mathsf{1}_{\:J,\:x}(\tilde{x},h)=\frac{4}{\pi}\cdot \sum_{j=1}^J\big(\mathsf{c}_{\:2j-1}(\tilde{x}-x)\big)\frac{\mathsf{c}_{\:2j}(h)}{2j-1},
\end{align}
\begin{align}\label{Lowess:proof:eq:b}
\ \hat{F}_{\:J,\:x}(X,h)=\frac{4}{\pi}\cdot \sum_{j=1}^J\big(\bar{\mathsf{C}}_{\:2j-1}(X)\cdot\mathsf{c}_{\:2j-1}(x)+\bar{\mathsf{C}}_{\:2j}(X)\cdot\mathsf{c}_{\:2j}(x)\big)\cdot\frac{\mathsf{c}_{\:2j}(h)}{2j-1}.
\end{align}
Paralleling Lemma \ref{Quantile:lemma:population-asymptotic}, we show in Lemma \ref{Lowess:lemma:population_asymptotic} of the Supplementary Material via equation \ref{Lowess:proof:eq:a} that $\mathbb{E}_{\:P}\big(\mathsf{1}_{\:J,\:x}(\tilde{x},h)\big)$ converges uniformly to $F_{\:x}^{\:(P)}(h)$ as $J$ grows large. We also show that, given any $\alpha\in(0,1)$, for sufficiently large $J$, there exists at least one solution $h_{\:J,\:\alpha,\:x}^{\:(P)}$ to the equation $\mathbb{E}_{\:P}\big(\mathsf{1}_{\:J,\:x}(\tilde{x},h)\big)=\alpha$ for $h\in(0,1)$, and as $J\to\infty$, then $h_{\:J,\:\alpha,\:x}^{\:(P)}\to h_{\:\alpha,\:x}^{\:(P)}$.

Having established conditions under which $\hat{h}_{\:\alpha,\;x}(X) \to h_{\:\alpha,\:x}^{\:(P)}$ as $|X|$ grows large and $h_{\:J,\:\alpha,\:x}^{\:(P)}\to h_{\:\alpha,\:x}^{\:(P)}$ as $J$ grows large, we are now ready to treat the corresponding sample asymptotics jointly in $|X|$ and $J$. Paralleling Lemma \ref{Quantile:lemma:sample-asymptotic} and with the aid of equations \ref{Lowess:proof:eq:a} and \ref{Lowess:proof:eq:b}, we show in Lemma \ref{Lowess:lemma:sample_asymptotic} of the Supplementary Material that for any fixed $J$, $\big|\hat{F}_{\:J,\:x}(X,h)-\mathbb{E}_{\:P}\big(\mathsf{1}_{\:J,\:x}(\tilde{x},h)\big)\big|$ converges almost surely $[P]$ to $0$, uniformly for $\{x,h\}\in(0,1)$ as $|X|$ grows large. We also use equation \ref{Lowess:proof:eq:b} to show that, given any $\alpha\in(0,1)$, for sufficiently large $|X|$ and $J$, almost surely $[P]$ there exists at least one solution $\hat{h}_{\:J,\:\alpha,\:x}(X)$ to the equation $\hat{F}_{\:J,\:x}(X,h)=\alpha$ for $h\in(0,1)$. 

Now recall, with respect to our underlying probability triple $(\Omega,\mathcal{F},P)$, that for an outcome $w\in\Omega$, we observe $x_{\:i}(w)$, with  $X(w,n)$ denoting the observed data $\big(x_{\:1}(w)$, $\dots$, $x_{\:n}(w)\big)$. In the final portion of Lemma \ref{Lowess:lemma:sample_asymptotic} of the Supplementary Material, we establish that for $w\in\Omega$, if $\hat{h}_{\:J,\:\alpha,\:x}\big(X(w,\cdot)\big)$ is a limit point of the set $\{\hat{h}_{\:J,\:\alpha,\:x}\big(X(w,n)\big):n\in\mathbb{N}\}$, then the difference $\Big|F_{\:x}^{\:(P)}\big(h_{\:J,\:\alpha,\:x}^{\:(P)}\big)-F_{\:x}^{\:(P)}\Big(\hat{h}_{\:J,\:\alpha,\:x}\big(X(w,\cdot)\big)\Big)\Big|$ is bounded for a given $J$ and $x$. In this setting $\{\hat{h}_{\:J,\:\alpha,\:x}\big(X(w,n)\big):n\in\mathbb{N}\}$ must be treated as a tail set, such that at least one element $\hat{h}_{\:J,\:\alpha,\:x}\big(X(w,n)\big)$ exists when $|X|$ and  $J$ are sufficiently large.

By continuity of the inverse function of $F_{\:x}^{\:(P)}(h)$, it follows that squeezing this bound to zero as $J$ grows large will lead to the result of Theorem \ref{Lowess:thm:prop}. To do so, we first use equation \ref{Lowess:proof:eq:a} to establish that differentiating $\mathbb{E}_{\:P}\big(\mathsf{1}_{\:J,\:x}(\tilde{x},h)\big)$ with respect to $h$ yields $\eta_{\:J,\:x}^{\:(P)}(h)$ as defined in Assumption \ref{Lowess:assumption:uniform-convergence}. Recalling that $\mathbb{E}_{\:P}\big(\mathsf{1}_{\:J,\:x}(\tilde{x},h)\big) \to F_{\:x}^{\:(P)}(h)$ uniformly in $h$, it follows from \citet[Theorem 9.13]{Book_1957_Apostol} that if $\eta_{\:J,\:x}^{\:(P)}(h)$ converges uniformly to any function as $J$ grows large, then this function must be the derivative $f_{\:x}^{\:(P)}(h)$ of $F_{\:x}^{\:(P)}(h)$. The distance between $\eta_{\:J,\:x}^{\:(P)}(h)$ and $f_{\:x}^{\:(P)}(h)$ is in fact precisely the bound that we need to control as $J$ grows large, thereby motivating Assumption \ref{Lowess:assumption:uniform-convergence} and completing the proof outline.

\end{appendices}

\newpage

\begin{center}
{\LARGE Supplementary Material}
\end{center}
\vspace{4em}%

\section{Proof of the lemmata used in Appendix \ref{proof:eps:prop:SOT-FAS-WEP}}

\subsection{Proof of Lemma \ref{Quantile:lemma:inverse-continuity}}
\begin{proof}
(A)\ We have $F^{\:(P)}(\theta)=\int_0^\theta f^{\:(P)}(t)\,dt$, since by Assumption \ref{Quantile:assumption:positive-density}, $f>0\:[P]$ in $(0,1)$, and so we must have $F^{\:(P)}$ continuous and strictly monotone in $(0,1)$. Then part A follows from continuity of the inverse of a continuous function.

(B)\ Since $F^{\:(P)}(\theta)=P(x\leq\theta)$ and $0<\tilde{x}<1$, we have $F^{\:(P)}(0)=0$ and $F^{\:(P)}(1)=1$. Given $p\in(0,1)$, since $F^{\:(P)}$ is continuous, then by the intermediate-value theorem there exists at least one value $q^{\:(P)}_{\:p}\in(0,1)$, such that $F\big(q^{\:(P)}_{\:p}\big)=p$. To show unicity of $q^{\:(P)}_{\:p}$, assume to the contrary that there exist two distinct values $0<q^{\:(P)}_{\:p,\:1}<q^{\:(P)}_{\:p,\:2}<1$ satisfying $F^{\:(P)}(\theta)=p$ for $\theta$. Then we have
\begin{align*}
F\left(q^{\:(P)}_{\:p,\:1}\right)-F\left(q^{\:(P)}_{\:p,\:2}\right)=p-p=0\:\Leftrightarrow\:\int_{q^{\:(P)}_{\:p,\:1}}^{q^{\:(P)}_{\:p,\:2}}f(t)\,dt=0,
\end{align*}
which contradicts the assumption that $f^{\:(P)}>0\:[P]$.
\end{proof}

\subsection{Proof of Lemma \ref{Quantile:lemma:exact-convergence}}
\begin{proof}
(A)\ For $\{i,i'\}\in\mathbb{N}$ and $i\neq i'$, let us define sets:
\begin{align*}
\psi_{\:i,\:i'}:=\{w\in\Omega\colon x_{\:i}(w)=x_{\:i'}(w)\};\:\psi:=\bigcup_{1\leq i<i'<\infty}\psi_{\:i,\:i'}.
\end{align*}
Note that for $\{i,i'\}\in\mathbb{N}$ and $i\neq i'$, we have $P(\psi_{\:i,\:i'})=0$, by Assumption \ref{Quantile:assumption:positive-density}, $P$ admits a density and hence the random variable $z=x_{\:i}-x_{\:i'}$ cannot have mass at $0$. Clearly $\psi\subset\Omega$ and we have:
\begin{align*}
P(\psi)=P\bigg(\bigcup_{1\leq i<i'<\infty}\psi_{\:i,\:i'}\bigg)\leq\sum_{1\leq i<i'<\infty}P(\psi_{\:i,\:i'})=0.
\end{align*}
So, $\psi$ is a null set$\:[P]$. Now, if $w\notin\psi$ and $X(w,n)=\big(x_{\:1}(w),\dots,x_{\:n}(w)\big)$, then $X$ has all distinct elements for any $n\in\mathbb{N}$. Thus we conclude the result.

(B)\ Let $\epsilon>0$ be given. By the Glivenko--Cantelli theorem, we have the result that  $\sup_{\theta\in(0,1)}\big|\:\hat{F}(X,\theta)-F^{\:(P)}(\theta)\:\big|\overset{a.s.}{=}0$. In particular for $\theta=\hat{q}_{\:P}(X)$, we also have $\big|\:\hat{F}\big(X,\hat{q}_{\:P}(X)\big)-F^{\:(P)}\big(\hat{q}_{\:P}(X)\big)\:\big|\overset{a.s.}{=}0$. Hence, there exists a null set $N_{\:1}\subset\Omega$, such that for each $w\in{N_{\:1}}'$, there exists an $n_{\:w,\:1}$ such that for $n>n_{\:w,\:1}$ we have that $\big|\:\hat{F}\big(X(w,n),\hat{q}_{\:P}\big(X(w,n)\big)\big)-F^{\:(P)}\big(\hat{q}_{\:P}\big(X(w,n)\big)\big)\:\big|<\nicefrac{\epsilon}{2}$.

Now, from the definition of a quantile, we know that $\hat{q}_{\:P}(X)$ satisfies the expression $\hat{F}\big(X,\hat{q}_{\:P}(X)^-\big)\leq p\leq \hat{F}\big(X,\hat{q}_{\:P}(X)\big)$.
Observe that if the elements of $X$ are distinct, then for any $\theta\in(0,1)$, we have $\big|\hat{F}(X,\theta^-)-\hat{F}(X,\theta)\big|\leq\nicefrac{1}{|X|}$. We can take $\theta=\hat{q}_{\:P}(X)$ and consequently, if elements of $X$ are distinct, we have
$\big|\hat{F}\big(X,\hat{q}_{\:P}(X)\big)-p\big|\leq\nicefrac{1}{|X|}$. So, from part A we can conclude that there exists a null set $N_{\:2}\subset\Omega$, such that for each $w\in{N_{\:2}}'$ and any $n\in\mathbb{N}$, we have $\big|\hat{F}\big(X(w,n),\hat{q}_{\:P}\big(X(w,n)\big)\big)-p\big|\leq\nicefrac{1}{n}$. Clearly $N:=N_{\:1}\cup N_{\:2}$ is such a null set, and so if $w\in{N}'$, then by taking $n_{\:w}:=\operatorname{max}(n_{\:w,\:1},\nicefrac{2}{\epsilon})$ we have:
\begin{align*}
&\hspace{-2em}\Big|\:F^{\:(P)}\Big(\hat{q}_{\:P}\big(X(w,n)\big)\Big)-p\:\Big| \\
&\leq\Big|\:\hat{F}\Big(X(w,n),\hat{q}_{\:P}\big(X(w,n)\big)\Big)-F^{\:(P)}\Big(\hat{q}_{\:P}\big(X(w,n)\big)\Big)\:\Big| \\
&\hspace{2em}+\Big|\:\hat{F}\Big(X(w,n),\hat{q}_{\:P}\big(X(w,n)\big)\Big)-p\:\Big| \\
&<\nicefrac{\epsilon}{2}+\nicefrac{1}{n},
\end{align*}
with this sum being less than or equal to $\epsilon$ for $n$ sufficiently large. Now, from part B of Lemma \ref{Quantile:lemma:inverse-continuity}, we have a unique solution $q_{\:p}^{\:(P)}\in(0,1)$, to $F^{\:(P)}(\theta)=p$ for $\theta$. Then from the above inequality, for the null set $N\in\Omega$, if $w\in{N}'$ and $n>n_{\:w}$, we have
$\big|\:F^{\:(P)}\big(\hat{q}_{\:P}\big(X(w,n)\big)\big)-F^{\:(P)}\big(q_{\:p}^{\:(P)}\big)\big|<\epsilon$. Since $\epsilon$ is arbitrary, we must have $\lim\limits_{n\to\infty}F^{\:(P)}\big(\hat{q}_{\:P}\big(X(w,n)\big)\big)=F^{\:(P)}\big(q_{\:p}^{\:(P)}\big)$. From part A of Lemma \ref{Quantile:lemma:inverse-continuity}, we conclude 
$\lim\limits_{n\to\infty}\hat{q}_{\:P}\big(X(w,n)\big)=q_{\:p}^{\:(P)}$. This is true for any $w\in{N}'$, since $N\in\Omega$ is a null set, and hence we have proved the statement $\lim\limits_{|X|\to\infty}\hat{q}_{\:p}(X)\overset{a.s.}{=}q_{\:p}^{\:(P)}$.
\end{proof}

\subsection{Proof of Lemma \ref{Quantile:lemma:indicator-uniform-convergence}}
\begin{proof}
(A)~ First observe, for $z\in(-\pi,-\delta)\:\bigcup\:(\delta,\pi)$, we have: $\mathsf{1}(z<0)=\mathsf{1}(z\leq 0)$. Consider the sum:
\begin{align*}
E_{\:J}(z):=\sum_{j=1}^J\sin\big((2j-1)\cdot z\big)=\frac{\big(1-\cos(2Jz)\big)}{2\sin(z)}.
\end{align*}
For $z\in(-\pi,-\delta)\:\bigcup\:(\delta,\pi)$ and $J\in\mathbb{N}$, we have:
\begin{align*}
|E_{\:J}(z)|=\frac{|1-\cos(2Jz)|}{2|\sin(z)|}\leq\frac{1}{|\sin(z)|}\leq\frac{1}{|\sin(\delta)|}
\end{align*}
Now consider the Cauchy tail sum $e_{\:J,\:J'}(z)=\mathsf{1}_{\:J'}(z)-\mathsf{1}_{\:J}(z)$ for $J<J'$.
We have:
\begin{align*}
e_{\:J,\:J'}(z) &=\frac{2}{\pi}\cdot \sum_{j=J+1}^{J'}\frac{\sin\big((2j-1)\cdot z\big)}{2j-1}\\
&=\frac{2}{\pi}\cdot \sum_{j=J+1}^{J'}\frac{E_{\:j}(z)-E_{\:j-1}(z)}{2j-1} \\
&=\frac{2}{\pi}\cdot \left(\sum_{j=J+1}^{J'}\frac{E_{\:j}(z)}{2j-1}-\sum_{j=J+1}^{J'}\frac{E_{\:j}(z)}{2j+1}-\frac{E_{\:J}(z)}{2J+1}+\frac{E_{\:J'}(z)}{2J'+1}\right) \\
&=\frac{2}{\pi}\cdot \left(\sum_{j=J+1}^{J'}E_{\:j}(z)\cdot \Big(\frac{1}{2j-1}-\frac{1}{2j+1}\Big)-\frac{E_{\:J}(z)}{2J+1}+\frac{E_{\:J'}(z)}{2J'+1}\right).
\end{align*}
Thus we may obtain the following upper bound:
\begin{align*}
|e_{\:J,\:J'}(z)| &\leq\frac{2}{\pi|\sin(\delta)|}\cdot \left(\sum_{j=J+1}^{J'}\left(\frac{1}{2j-1}-\frac{1}{2j+1}\right)+\frac{1}{2J+1}+\frac{1}{2J'+1}\right) \\
&=\frac{2}{\pi|\sin(\delta)|}\left(\frac{1}{2J+1}-\frac{1}{2J'+1}+\frac{1}{2J+1}+\frac{1}{2J'+1}\right) \\
&=\frac{4}{(2J+1)\pi|\sin(\delta)|}.
\end{align*}
Given $\epsilon>0$, it follows that $|e_{\:J,\:J'}(z)|$ is in turn upper-bounded by $\epsilon$ eventually in $J$ and $J'>J$, for all $z\in(-\pi,-\delta)\:\cup\:(\delta,\pi)$. Thus by the Cauchy criterion, $\mathsf{1}_{\:J}(z)$ uniformly converges to its limit $\mathsf{1}(z<0)$ or $\mathsf{1}(z\leq 0)$.

(B) Since $\mathsf{1}_{\:J}(z)$ is an odd function, we have $|\mathsf{1}_{\:J}(-z)|=|\mathsf{1}_{\:J}(z)|$, and so it is sufficient to show that $\mathsf{1}_{\:J}(z)$ is uniformly bounded in the interval $(0,\pi)$. Now, for $z\in(0,\pi)$, we have the inequality:
\begin{align*}
\frac{2}{\pi}\leq\frac{\sin(z)}{z}\leq 1.
\end{align*}
The left-hand inequality comes from the fact that the function $f(z)=\nicefrac{\sin(z)}{z}$ attains a maximum at $z=\nicefrac{\pi}{2}$, where its value is $\nicefrac{2}{\pi}$, and the right-hand inequality is a standard trigonometric result that holds for any $z\in(0,\pi)$. Let $j_{\:0}$ be the largest integer such that $(2j_{\:0}-1)<\nicefrac{1}{z}$. Then $z<\nicefrac{1}{(2j_{\:0}-1)}$ and $z\geq\nicefrac{1}{(2j_{\:0}+1)}$. Observe that if $j\leq j_{\:0}$, then $(2j-1)\cdot z\leq(2j_{\:0}-1)\cdot z<1<\pi$, so that $\nicefrac{\sin\big((2j-1)\cdot z\big)}{\big((2j-1)\cdot z\big)}\leq 1$ or $\nicefrac{\sin\big((2j-1)\cdot z\big)}{(2j-1)}\leq z$. On the other hand, if $j>j_{\:0}$ then $\nicefrac{1}{(2j-1)}\leq\nicefrac{1}{(2j_{\:0}+1)}$. Now, we have:
\begin{align*}
\mathsf{1}_{\:J}(z)&=\frac{1}{2}+\frac{2}{\pi}\cdot \sum_{j=1}^J\frac{\sin\big((2j-1)\cdot z\big)}{2j-1} \\
&=\frac{1}{2}+\frac{2}{\pi}\cdot \left(\sum_{j=1}^{j_{\:0}}\frac{\sin\big((2j-1)\cdot z\big)}{2j-1}+\sum_{j=j_{\:0}+1}^J\frac{\sin\big((2j-1)\cdot z\big)}{2j-1}\right).
\end{align*}
Then, we have:
\begin{align*}
\left|\mathsf{1}_{\:J}(z)\right| &\leq\frac{1}{2}+\frac{2}{\pi}\cdot \left(|z|\sum_{j=1}^{j_{\:0}}\left|\frac{\sin\big((2j-1)\cdot z\big)}{(2j-1)\cdot z}\right|+\left|\sum_{j=j_{\:0}+1}^J\frac{\sin\big((2j-1)\cdot z\big)}{2j-1}\right|\right) \\
&<\frac{1}{2}+\frac{2}{\pi}\left(z\sum_{j=1}^{j_{\:0}}1+\frac{1}{(2j_{\:0}+1)}\left|\sum_{j=j_{\:0}+1}^{J}\sin\big((2j-1)\cdot z\big)\right|\right) \\
&=\frac{1}{2}+\frac{2\cdot j_{\:0}\cdot z}{\pi}+\frac{2}{\pi\cdot (2\cdot j_{\:0}+1)}\left|\sum_{j=j_{\:0}+1}^J
\sin\big((2j-1)\cdot z\big)\right| \\
&<\frac{1}{2}+\frac{2j_{\:0}}{\pi\cdot (2j_{\:0}-1)}+\frac{2z}{\pi}\cdot \left|E_{\:J}(z)-E_{\:j_{\:0}}(z)\right|\\
&<\frac{1}{2}+\frac{1}{\pi}+\frac{2z}{\pi}.\frac{2}{\sin(\frac{z}{2})} \\
&=\frac{1}{2}+\frac{1}{\pi}+\frac{4}{\pi}.\frac{2}{\frac{\sin(\frac{z}{2})}{\frac{z}{2}}}\\
&\leq\frac{1}{2}+\frac{1}{\pi}+\frac{4}{\pi}.\frac{2}{\frac{2}{\pi}}.
\end{align*}
So the sequence of functions $\mathsf{1}_{\:J}(z)$ is uniformly bounded by $M=9/2+1/\pi$.
\end{proof}

\subsection{Proof of Lemma \ref{Quantile:lemma:population-asymptotic}}
\begin{proof}
(A)\ Given $\epsilon>0$, fix an arbitrary $\theta\in(0,1)$. Since $\mathsf{1}_{\:J}(z)$ converges pointwise to $\mathsf{1}(0\leq z)$ for $z\in(-\pi,\pi)\setminus\{0\}$, we can conclude that $\mathsf{1}_{\:J}(\tilde{x},\theta)$ converges pointwise to $\mathsf{1}(\tilde{x}\leq\theta)$ for $(0,1)\setminus\{\theta\}$. 

Now, from part A of Lemma \ref{Quantile:lemma:indicator-uniform-convergence}, we know $\mathsf{1}_{\:J}(z)$ converges uniformly to $\mathsf{1}(z\leq 0)$ for $z\in(\-\pi,\pi)\setminus(-\delta,\delta)$ for any $\delta\in(0,\pi)$. So, there exists $J_{\:\epsilon,\:\delta}\in\mathbb{N}$ such that $\big|\mathsf{1}_{\:J}(z)-\mathsf{1}(z\leq 0)\big|\leq\nicefrac{\epsilon}{2}$ if $z\in(\-\pi,\pi)\setminus(-\delta,\delta)$ and $J>J_{\:\epsilon,\:\delta}$. Let us define the set 
\begin{align*}
    R_{\:\delta,\:\theta}=(0,1)\setminus(\theta-\delta,\theta+\delta).
\end{align*}
If $\tilde{x}\in R_{\:\delta,\:\theta}$, then $\tilde{x}-\theta\in(\-\pi,\pi)\setminus(-\delta,\delta)$, and in turn, we will have $\left|\mathsf{1}_{\:J}(\tilde{x},\theta)-\mathsf{1}(\tilde{x}\leq\theta)\right|\leq\nicefrac{\epsilon}{2}$ for $J>J_{\:\epsilon,\:\delta}$. 

Observe that the Lebesgue measure of $(0,1)\setminus R_{\:\delta,\:\theta}$ satisfies $\lambda\big((0,1)\setminus R_{\:\delta,\:\theta}\big)\leq 2\delta$, for any $\theta\in(0,1)$. Also, from part B of Lemma \ref{Quantile:lemma:indicator-uniform-convergence}, we realize that 
\begin{align*}
    \big|\:\mathsf{1}_{\:J}(\tilde{x},\theta)\:\big|\leq (9/2+1/\pi).
\end{align*}
Then, for $\tilde{x}\in(0,1)$, we have
\begin{align*}
\big|\:\mathsf{1}(\tilde{x},\theta)-\mathsf{1}_{\:J}(\tilde{x},\theta)\:\big|\leq\big|\:\mathsf{1}(\tilde{x},\theta)\:\big|+\big|\:\mathsf{1}_{\:J}(\tilde{x},\theta)\:\big|\leq 1+\:(9/2+1/\pi).
\end{align*}
For convenience, let us denote this bound by $M_{\:0}$. 

Since by Assumption \ref{Quantile:assumption:positive-density}, the probability measure $P(\cdot)$ is absolutely continuous with respect to Lebesgue measure $\lambda(\cdot)$, there exists a $\delta_{\:\epsilon}>0$ such that if $\lambda(A)<\delta_{\:\epsilon}$ for some $A\in\mathbb{B}(0,1)$, we have $P(A)<\nicefrac{\epsilon}{(2\cdot M_{\:0})}$. Take $A=R_{\:\nicefrac{\delta_{\:\epsilon}}{2},\:\theta}$; then for $J>J_{\:\epsilon,\:\nicefrac{\delta_{\:\epsilon}}{2}}$, we have:
\begin{align*}
&\hspace{-1em}\Big|\:\mathbb{E}_{\:P}\big(\mathsf{1}_{\:J}(\tilde{x},\theta)\big)-F^{\:(P)}(\theta)\:\Big| \\
&=\Big|\:\mathbb{E}_{\:P}\big(\mathsf{1}_{\:J}(\tilde{x},\theta)\big)-\mathbb{E}_{\:P}\big(\mathsf{1}(\tilde{x},\theta)\big)\:\Big| \\
&=\Big|\:\mathbb{E}_{\:P}\big(\mathsf{1}(\tilde{x},\theta)-\mathsf{1}_{\:J}(\tilde{x},\theta)\big)\:\Big| \\
&\leq\mathbb{E}_{\:P}\Big(\big|\:\mathsf{1}(\tilde{x},\theta)-\mathsf{1}_{\:J}(\tilde{x},\theta)\:\big|\Big) \\
&=\int_{\tilde{x}\in(0,1)}\big|\:\mathsf{1}(\tilde{x},\theta)-\mathsf{1}_{\:J}(\tilde{x},\theta)\:\big|\,dP \\
&=\int_{\tilde{x}\in R_{\:\nicefrac{\delta_{\:\epsilon}}{2},\:\theta}}\big|\:\mathsf{1}(\tilde{x},\theta)-\mathsf{1}_{\:J}(\tilde{x},\theta)\:\big|\,dP+\int_{\tilde{x}\in(0,1)\setminus R_{\:\nicefrac{\delta_{\:\epsilon}}{2},\:\theta}}\big|\:\mathsf{1}(\tilde{x},\theta)-\mathsf{1}_{\:J}(\tilde{x},\theta)\:\big|\,dP \\
&\leq\int_{\tilde{x}\in R_{\:\nicefrac{\delta_{\:\epsilon}}{2},\:\theta}}\frac{\epsilon}{2}\cdot dP+\int_{\tilde{x}\in(0,1)\setminus R_{\:\nicefrac{\delta_{\:\epsilon}}{2},\:\theta}}M_{\:0}\cdot dP \\
&=\frac{\epsilon}{2}\cdot P\left(R_{\:\nicefrac{\delta_{\:\epsilon}}{2},\:\theta}\right)+M_{\:0}\cdot P\left((0,1)\setminus R_{\:\nicefrac{\delta_{\:\epsilon}}{2},\:\theta}\right) \\
&<\frac{\epsilon}{2}\cdot 1+M_{\:0}\cdot \frac{\epsilon}{2\cdot M_{\:0}}=\epsilon.
\end{align*}
Note that $J_{\:\epsilon,\:\nicefrac{\delta_{\:\epsilon}}{2}}$ is independent of $\theta$. Since $\epsilon$ is an arbitrary positive number, we must have: $\lim\limits_{J\to\infty}\mathbb{E}_{\:P}\:\big(\mathsf{1}_{\:J}(\tilde{x},\theta)\big)\to F^{\:(P)}(\theta)$ uniformly for $\theta\in(0,1)$. \\

(B)\ Consider an arbitrary $p\in(0,1)$. We can pick an $\epsilon_{\:p}>0$ such that $\epsilon_{\:p}<\min(p,1-p)$. For the random variable $\tilde{x}$, we have $0<\tilde{x}< 1$; that is, $\mathsf{1}(\tilde{x}\leq 0)=0$ and $\mathsf{1}(\tilde{x}\leq 1)=1$, hence, $F^{\:P}(0)=0$ and $F^{\:P}(1)=1$. By part A, there exists $J_{\:0}\in\mathbb{N}$ such that, for $J>J_{\:0}$, we have:
\begin{align*}
&\quad\big|\:\mathbb{E}_{\:P}\big(\mathsf{1}_{\:J}(\tilde{x},0)\big)-F^{\:P}(0)\:\big|<\epsilon_{\:p}\quad \Leftrightarrow \quad \big|\:\mathbb{E}_{\:P}\big(\mathsf{1}_{\:J}(\tilde{x},0)\big)\:\big|<\epsilon_{\:p} \\
&\Rightarrow\mathbb{E}_{\:P}\big(\mathsf{1}_{\:J}(\tilde{x},0)\big)<\epsilon_{\:p}.
\end{align*}
Also, by part A, there exists $J_{\:1}\in\mathbb{N}$ such that, for $J>J_{\:1}$, we have:
\begin{align*}
&\quad\big|\:\mathbb{E}_{\:P}\big(\mathsf{1}_{\:J}(\tilde{x},1)\big)-F^{\:P}(1)\:\big|<\epsilon_{\:p} \quad \Leftrightarrow \quad \big|\:\mathbb{E}_{\:P}\big(\mathsf{1}_{\:J}(\tilde{x},0)\big)-1\:\big|<\epsilon_{\:p} \\
&\Rightarrow\mathbb{E}_{\:P}\big(\mathsf{1}_{\:J}(\tilde{x},1)\big)>1-\epsilon_{\:p}.
\end{align*}
Then, for $J>J_{\:p}=\operatorname{max}(J_{\:0},J_{\:1})$, we have:
\begin{align*}
\mathbb{E}_{\:P}\big(\mathsf{1}_{\:J}(\tilde{x},0)\big)<\epsilon_{\:p}<p<1-\epsilon_{\:p}<\mathbb{E}_{\:P}\big(\mathsf{1}_{\:J}(\tilde{x},1)\big).
\end{align*}
From the expression in equation \ref{Quantile:proof:eq:b}, we realize that $\mathbb{E}_{\:P}\big(\mathsf{1}_{\:J}(\tilde{x},\theta)\big)$ is a continuous function of $\theta$. So, by the intermediate value theorem, there is a number $\theta=q_{\:J,\:p}^{\:(P)}\in(0,1)$, satisfying $\mathbb{E}_{\:P}\big(\mathsf{1}_{\:J}(\tilde{x},\theta)\big)=p$. So, we conclude that $p\in\mathcal{Q}_{\:J}^{\:(P)}$, if $J>J_{\:p}$, and this is true for arbitrary $p\in(0,1)$. Hence $\lim\limits_{J\to\infty}\mathcal{Q}_{\:J}^{\:(P)}=(0,1)$. \\

(C)\ For $p\in(0,1)$, consider any sequence of numbers $q_{\:J,\:p}^{\:(P)}$ when such a sequence exists. From part B we know that $q_{\:J,\:p}^{\:(P)}$ exists when $J$ is large ($J>J_{\:p}$ say). For such a $J$, we have $\mathbb{E}_{\:P}\Big(\mathsf{1}_{\:J}\big(\tilde{x},q_{\:J,\:p}^{\:(P)}\big)\Big)=p$. Given, $\epsilon>0$, we know from part A that there exists an integer $J_{\:0}$ such that $\big|\:\mathbb{E}_{\:P}\big(\mathsf{1}_{\:J}(\tilde{x},\theta)\big)-F^{\:(P)}(\theta)\:\big|<\epsilon$, for any $\theta\in(0,1)$ if $J>J_{\:0}$. If we  replace $\theta$ with $q_{\:J,\:p}^{\:(P)}$ in this inequality, we obtain $\Big|\:p-F^{\:(P)}\big(q_{\:J,\:p}^{\:(P)}\big)\:\Big|<\epsilon$ when $J>J_{\:\epsilon}=\operatorname{max}(J_{\:p},J_{\:0})$.

Now, we know that $F^{\:(P)}\big(q_{\:p}^{\:(P)}\big)=p$. Then if $J>J_{\:\epsilon}$, it follows that
$\Big|\:F^{\:(P)}\big(q_{\:J,\:p}^{\:(P)}\big)-F^{\:(P)}\big(q_{\:p}^{\:(P)}\big)\:\Big|<\epsilon$. In other words, $\lim\limits_{J\to\infty}F^{\:(P)}\big(q_{\:J,\:p}^{\:(P)}\big)\to F^{\:(P)}\big(q_{\:p}^{\:(P)}\big)$, and hence by from part A of Lemma \ref{Quantile:lemma:inverse-continuity} we have that $\lim\limits_{J\to\infty}q_{\:J,\:p}^{\:(P)}\to q_{\:p}^{\:(P)}$. 

Next, we will show that $\lim\limits_{J\to\infty}\bigtriangledown\big(\mathcal{G}_{\:J,\:p}^{\:(P)}\big)\to 0$. Pick $\epsilon>0$. If $J>J_{\:\nicefrac{\epsilon}{2}}$, then
\begin{align*}
\big|\:q_{\:J,\:p}^{\:(P)}-q_{\:p}^{\:(P)}\:\big|<\nicefrac{\epsilon}{2}\text{ and }\big|\:{q'}_{\:J,\:p}^{\:(P)}-q_{\:p}^{\:(P)}\:\big|<\nicefrac{\epsilon}{2}
\end{align*}
for $\big\{q_{\:J,\:p}^{\:(P)},{q'}_{\:J,\:p}^{\:(P)}\big\}\in\mathcal{G}_{\:J,\:p}^{\:(P)}$ and $q_{\:J,\:p}^{\:(P)}\neq{q'}_{\:J,\:p}^{\:(P)}$. Then, we have:
\begin{align*}
\big|\:q_{\:J,\:p}^{\:(P)}-{q'}_{\:J,\:p}^{\:(P)}\:\big|&=\big|\:\big(q_{\:J,\:p}^{\:(P)}-q_{\:p}^{\:(P)}\big)-\big(\:{q'}_{\:J,\:p}^{\:(P)}-q_{\:p}^{\:(P)}\big)\:\big| \\
&\leq\big|\:\big(q_{\:J,\:p}^{\:(P)}-q_{\:p}^{\:(P)}\big)\:\big|+\big|\:\big(\:{q'}_{\:J,\:p}^{\:(P)}-q_{\:p}^{\:(P)}\big)\:\big| \\
&<\nicefrac{\epsilon}{2}+\nicefrac{\epsilon}{2}=\epsilon.
\end{align*}

Since $\epsilon$ is arbitrary, we have proved that $\lim\limits_{J\to\infty}\bigtriangledown\big(\mathcal{G}_{\:J,\:p}^{\:(P)}\big)\to 0$ for $p\in(0,1)$.
\end{proof}

\subsection{Proof of Lemma \ref{Quantile:lemma:population-expression}}
\begin{proof}
(A)\ We have:
\begin{align*}
&\hspace{-2em}\bar{\mathsf{G}}_{\:J,\:p}(X,\theta)=\frac{1}{|X|}\cdot\mathsf{G}_{\:J,\:p}(X,\theta)=\frac{1}{|X|}\Big(\sum_{i\in I}\rho_{\:J,\:p}(x_{\:i}-\theta)\Big) \\
&=\frac{1}{|X|}\bigg(\sum_{i\in I}\frac{\pi}{4}-\frac{2}{\pi}\sum_{j=1}^J\frac{\cos\big((2j-1)(x_{\:i}-\theta)\big)}{(2j-1)^2}+\big(p-\frac{1}{2}\big)(x_{\:i}-\theta)\bigg) \\
&=\frac{1}{|X|}\bigg(\sum_{i\in I}\frac{\pi}{4}-\frac{2}{\pi}\sum_{j=1}^J\frac{\big(\mathsf{c}_{\:2j-1}(x_{\:i})\mathsf{c}_{\:2j-1}(\theta)-\mathsf{c}_{\:2j-1}(x_{\:i})\mathsf{c}_{\:2j-1}(\theta)\big)}{(2j-1)^2} \\
&\hspace{5em}+\big(p-\frac{1}{2}\big)(x_{\:i}-\theta)\bigg) \\
&=\frac{\pi}{4}-\frac{2}{\pi}\sum_{j=1}^J\frac{\Big(\big(\nicefrac{1}{|X|}\sum_{i\in I}\mathsf{c}_{\:2j-1}(x_{\:i})\big)\mathsf{c}_{\:2j-1}(\theta)+\big(\nicefrac{1}{|X|}\sum_{i\in I}\mathsf{c}_{\:2j}(x_{\:i})\big)\mathsf{c}_{\:2j}(\theta)\Big)}{(2j-1)^2} \\
&\hspace{5.5em}+\big(p-\frac{1}{2}\big)\Big(\nicefrac{1}{|X|}\sum_{i\in I}x_{\:i}-\theta\Big) \\
&\hspace{-3.5em}\Leftrightarrow\bar{\mathsf{G}}_{\:J,\:p}(X,\theta)=\frac{\pi}{4}-\frac{2}{\pi}\cdot \sum_{j=1}^J\frac{\Big(\bar{\mathsf{C}}_{\:2j-1}(X)\cdot \mathsf{c}_{\:2j-1}(\theta)+\bar{\mathsf{C}}_{\:2j}(X)\cdot \mathsf{c}_{\:2j}(\theta)\Big)}{(2j-1)^2} \\
&\hspace{5em}+\big(p-\frac{1}{2}\big)\Big(\bar{X}-\theta\Big).
\end{align*}

Now, if we differentiate both sides with respect to $\theta$, we obtain
\begin{align*}
&\hspace{-2.5em}\frac{\partial}{\partial\theta}\big(\bar{\mathsf{G}}_{\:J,\:p}(X,\theta)\big)=-\frac{2}{\pi}\sum_{j=1}^J\frac{\Big(\bar{\mathsf{C}}_{\:2j-1}(X){\mathsf{c}_{\:2j-1}}'(\theta)+\bar{\mathsf{C}}_{\:2j}(X){\mathsf{c}_{\:2j}}'(\theta)\Big)}{(2j-1)^2}-\big(p-\frac{1}{2}\big) \\
&=\frac{1}{2}-\frac{2}{\pi}\sum_{j=1}^J\frac{\Big(\bar{\mathsf{C}}_{\:2j}(X)\mathsf{c}_{\:2j-1}(\theta)-\bar{\mathsf{C}}_{\:2j-1}(X)\mathsf{c}_{\:2j}(\theta)\Big)}{2j-1}-p \\
&=\frac{1}{2}-\frac{2}{\pi}\sum_{j=1}^J\frac{\Big(\big(\nicefrac{1}{|X|}\sum_{i\in I}\mathsf{c}_{\:2j}(x_{\:i})\big)\mathsf{c}_{\:2j-1}(\theta)-\big(\nicefrac{1}{|X|}\sum_{i\in I}\mathsf{c}_{\:2j-1}(x_{\:i})\big)\mathsf{c}_{\:2j}(\theta)\Big)}{2j-1}-p \\
&=\frac{1}{|X|}\sum_{i\in I}\Big(\frac{1}{2}-\frac{2}{\pi}\sum_{j=1}^J\frac{\big(\mathsf{c}_{\:2j}(x_{\:i})\mathsf{c}_{\:2j-1}(\theta)-\mathsf{c}_{\:2j-1}(x_{\:i})\mathsf{c}_{\:2j}(\theta)\big)}{2j-1}\Big)-p \\
&=\frac{1}{|X|}\sum_{i\in I}\Big(\frac{1}{2}-\frac{2}{\pi}\sum_{j=1}^J\frac{\sin\big((2j-1)(x_{\:i}-\theta)\big)}{2j-1}\Big)-p \\
&=\frac{1}{|X|}\sum_{i\in I}\mathsf{1}_{\:J}(x_{\:i},\theta)-p=\hat{F}_{\:J}(X,\theta)-p.
\end{align*}

(B) Let us define $\bar{\mathsf{E}}_{\:J,\:p}(X,\theta)=\nicefrac{1}{|X|}\cdot\big(\mathsf{G}_{\:p}(X,\theta)-\mathsf{G}_{\:J,\:p}(X,\theta)\big)$, then, we have
\begin{align*}
\hspace{-2em}&\big|\:\bar{\mathsf{E}}_{\:J,\:p}(X,\theta)\:\big|=\nicefrac{1}{|X|}\bigg|\:\sum_{i\in I}\big(\rho_{\:p}(x_{\:i},\theta)-\rho_{\:J,\:p}(x_{\:i},\theta)\big)\:\bigg|\\
&=\nicefrac{1}{|X|}\bigg|\:\sum_{i\in I}\frac{2}{\pi}\sum_{j=J+1}^{\infty}\frac{\cos\big((2j-1)(x_{\:i}-\beta)\big)}{(2j-1)^2}\:\bigg| \\
&\leq\nicefrac{1}{|X|}\sum_{i\in I}\frac{2}{\pi}\sum_{j=J+1}^{\infty}\bigg|\:\frac{\cos\big((2j-1)(x_{\:i}-\beta)\big)}{(2j-1)^2}\:\bigg|\\
&\leq\nicefrac{1}{|X|}\sum_{i\in I}\frac{2}{\pi}\sum_{j=J+1}^{\infty}\frac{1}{(2j-1)^2} \\
&=\frac{2}{\pi}\sum_{j=J+1}^{\infty}\frac{1}{(2j-1)^2}=O\big(\nicefrac{1}{J}\big).
\end{align*}
Hence, $\big|\:\bar{\mathsf{E}}_{\:J,\:p}(X,\theta)\:\big|=O(J^{-1})$ for arbitrary $X$ and $p\in(0,1)$ as claimed. \\

(C) We will show that, for large $J$, $\bar{\mathsf{G}}_{\:J,\:p}(X,\theta)$ cannot have its minimum at $0$ or $1$. This proves that $\hat{q}_{\:J,\:p}(X)$ lies in the open interval $(0,1)$ for large $J$, since the minimizer $\hat{q}_{\:J,\:p}(X)$ is guaranteed to exist in the compact interval $[0,1]$. We know $\bar{\mathsf{G}}_{\:p}(X,\theta)$ is minimized for $\theta=\hat{q}_{\:p}(X)\in(0,1)$. Let us pick $\epsilon_{\:p,\:X}>0$ such that \\
$\epsilon_{\:p,\:X}<\operatorname{min}\Big(\bar{\mathsf{G}}_{\:p}(X,0)-\bar{\mathsf{G}}_{\:p}\big(X,\hat{q}_{\:p}(X)\big),\bar{\mathsf{G}}_{\:p}(X,1)-\bar{\mathsf{G}}_{\:p}\big(X,\hat{q}_{\:p}(X)\big)\Big)$. From part B we know that there exists $J_{\:p,\:X}\in\mathbb{N}$ such that $\big|\:\bar{\mathsf{G}}_{\:p,\:J}(X,\theta)-\bar{\mathsf{G}}_{\:p}(X,\theta)\:\big|<\nicefrac{\epsilon_{\:p,\:X}}{3}$ for $\theta\in(0,1)\text{ and }J>J_{\:p,\:X}$.
In particular we have $\bar{\mathsf{G}}_{\:p,\:J}(X,0)-\bar{\mathsf{G}}_{\:p}(X,0)>-\nicefrac{\epsilon_{\:p,\:X}}{3}$ and $\bar{\mathsf{G}}_{\:p}\big(X,\hat{q}_{\:p}(X)\big)-\bar{\mathsf{G}}_{\:p,\:J}\big(X,\hat{q}_{\:p}(X)\big)>-\nicefrac{\epsilon_{\:p,\:X}}{3}$ for $J>J_{\:p,\:X}$. Then, for $J>J_{\:p,\:X}$ we have
\begin{align*}
&\hspace{-2em}\bar{\mathsf{G}}_{\:p,\:J}(X,0)-\bar{\mathsf{G}}_{\:p}\big(X,\hat{q}_{\:p,\:J}(X)\big) \\
&= \big(\bar{\mathsf{G}}_{\:p,\:J}(X,0)-\bar{\mathsf{G}}_{\:p}(X,0)\big)+\Big(\bar{\mathsf{G}}_{\:p}\big(X,\hat{q}_{\:p}(X)\big)-\bar{\mathsf{G}}_{\:p,\:J}\big(X,\hat{q}_{\:p}(X)\big)\Big) \\
&\hspace{2em}+\Big(\bar{\mathsf{G}}_{\:p}(X,0)-\bar{\mathsf{G}}_{\:p}\big(X,\hat{q}_{\:p}(X)\big) \\
&>-\frac{\epsilon_{\:p,\:X}}{3}-\frac{\epsilon_{\:p,\:X}}{3}+\epsilon_{\:p,\:X}=\frac{\epsilon_{\:p,\:X}}{3}>0.
\end{align*}
In other words, we have $\bar{\mathsf{G}}_{\:p,\:J}(X,0)>\bar{\mathsf{G}}_{\:p}\big(X,\hat{q}_{\:p,\:J}(X)\big)$ for $J>J_{\:p,\:X}$. Similarly, we can show $\bar{\mathsf{G}}_{\:p,\:J}(X,1)>\bar{\mathsf{G}}_{\:p}\big(X,\hat{q}_{\:p,\:J}(X)\big)$ for $J>J_{\:p,\:X}$. 

Thus we have demonstrated a value $\theta = \hat{q}_{\:p,\:J}(X)$, for which $\bar{\mathsf{G}}_{\:p,\:J}(X,0)>\bar{\mathsf{G}}_{\:p}(X,\theta)$ and $\bar{\mathsf{G}}_{\:p,\:J}(X,1)>\bar{\mathsf{G}}_{\:p}(X,\theta)$ for $J>J_{\:p,\:X}$. In conclusion, for sufficiently large $J$, $\bar{\mathsf{G}}_{\:J,\:p}(X,\theta)$ cannot attain a minimum at $0$ or $1$.
\end{proof}

\subsection{Proof of Lemma \ref{Quantile:lemma:sample-asymptotic}}
\begin{proof}(A)\ 
Suppose Assumption \ref{Quantile:assumption:positive-density} holds. Let $\theta\in(0,1)$, and consider arbitrary $\epsilon>0$. First, observe that, for $z\in(0,1)$ and $j\in\mathbb{N}$, we have:
\begin{align*}
\mathsf{1}_{\:j}(z)=\mathsf{1}_{\:j}(z,0)=\frac{1}{2}-\frac{2}{\pi}\cdot\sum_{k=1}^j\frac{\mathsf{c}_{\:2k}(z)}{2k-1}
&\quad\Leftrightarrow\quad\frac{2}{\pi}\cdot\sum_{k=1}^j\frac{\mathsf{c}_{\:2k}(z)}{2k-1}=\frac{1}{2}-\mathsf{1}_{\:j}(z).
\end{align*}
For $j\in\mathbb{N}$, let us define the SEP statistic $\mathsf{I}_{\:j}(X):=\sum_{i\in I}\mathsf{1}_{\:j}(x_{\:i})$ and a standardized version of $\mathsf{I}_{\:j}(X)$, which is the SEP statistic $\bar{\mathsf{I}}_{\:j}(X):=\frac{\mathsf{I}_{\:i}(X)}{|X|}$. Observe that
\begin{align*}
\bar{\mathsf{I}}_{\:j}(X)=\frac{1}{2}-\frac{2}{\pi}\cdot\sum_{k=1}^j\frac{\bar{\mathsf{C}}_{\:2k}(X)}{2k-1}
\end{align*}
Let $r_{\:j}(X):=\bar{\mathsf{C}}_{\:j}(X)-\mathbb{E}_{\:P}\big(\mathsf{c}_{\:j}(\tilde{x})\big)$ for $j\in\mathbb{N}$. Then:
\begin{align*}
\frac{2}{\pi}\cdot\sum_{k=1}^j\frac{r_{\:2k}(X)}{2k-1}&=\frac{2}{\pi}\cdot\sum_{k=1}^j\frac{\Big(\bar{\mathsf{C}}_{\:2k}(X)-\mathbb{E}_{\:P}\big(\mathsf{c}_{\:2k}(\tilde{x})\big)\Big)}{2k-1} \\
&=-\Big(\frac{1}{2}-\frac{2}{\pi}\cdot\sum_{k=1}^j\frac{\bar{\mathsf{C}}_{\:2k}(X)}{2k-1}\Big)+\mathbb{E}_{\:P}\Big(\frac{1}{2}-\frac{2}{\pi}\cdot\sum_{k=1}^j\frac{\mathsf{c}_{\:2k}(\tilde{x})}{2k-1}\Big) \\
&=-\bar{\mathsf{I}}_{\:j}(X)+\mathbb{E}_{\:P}\big(\mathsf{1}_{\:j}(\tilde{x})\big) \\
&\Leftrightarrow-\frac{2}{\pi}\cdot\sum_{k=1}^j\frac{r_{\:2k}(X)}{2k-1}=\bar{\mathsf{I}}_{\:j}(X)-\mathbb{E}_{\:P}\big(\mathsf{1}_{\:j}(\tilde{x})\big).
\end{align*}

Since $|\mathsf{c}_{\:2j-1}(\tilde{x})|\leq 1$, we have that $|\mathbb{E}_{\:P}\big(\mathsf{c}_{\:2j-1}(\tilde{x})\big)|\leq 1$ for  $j\in\mathbb{N}$. By Kolmogorov's strong law of large numbers, we have $\bar{\mathsf{C}}_{\:2j-1}(X)\overset{a.s.}{\longrightarrow}\mathbb{E}_{\:P}\big(\mathsf{c}_{\:2j-1}(\tilde{x})\big)$. So, for each $j$, there exist a null set $N_{\:j}\subset \Omega$, such that $\bar{\mathsf{C}}_{\:2j-1}\left(X(w,n)\right)\to\mathbb{E}_{\:P}\big(\mathsf{c}_{\:2j-1}(\tilde{x})\big)$ as $n\to\infty$, if $w\in {N_{\:j}}'$. 

From part B of Lemma \ref{Quantile:lemma:indicator-uniform-convergence}, we know $|\mathsf{1}_{\:j}(\tilde{x})|\leq\nicefrac{9}{2}+\nicefrac{1}{\pi}$, and so $|\mathbb{E}_{\:P}\big(\mathsf{1}_{\:j}(\tilde{x})\big)|\leq\nicefrac{9}{2}+\nicefrac{1}{\pi}$ for  $j\in\mathbb{N}$. Again, by Kolmogorov's strong law of large numbers, we have $\bar{\mathsf{I}}_{\:j}(X)\overset{a.s.}{\longrightarrow}\mathbb{E}_{\:P}\big(\mathsf{1}_{\:j}(\tilde{x})\big)$. So, for each $j$ there exists a null set $M_{\:j}\subset \Omega$, such that $\bar{\mathsf{I}}_{\:j}\left(X(w,n)\right)\to\mathbb{E}_{\:P}\big(\mathsf{1}_{\:j}(\tilde{x})\big)$ as $n\to\infty$, if $w\in {M_{\:j}}'$. Take $N=\big(\bigcup_{j\in\mathbb{N}}N_{\:j}\big)\bigcup\big(\bigcup_{j\in\mathbb{N}}M_{\:j}\big)$; since $N$ is a countable union of null sets, it is also a null set. Now if $w\in N'$, then $\bar{\mathsf{C}}_{\:2j-1}\big(X(w,n)\big)\to\mathbb{E}_{\:P}\big(\mathsf{c}_{\:2j-1}(\tilde{x})\big)$ for all $j\in\mathbb{N}$ and $\bar{\mathsf{I}}_{\:j}\big(X(w,n)\big)\to\mathbb{E}_{\:P}\big(\mathsf{1}_{\:j}(\tilde{x})\big)$ for all $j\in\mathbb{N}$. 

If we consider arbitrary $w\in N'$; then, from part A of Lemma \ref{Quantile:lemma:population-expression} and equation {Quantile:lemma:inverse-continuity}, we have: 
\begin{align*}
&\hspace{-2em}\hat{F}_{\:J}\big(X(w,n),\theta\big)-\mathbb{E}_{\:P}\big(\mathsf{1}_{\:J}(\tilde{x},\theta)\big) \\
&=\frac{1}{2}-\frac{2}{\pi}\cdot\sum_{j=1}^J\left(\bar{\mathsf{C}}_{\:2j}\big(X(w,n)\big)\cdot\frac{\mathsf{c}_{\:2j-1}(\theta)}{2j-1}-\bar{\mathsf{C}}_{\:2j-1}\big(X(w,n)\big)\cdot\frac{\mathsf{c}_{\:2j}(\theta)}{2j-1}\right) \\
&\hspace{2cm}-\frac{1}{2}+\frac{2}{\pi}\cdot\sum_{j=1}^J\left(\mathbb{E}_{\:P}\big(\mathsf{c}_{\:2j}(\tilde{x})\big)\cdot\frac{\mathsf{c}_{\:2j-1}(\theta)}{2j-1}-\mathbb{E}_{\:P}\big(\mathsf{c}_{\:2j-1}(\tilde{x})\big)\cdot\frac{\mathsf{c}_{\:2j}(\theta)}{2j-1}\right) \\
&=\frac{2}{\pi}\cdot\sum_{j=1}^J\Big(\bar{\mathsf{C}}_{\:2j-1}\big(X(w,n)\big)-\mathbb{E}_{\:P}\big(\mathsf{c}_{\:2j-1}(\tilde{x})\big)\Big)\cdot\frac{\mathsf{c}_{\:2j}(\theta)}{2j-1} \\
&\hspace{2cm}-\frac{2}{\pi}\cdot\sum_{j=1}^J\mathsf{c}_{\:2j-1}(\theta)\cdot\frac{\Big(\bar{\mathsf{C}}_{\:2j}\big(X(w,n)\big)-\mathbb{E}_{\:P}\big(\mathsf{c}_{\:2j}(\tilde{x})\big)\Big)}{2j-1}.
\end{align*}
From our previous notation,  $r_{\:j}\big(X(w,n)\big)=\bar{\mathsf{C}}_{\:j}\big(X(w,n)\big)-\mathbb{E}_{\:P}\big(\mathsf{c}_{\:j}(\tilde{x})\big)$ for $j\in\mathbb{N}$, and hence
\begin{align*}
&\hspace{-2em}\hat{F}_{\:J}\big(X(w,n),\theta\big)-\mathbb{E}_{\:P}\big(\mathsf{1}_{\:J}(\tilde{x},\theta)\big) \\
&=\frac{2}{\pi}\cdot\sum_{j=1}^Jr_{\:2j-1}\big(X(w,n)\big)\cdot\frac{\mathsf{c}_{\:2j}(\theta)}{2j-1}-\frac{2}{\pi}\cdot\sum_{j=1}^J\mathsf{c}_{\:2j-1}(\theta)\cdot\frac{r_{\:2j}\big(X(w,n)\big)}{2j-1}.
\end{align*}
Now, from Abel's identity for partial sums applied to the first summation, we have:
\begin{align*}
&\hspace{-2em}\frac{2}{\pi}\cdot\sum_{j=1}^Jr_{\:2j-1}\big(X(w,n)\big)\cdot\frac{\mathsf{c}_{\:2j}(\theta)}{2j-1} \\
&=r_{\:2J-1}\big(X(w,n)\big)\cdot\left(\frac{2}{\pi}\cdot\sum_{k=1}^J\frac{\mathsf{c}_{\:2k}(\theta)}{2k-1}\right) \\
&\hspace{2em}+\sum_{j=1}^{J-1}\Big(r_{\:2j-1}\big(X(w,n)\big)-r_{\:2j+1}\big(X(w,n)\big)\Big)\cdot\left(\frac{2}{\pi}\cdot\sum_{k=1}^j\frac{\mathsf{c}_{\:2k}(\theta)}{2k-1}\right) \\
&=r_{\:2J-1}\big(X(w,n)\big)\cdot\left(\frac{1}{2}-\mathsf{1}_{\:J}(\theta)\right) \\
&\hspace{2em}+\sum_{j=1}^{J-1}\Big(r_{\:2j-1}\big(X(w,n)\big)-r_{\:2j+1}\big(X(w,n)\big)\Big)\cdot\left(\frac{1}{2}-\mathsf{1}_{\:j}(\theta)\right).
\end{align*}
Proceeding analogously for the second summation, we obtain:
\begin{align*}
&\hspace{-2em}-\frac{2}{\pi}\cdot\sum_{j=1}^J\mathsf{c}_{\:2j-1}(\theta)\cdot\frac{r_{\:2j}\big(X(w,n)\big)}{2j-1} \\
&=\mathsf{c}_{\:2J-1}(\theta)\cdot\left(-\frac{2}{\pi}\cdot\sum_{k=1}^J\frac{r_{\:2k}\big(X(w,n)\big)}{2k-1}\right) \\
&\hspace{2em}+\sum_{j=1}^{J-1}\big(\mathsf{c}_{\:2j-1}(\theta)-\mathsf{c}_{\:2j+1}(\theta)\big)\cdot\left(-\frac{2}{\pi}\cdot\sum_{k=1}^j\frac{r_{\:2k}\big(X(w,n)\big)}{2k-1}\right) \\
&=\mathsf{c}_{\:2J-1}(\theta)\cdot\Big(\bar{\mathsf{I}}_{\:J}\big(X(w,n)\big)-\mathbb{E}_{\:P}\big(\mathsf{1}_{\:J}(\tilde{x})\big)\Big) \\
&\hspace{2em}+\sum_{j=1}^{J-1}\big(\mathsf{c}_{\:2j-1}(\theta)-\mathsf{c}_{\:2j+1}(\theta)\big)\cdot\Big(\bar{\mathsf{I}}_{\:j}\big(X(w,n)\big)-\mathbb{E}_{\:P}\big(\mathsf{1}_{\:j}(\tilde{x})\big)\Big).
\end{align*}
Thus, substituting these expressions of partial-sums in the previous equation, we may write
\begin{align*}
&\hspace{-2em}\hat{F}_{\:J}\big(X(w,n),\theta\big)-\mathbb{E}_{\:P}\big(\mathsf{1}_{\:J}(\tilde{x},\theta)\big) \\
&=r_{\:2J-1}\big(X(w,n)\big)\cdot\left(\frac{1}{2}-\mathsf{1}_{\:J}(\theta)\right) \\
&\hspace{2em}+\sum_{j=1}^{J-1}\Big(r_{\:2j-1}\big(X(w,n)\big)-r_{\:2j+1}\big(X(w,n)\big)\Big)\cdot\left(\frac{1}{2}-\mathsf{1}_{\:j}(\theta)\right) \\
&\hspace{4em}+\mathsf{c}_{\:2J-1}(\theta)\cdot\Big(\bar{\mathsf{I}}_{\:J}\big(X(w,n)\big)-\mathbb{E}_{\:P}\big(\mathsf{1}_{\:J}(\tilde{x})\big)\Big) \\
&\hspace{6em}+\sum_{j=1}^{J-1}\big(\mathsf{c}_{\:2j-1}(\theta)-\mathsf{c}_{\:2j+1}(\theta)\big)\cdot\Big(\bar{\mathsf{I}}_{\:j}\big(X(w,n)\big)-\mathbb{E}_{\:P}\big(\mathsf{1}_{\:j}(\tilde{x})\big)\Big).
\end{align*}

It follows from part B of Lemma \ref{Quantile:lemma:indicator-uniform-convergence} that $\mathsf{1}_{\:j}(\theta)$ is uniformly bounded for $j\in\mathbb{N}$ and $\theta\in(0,1)$. So, there exist $M\in\mathbb{R}$ such that $\big|\:\nicefrac{1}{2}-\mathsf{1}_{\:j}(\theta)\:\big|\leq M$ for $j\in\mathbb{N}$ and $\theta\in(0,1)$. Then, we have:
\begin{align*}
&\hspace{-2em}\Big|\:\hat{F}_{\:J}\big(X(w,n),\theta\big)-\mathbb{E}_{\:P}\big(\mathsf{1}_{\:J}(\tilde{x},\theta)\big)\:\Big| \\
&\leq\left|\:r_{\:2J-1}\big(X(w,n)\big)\:\right|\cdot\left|\:\frac{1}{2}-\mathsf{1}_{\:J}(\theta)\:\right| \\
&\hspace{2em}+\sum_{j=1}^{J-1}\Big|\:r_{\:2j-1}\big(X(w,n)\big)-r_{\:2j+1}\big(X(w,n)\big)\:\Big|\cdot\bigg|\:\frac{1}{2}-\mathsf{1}_{\:j}(\theta)\:\bigg| \\
&\hspace{4em}+\big|\:\mathsf{c}_{\:2J-1}(\theta)\:\big|\cdot\Big|\:\bar{\mathsf{I}}_{\:J}\big(X(w,n)\big)-\mathbb{E}_{\:P}\big(\mathsf{1}_{\:J}(\tilde{x})\big)\:\Big| \\
&\hspace{6em}+\sum_{j=1}^{J-1}\big|\:\mathsf{c}_{\:2j-1}(\theta)-\mathsf{c}_{\:2j+1}(\theta)\:\big|\cdot\Big|\:\bar{\mathsf{I}}_{\:j}\big(X(w,n)\big)-\mathbb{E}_{\:P}\big(\mathsf{1}_{\:j}(\tilde{x})\big)\:\Big| \\
&\leq\big|\:r_{\:2J-1}\big(X(w,n)\big)\:\big|\cdot M \\
&\hspace{2em}+\sum_{j=1}^{J-1}\Big(\big|\:r_{\:2j-1}\big(X(w,n)\big)\:\big|+\big|\:r_{\:2j+1}\big(X(w,n)\big)\:\big|\Big)\cdot M \\
&\hspace{4em}+\Big|\:\bar{\mathsf{I}}_{\:J}\big(X(w,n)\big)-\mathbb{E}_{\:P}\big(\mathsf{1}_{\:J}(\tilde{x})\big)\:\Big|+\sum_{j=1}^{J-1}2\cdot\Big|\:\bar{\mathsf{I}}_{\:j}\big(X(w,n)\big)-\mathbb{E}_{\:P}\big(\mathsf{1}_{\:j}(\tilde{x})\big)\:\Big| \\
&<\sum_{j=1}^J 2M\cdot\big|\:r_{\:2j-1}\big(X(w,n)\big)\:\big|+\sum_{j=1}^J2\cdot\Big|\:\bar{\mathsf{I}}_{\:j}\big(X(w,n)\big)-\mathbb{E}_{\:P}\big(\mathsf{1}_{\:j}(\tilde{x})\big)\:\Big| \\
&=\sum_{j=1}^J2M\cdot\Big|\:\bar{\mathsf{C}}_{\:2j-1}\big(X(w,n)\big)-\mathbb{E}_{\:P}\big(\mathsf{c}_{\:2j-1}(\tilde{x})\big)\:\Big|+\sum_{j=1}^J2\cdot\Big|\:\bar{\mathsf{I}}_{\:j}\big(X(w,n)\big)-\mathbb{E}_{\:P}\big(\mathsf{1}_{\:j}(\tilde{x})\big)\:\Big|.
\end{align*}
with, the second inequality follows as $|\mathsf{c}_{\:2J-1}(\theta)|\leq 1$ for $J\in\mathbb{N}$ and $|\mathsf{c}_{\:2j-1}(\theta)-\mathsf{c}_{\:2j+1}(\theta)|\leq|\mathsf{c}_{\:2j-1}(\theta)|+|\mathsf{c}_{\:2j+1}(\theta)|\leq 2$ for $j\in\mathbb{N}$.

Now, for $j\in\mathbb{N}$, pick $n_{\:w,\:2j-1}\in\mathbb{N}$ large enough so that for $n>n_{\:w,\:2j-1}$, we have
\begin{align*}
\Big|\:\bar{\mathsf{C}}_{\:2j-1}\big(X(w,n)\big)-\mathbb{E}_{\:P}\big(\mathsf{c}_{\:2j-1}(\tilde{x})\big)\:\Big|<\frac{\epsilon}{2M}\cdot 2^{-(2j-1)};
\end{align*}
also, pick $n_{\:w,\:2j}\in\mathbb{N}$ large enough so that for $n>n_{\:w,\:2j}$, we have 
\begin{align*}
\Big|\:\bar{\mathsf{I}}_{\:j}\big(X(w,n)\big)-\mathbb{E}_{\:P}\big(\mathsf{1}_{\:2j-1}(\tilde{x})\big)\:\Big|<\frac{\epsilon}{2}\cdot 2^{-(2j)}.
\end{align*}
Define $n_{\:w,\:(1:2J)}=\max(n_{\:w,\:1},n_{\:w,\:2},\dots,n_{\:w,\:2J})$. Then, for $n>n_{\:w,\:(1:2J)}$, we have
\begin{align*}
\Big|\:\hat{F}_{\:J}\big(X(w,n),\theta\big)-\mathbb{E}_{\:P}\big(\mathsf{1}_{\:J}(\tilde{x})\big)\:\Big|<\sum_{j=1}^{2J}\nicefrac{\epsilon}{2^j}<\epsilon.
\end{align*}
Observe that $n_{w,(1:2J)}$ does not depend on $\theta$, and so the above inequality holds uniformly for any $\theta\in(0,1)$. In other words, for $w\in N'$, if $n>n_{\:w,\:(1:2J)}$, then:
\begin{align*}
\underset{\:\theta\in(0,1)}{\sup}\Big|\:\hat{F}_{\:J}\big(X(w,n),\theta\big)-\mathbb{E}_{\:P}\big(\mathsf{1}_{\:J}(\tilde{x})\big)\:\Big|<\epsilon.
\end{align*}
Since $P(N)=0$, we conclude that $\underset{|X|\to\infty}{\lim}\underset{\theta\in(0,1)}{\sup}\Big|\:\hat{F}_{\:J}(X,\theta)-\mathbb{E}_{\:P}\big(\mathsf{1}_{\:J}(\tilde{x})\big)\:\Big|\overset{a.s.}{=}0$. \\

(B)\ Let $p\in(0,1)$, and observe that we may choose an $\epsilon_{\:p}>0$ such that $2*\epsilon_{\:p}<\operatorname{min}\big(p,1-p\big)$. Take $p^{\:(1)}\in(0,\epsilon_{\:p})$ and $p^{\:(2)}\in(1-\epsilon_{\:p},1)$. Recall from part B of Lemma \ref{Quantile:lemma:population-asymptotic}, that $\lim\limits_{J\rightarrow\infty}\mathcal{Q}_{\:J}^{\:(P)}=(0,1)$, where $\mathcal{Q}_{\:J}^{\:(P)}=\big\{p\colon\exists\ \theta\in(0,1)\text{ such that }\mathbb{E}_{\:P}\big(\mathsf{1}_{\:J}(\tilde{x},\theta)\big)=p\big\}$. So, there exists a $J_{\:p}$ such that if $J>J_{\:p}$, then we can find $q_{\:J,\:p^{\:(1)}}^{\:(P)}\in(0,1)$ such that $\mathbb{E}_{\:P}\Big(\mathsf{1}_{\:J}\big(\tilde{x},q_{\:J,\:p^{\:(1)}}^{\:(P)}\big)\Big)=p^{\:(1)}$, and $q_{\:J,\:p^{\:(2)}}^{\:(P)}\in(0,1)$ such that $\mathbb{E}_{\:P}\Big(\mathsf{1}_{\:J}\big(\tilde{x},q_{\:J,\:p^{\:(2)}}^{\:(P)}\big)\Big)=p^{\:(2)}$. 

Now by part A we know that there exists a null set $N$, such that if $w\in{N}'$, we can find an $n_{\:w,\:J}\in\mathbb{N}$ such that for $n>n_{\:w,\:J}$ and any $\theta\in(0,1)$, we have
\begin{align*}
\Big|\:\hat{F}_{\:J}\big(X(w,n),\theta\big)-\mathbb{E}_{\:P}\big(\mathsf{1}_{\:J}(\tilde{x},\theta)\big)\:\Big|\leq\epsilon_{\:p}.
\end{align*}
In particular, for $\theta=q_{\:J,\:p^{\:(1)}}^{\:(P)}$, if $n>n_{\:w,\:J}$, we have
\begin{align*}
&\quad\;\Big|\:\hat{F}_{\:J}\big(X(w,n),q_{\:J,\:p^{\:(1)}}^{\:(P)}\big)-\mathbb{E}_{\:P}\Big(\mathsf{1}_{\:J}\big(\tilde{x},q_{\:J,\:p^{\:(1)}}^{\:(P)}\big)\Big)\:\Big|\leq\epsilon_{\:p} \\
&\Leftrightarrow\Big|\:\hat{F}_{\:J}\big(X(w,n),q_{\:J,\:p^{\:(1)}}^{\:(P)}\big)-p^{\:(1)}\:\Big|\leq\epsilon_{\:p} \\
&\Rightarrow\:\hat{F}_{\:J}\big(X(w,n),q_{\:J,\:p^{\:(1)}}^{\:(P)}\big)\leq p^{\:(1)}+\epsilon_{\:p}<2\:\epsilon_{\:p}.
\end{align*}
Similarly, for $\theta=q_{\:J,\:p^{\:(2)}}^{\:(P)}$, if $n>n_{\:w,\:J}$, we have
\begin{align*}
&\quad\;\Big|\:\hat{F}_{\:J}\big(X(w,n),q_{\:J,\:p^{\:(2)}}^{\:(P)}\big)-\mathbb{E}_{\:P}\Big(\mathsf{1}_{\:J}\big(\tilde{x},q_{\:J,\:p^{\:(2)}}^{\:(P)}\big)\Big)\:\Big|\leq\epsilon_{\:p} \\
&\Leftrightarrow\Big|\:\hat{F}_{\:J}\big(X(w,n),q_{\:J,\:p^{\:(2)}}^{\:(P)}\big)-p^{\:(2)}\:\Big|\leq\epsilon_{\:p} \\
&\Rightarrow\:\hat{F}_{\:J}\big(X(w,n),q_{\:J,\:p^{\:(1)}}^{\:(P)}\big)\geq p^{\:(2)}-\epsilon_{\:p}>1-2\:\epsilon_{\:p}.
\end{align*}
It follows that if $n>n_{\:w,\:J}$, then
\begin{align*}
\hspace{-1em}0<\hat{F}_{\:J}\big(X(w,n),q_{\:J,\:p^{\:(1)}}^{\:(P)}\big)<2\:\epsilon_{\:p}<p<1-2\:\epsilon_{\:p}<\hat{F}_{\:J}\big(X(w,n),q_{\:J,\:p^{\:(2)}}^{\:(P)}\big)<1.
\end{align*}
Observe that $\hat{F}_{\:J}\big(X(w,n),\theta\big)$ is a continuous function of $\theta$. So, by the intermediate value theorem, we can find $\theta$ such that $\hat{F}_{\:J}\big(X(w,n),\theta\big)=p$. Recalling that $p\in(0,1)$ is arbitrary, we thus conclude that $p\in\hat{\mathcal{Q}}_{\:J}\big(X(w,n)\big)$ for any $p\in(0,1)$, as long as $n>n_{\:w,\:J}$, $w\in N'$, and $J>J_{\:p}$. Therefore, since $P(N)=0$, we conclude that $\lim\limits_{J\rightarrow\infty}\lim\limits_{|X|\rightarrow\infty}\hat{\mathcal{Q}}_{\:J}(X)\overset{a.s.}{=}(0,1)$. \\

(C)\ Given $p\in(0,1)$, let us consider arbitrary $\epsilon>0$. We know from part B of Lemma \ref{Quantile:lemma:population-asymptotic} that there exists $J_{\:1}\in\mathbb{N}$ such that $q_{\:J,\:p}^{\:(P)}$ exists for $J>J_{\:1}$, which implies that $\mathbb{E}_{\:P}\big(\mathsf{1}_{\:J}(\tilde{x},q_{\:J,\:p}^{\:(P)})\big)=p$. 

We also know from part B that there exists some $J_{\:2}\in\mathbb{N}$ such that, for $J>J_{\:2}$, we have a null set $N_{\:1,\:J}\subset \Omega$ such that, for $w\in N'_{\:1,\:J}$, there exists a $n_{\:w,\:J,\:1}\in\mathbb{N}$ such that $\hat{q}_{\:J,\:p}\big((X(w,n)\big)$ exists and $\hat{F}_{\:J}\Big(X(w,n),\hat{q}_{\:J,\:p}\big(X(w,n)\big)\Big)=p$. 

Next, part A implies that for any $J\in\mathbb{N}$, there exists a null set $N_{\:2,\:J}$ such that if $w\in N'_{\:2,\:J}$, then there exists $n_{\:w,\:J,\:2}$ such that for $n>n_{\:w,\:J,\:2}$, we will have that $\big|\:\hat{F}_{\:J}\big(X(w,n),\theta\big)-\mathbb{E}_{\:P}\big(\mathsf{1}_{\:J}(\tilde{x},\theta)\big)\:\big|<\nicefrac{\epsilon}{2}$ for any $\theta\in(0,1)$. 

Set $J_{\:0}=\max(J_{\:1},J_{\:2})$ and $N=\bigcup_{\:J>J_{\:0}}\big(N_{\:1,\:J}\cup N_{\:2,\:J}\big)$. Clearly $N$ is a null set. Now, for $w\in N'$ and $J>J_{\:0}$, define $n_{\:w,\:J}=\operatorname{max}(n_{\:w,\:J,\:1},n_{\:w,\:J,\:2})$. Then, for $n>n_{\:w,\:J}$, $w\in N'$ and $J>J_{\:0}$, and  we have for $\theta\in(0,1)$:
\begin{align*}
\big|\:\hat{F}_{\:J}\big(X(w,n),\theta\big)-\mathbb{E}_{\:P}\big(\mathsf{1}_{\:J}(\tilde{x},\theta)\big)\:\big|<\frac{\epsilon}{2}.
\end{align*}

So, we can take $\theta=\hat{q}_{\:J,\:p}\big(X(w,n),x)\big)$ and we will have that
\begin{align*}
\bigg|\:\hat{F}_{\:J}\Big(X(w,n),\hat{q}_{\:J,\:p}\big(X(w,n),x\big)\Big)-\mathbb{E}_{\:P}\bigg(\mathsf{1}_{\:J}\Big(\tilde{x},\hat{q}_{\:J,\:p}\big(X(w,n)\big)\Big)\bigg)\:\bigg|<\frac{\epsilon}{2}.
\end{align*}
Since $\hat{F}_{\:J}\Big(X(w,n),\hat{q}_{\:J,\:p}\big(X(w,n)\big)\Big)=p=\mathbb{E}_{\:P}\Big(\mathsf{1}_{\:J}\big(\tilde{x},q_{\:J,\:p}^{\:(P)}\big)\Big)$, we then have:
\begin{align*}
\Big|\:\mathbb{E}_{\:P}\Big(\mathsf{1}_{\:J}\big(\tilde{x},q_{\:J,\:p}^{\:(P)}\big)\Big)-\mathbb{E}_{\:P}\left(\mathsf{1}_{\:J}\Big(\tilde{x},\hat{q}_{\:J,\:p}\big(X(w,n)\big)\Big)\right)\:\Big|<\frac{\epsilon}{2}.
\end{align*}

Recall from the discussion following equation \ref{Quantile:proof:eq:c} that $\zeta^{\:(P)}_{\:J}(\theta)$ is the derivative of $\mathbb{E}_{\:P}\big(\mathsf{1}_{\:J}(\tilde{x},\theta)\big)$ with respect to $\theta$. Hence, by the fundamental theorem of calculus, we have:
\begin{align*}
\hspace{-1em}\mathbb{E}_{\:P}\Big(\mathsf{1}_{\:J}\big(\tilde{x},q_{\:J,\:p}^{\:(P)}\big)\Big)-\mathbb{E}_{\:P}\left(\mathsf{1}_{\:J}\Big(\tilde{x},\hat{q}_{\:J,\:p}\big(X(w,n)\big)\Big)\right)=\int_{\hat{q}_{\:J,\:p}\big(X(w,n)\big)}^{q_{\:J,\:p}^{\:(P)}}\zeta^{\:(P)}_{\:J}(\theta)\,d\theta.
\end{align*}
It thus follows that for $n>n_{\:w,\:J}$, $w\in N'$ and $J>J_{\:0}$, $\Big|\:\int_{\hat{q}_{\:J,\:\alpha}\big(X(w,n)\big)}^{q_{\:J,\:p}^{\:(P)}}\zeta^{\:(P)}_{\:J}(x,h)\,dh\:\big|<\frac{\epsilon}{2}$.

Recall that by Assumption \ref{Quantile:assumption:positive-density}, $F^{\:(P)}(\theta)$ has the derivative $f^{\:(P)}(\theta)$ for $\theta\in(0,1)$. Therefore, for $n>n_{\:w,\:J}$, $w\in N'$ and $J>J_{\:0}$, by the fundamental theorem of calculus, we also have:
\begin{align*}
&\hspace{-1em}\Big|\:F^{\:(P)}\big(q_{\:J,\:p}^{\:(P)}\big)-F^{\:(P)}\Big(\hat{q}_{\:J,\:p}\big(X(w,n)\big)\Big)\:\Big| \\
&=\Big|\:\int_{\hat{q}_{\:J,\:p}\big(X(w,n)\big)}^{q_{\:J,\:p}^{\:(P)}}f^{\:(P)}(\theta)\,d\theta\:\Big| \\
&\leq\Big|\:\int_{\hat{q}_{\:J,\:p}\big(X(w,n)\big)}^{q_{\:J,\:p}^{\:(P)}}\big(f^{\:(P)}(\theta)-\zeta^{\:(P)}_{\:J}(\theta)\big)\,d\theta\:\Big|+\Big|\:\int_{\hat{q}_{\:J,\:p}\big(X(w,n)\big)}^{q_{\:J,\:p}^{\:(P)}}\zeta^{\:(P)}_{\:J}(\theta)\,d\theta\:\Big| \\
&\leq\int_{\hat{q}_{\:J,\:p}\big(X(w,n)\big)}^{q_{\:J,\:p}^{\:(P)}}\big|\:f^{\:(P)}(\theta)-\zeta^{\:(P)}_{\:J}(\theta)\:\big|\,d\theta+\frac{\epsilon}{2} \\
&\leq\int_{\hat{q}_{\:J,\:p}\big(X(w,n)\big)}^{q_{\:J,\:p}^{\:(P)}}\rho^{\:(P)}_{\:J}\,d\theta+\frac{\epsilon}{2}=\rho_{\:J}^{\:(P)}\cdot\big|\:\hat{q}_{\:J,\:p}\big(X(w,n)\big)-q_{\:J,\:p}^{\:(P)}\:\big|+\frac{\epsilon}{2}.
\end{align*}
Since $\big\{\hat{q}_{\:J,\:p}\big(X(w,n)\big),q_{\:J,\:p}^{\:(P)}\big\}\in(0,1)$, we conclude from this expression that
\begin{align*}
\Big|\:F^{\:(P)}\big(q_{\:J,\:p}^{\:(P)}\big)-F^{\:(P)}\Big(\hat{q}_{\:J,\:p}\big(X(w,n)\big)\Big)\:\Big|<\rho_{\:J}^{\:(P)}+\frac{\epsilon}{2}.
\end{align*}

Now, suppose $\hat{q}_{\:J,\:p}\big(X(w,\cdot)\big)$ is a limit point with respect to the set $\hat{\mathcal{G}}_{\:J,\:p}(w)$. Then there exists a sub-sequence $\{n_{\:k}\}_{k\in\mathbb{N}}$ such that
\begin{align*} 
\lim\limits_{k\to\infty}\hat{q}_{\:J,\:p}\big(X(w,n_{\:k})\big)\to\hat{q}_{\:J,\:p}\big(X(w,\cdot)\big).
\end{align*}
By continuity of probability, it then follows directly that
\begin{align*}
F^{\:(P)}\Big(\hat{q}_{\:J,\:p}\big(X(w,n_{\:k})\big)\Big)\to F^{\:(P)}\Big(\hat{q}_{\:J,\:p}\big(X(w,\cdot)\big)\Big).
\end{align*}
and moreover there exists some $k_{\:w,\:J}\in\mathbb{N}$ such that for $k>k_{\:w,\:J}$ we have
\begin{align*}
\left|\:F^{\:(P)}\Big(\hat{q}_{\:J,\:p}\big(X(w,\cdot)\big)\Big)-F^{\:(P)}\Big(\hat{q}_{\:J,\:p}\big(X(w,n_{\:k})\big)\Big)\:\right|<\frac{\epsilon}{2}.
\end{align*}

Hence if we choose $k$ such that $k>k_{\:w,\:J}$ and $n_k>\max(n_{\:w,\:J,\:1},n_{\:w,\:J,\:2})$, then
\begin{align*}
&\hspace{-1em}\Big|\:F^{\:(P)}\Big(\hat{q}_{\:J,\:p}\big(X(w,\cdot)\big)\Big)-F^{\:(P)}\big(q_{\:J,\:p}^{\:(P)}\big)\:\Big| \\
&\leq\Big|\:F^{\:(P)}\Big(\hat{q}_{\:J,\:p}\big(X(w,\cdot)\big)\Big)-F^{\:(P)}\Big(\hat{q}_{\:J,\:p}\big(X(w,n_{\:k})\big)\Big)\:\Big| \\
&\hspace{-1em}+\Big|\:F^{\:(P)}\big(q_{\:J,\:p}^{\:(P)}\big)-F^{\:(P)}\Big(\hat{q}_{\:J,\:p}\big(X(w,n_{\:k})\big)\Big)\:\Big| \\
&<\ \Big(\frac{\epsilon}{2}+\rho_{\:J}^{\:(P)}\Big)+\frac{\epsilon}{2}.
\end{align*}
Finally, since $\epsilon$ is arbitrary, we must then have the claimed result that for any $w\in{N}'$
\begin{align*}
\Big|\:F^{\:(P)}\Big(\hat{q}_{\:J,\:p}\big(X(w,\cdot)\big)\Big)-F^{\:(P)}\big(q_{\:J,\:p}^{\:(P)}\big)\:\Big|\leq\rho_{\:J}^{\:(P)}.
\end{align*}
\end{proof}

\section{Supplementary material to complement section 4.1.2:}
As a first guess, we may be interested to know how our algorithm perform against Binning for a Normally distributed data. We consider Normal random-variable $\tilde{x}$ with mean $0$ and variance $1$, having distribution function $\Phi$. In this case, the data is random-permutation of $\{\Phi^{-1}(\nicefrac{1}{N}),\Phi^{-1}(\nicefrac{2}{N}),\dots,\Phi^{-1}(1-\nicefrac{1}{N})\}$. Remember a $\operatorname{Normal}(0,1)$ distribution has support $(-\infty,\infty)$. So, we can't directly apply Theorem \ref{Quantile:thm:_prop} for this data, because Assumptions \ref{Quantile:assumption:positive-density} and \ref{Quantile:assumption:uniform-convergence} requires the underlying random-variable to have a support of $(0,1)$.

We may consider to scale the data to $(0,1)$, first we pick a constant $M>\Phi^{-1}(1-\nicefrac{1}{N})\approx 6.174$, also ensuring $-M<\Phi^{-1}(\nicefrac{1}{N})$. Then, with the transformation: $\tilde{x}'\rightarrow\nicefrac{(\tilde{x}+M)}{2M}$, we can ensure that all the observations are inside $(0,1)$. The transformed variable can be identified as the Normal random-variable $\tilde{x}'$ with mean $0.5$ and variance $\nicefrac{1}{2M}$. Now Beta distribution is a natural choice for a distribution function with support $(0,1)$, and we may justify picking the parameters of a Beta distribution that resembles a $\operatorname{Normal}(0.5,\nicefrac{1}{2M})$ distribution, in fact, $\operatorname{Beta}(19,19)$ distribution has close first two moments ( mean = $0.5$, std = $\nicefrac{1}{2\sqrt{39}}$) if we restrict our choice of Beta parameters to integers. In this case, our generated data are a random permutation of the numbers ${F^{\:(P)}}^{\text{-}1}(\nicefrac{1}{N})$, ${F^{\:(P)}}^{\text{-}1}(\nicefrac{2}{N})$, $\dots$, ${F^{\:(P)}}^{\text{-}1}(\nicefrac{N-1}{N})$. Here ${F^{\:(P)}}$ is the cumulative distribution function of a $\operatorname{Beta(19,19)}$ random-variable with domain $(0,1)$. 

\subsection{Assumption \ref{Quantile:assumption:uniform-convergence} for \texttt{Beta(19,19)} distribution:}\label{sec:verify_beta(19,19)}
The data $X$ is a random permutation of $\operatorname{Beta}(19,19)$ population quantiles: $F^{-1}(\nicefrac{1}{N}),\dots,F^{-1}(\nicefrac{N-1}{N})$, here $F(t)=\int_0^tf(s)ds$, where:
\begin{align*}
f(t)=\frac{1}{\mathcal{B}(19,19)}\cdot t^{18}\cdot(1-t)^{18}\text{ for }t\in(0,1).
\end{align*}
For a random variable $\tilde{x}\sim\operatorname{Beta}(19,19)$ and arbitrary $x\in(0,1)$ we have the following:
\begin{align*}
&\hspace{-2em}\:\mathbb{E}_{\:P}\Big(\cos\big((2j-1)(\tilde{x}-x)\big)\Big)\\
&=\mathbb{E}_{\:P}\Big(\cos\big((2j-1)\tilde{x}\big)\Big)\cdot\cos\big((2j-1)x\big) \\ &\hspace{2em}+\mathbb{E}_{\:P}\Big(\sin\big((2j-1)\tilde{x}\big)\Big)\cdot\sin\big((2j-1)x\big) \\
&=\frac{\cos\big((2j-1)x\big)}{\mathcal{B}(19,19)}\cdot\int_0^1t^{18}\cdot(1-t)^{18}\cdot\cos\big((2j-1)t\big)dt \\
&\hspace{2em}+\frac{\sin\big((2j-1)x\big)}{\mathcal{B}(19,19)}\cdot\int_0^1t^{18}\cdot(1-t)^{18}\cdot\sin\big((2j-1)t\big)dt.
\end{align*}
By Cauchy-Schwartz inequality on $L^2$, we have:
\begin{align*}
&\hspace{-2em}\Big|\:\int_0^1t^{18}\cdot(1-t)^{18}\cdot e^{-\mathrm{i}(2j-1)t}\:dt\:\Big|^2 \\
&=\Big|\:\int_0^1\big(t^{17}\cdot(1-t)^{17}\Big)\cdot \Big(\overline{t\cdot(1-t)\cdot e^{\mathrm{i}(2j-1)t}}\Big)\:dt\:\Big|^2 \\
&\leq\Big|\:\int_0^1t^{34}\cdot(1-t)^{34}dt\:\Big|\cdot\Big|\:\int_0^1t^2\cdot(1-t)^2\cdot e^{\mathrm{i}\big(2(2j-1)t\big)}\:dt\:\Big| \\
&=\Big|\:\int_0^1t^{34}\cdot(1-t)^{34}dt\:\Big|\cdot\Big|\:\int_0^1t^2\cdot(1-t)^2\cdot \cos\big((4j-2)t\big)\:dt \\
&\hspace{2.5in}+\mathrm{i}\int_0^1t^2\cdot(1-t)^2\cdot \sin\big((4j-2)t\big)\:dt\:\Big|
\end{align*}
By AM-GM inequality, we have $t\cdot(1-t)\leq\nicefrac{1}{4}$ for $t\in(0,1)$. Hence:
\begin{align*}
\Big|\:\int_0^1t^{34}\cdot(1-t)^{34}dt\:\Big|\leq\nicefrac{1}{4^{34}}.
\end{align*}
Also we can easily verify the following identities:
\begin{align*}
&\int_0^1t^2\cdot(1-t)^2\cdot\cos\big((4j-2)t\big)dt=-\frac{\sin(4j-2)}{4(2j-1)^3}-\frac{3\big(1+\cos(4j-2)\big)}{4(2j-1)^4}+\frac{3\sin(4j-2)}{4(2j-1)^5}. \\
&\int_0^1t^2\cdot(1-t)^2\cdot\sin\big((4j-2)t\big)dt=\frac{1-\cos(4j-2)}{4(2j-1)^3}-\frac{3\sin(4j-2)}{4(2j-1)^4}-\frac{3\big(1-\cos(4j-2)\big)}{4(2j-1)^5}.
\end{align*}
In other words:
\begin{align*}
&\int_0^1t^2\cdot(1-t)^2\cdot\cos\big((4j-2)t\big)dt=O\Big(\nicefrac{1}{(2j-1)^3}\Big),\text{ and} \\ &\int_0^1t^2\cdot(1-t)^2\cdot\sin\big((4j-2)t\big)dt=O\Big(\nicefrac{1}{(2j-1)^3}\Big).
\end{align*}
Then, from the previous equation, we have a constant $k_{\:0}$, such that:
\begin{align*}
&\quad\Big|\:\int_0^1t^{18}\cdot(1-t)^{18}\cdot e^{-\mathrm{i}(2j-1)t}\:dt\:\Big|^2\leq \frac{k_{\:0}}{(2j-1)^3} \\
&\Leftrightarrow \Big|\:\int_0^1t^{18}\cdot(1-t)^{18}\cdot e^{-\mathrm{i}(2j-1)t}\:dt\:\Big|\leq \frac{\sqrt{k_{\:0}}}{(2j-1)^{\nicefrac{3}{2}}} \\
&\Leftrightarrow \Big|\:\int_0^1t^{18}\cdot(1-t)^{18}\cdot \cos\big((2j-1)t\big)\:dt\:\Big|\leq \frac{\sqrt{k_{\:0}}}{(2j-1)^{\nicefrac{3}{2}}} \text{ and,}\\
&\quad\: \Big|\:\int_0^1t^{18}\cdot(1-t)^{18}\cdot \sin\big((2j-1)t\big)\:dt\:\Big|\leq \frac{\sqrt{k_{\:0}}}{(2j-1)^{\nicefrac{3}{2}}}.
\end{align*}
which, in turn, implies:
\begin{align*}
&\hspace{-2em}\:\Big|\:\mathbb{E}_{\:P}\Big(\cos\big((2j-1)(\tilde{x}-x)\big)\Big)\:\Big|\\
&\leq\frac{\big|\:\cos\big((2j-1)x\big)\:\big|}{\mathcal{B}(19,19)}\cdot\Big|\:\int_0^1t^{18}\cdot(1-t)^{18}\cdot\cos\big((2j-1)t\big)dt\:\Big| \\
&\hspace{2em}+\frac{\big|\:\sin\big((2j-1)x\big)\:\big|}{\mathcal{B}(19,19)}\cdot\Big|\:\int_0^1t^{18}\cdot(1-t)^{18}\cdot\sin\big((2j-1)t\big)dt\:\Big| \\
&\leq\frac{1}{\mathcal{B}(19,19)}\cdot\frac{\sqrt{k_{\:0}}}{(2j-1)^{\nicefrac{3}{2}}}+\frac{1}{\mathcal{B}(19,19)}\cdot\frac{\sqrt{k_{\:0}}}{(2j-1)^{\nicefrac{3}{2}}} \\
&\Leftrightarrow \Big|\:\mathbb{E}_{\:P}\Big(\cos\big((2j-1)(\tilde{x}-x)\big)\Big)\:\Big|=O\Big(\nicefrac{1}{(2j-1)^{\nicefrac{3}{2}}}\Big).
\end{align*}
By an application of Weierstrass's M-test, we can conclude that the sequence of partial-sums: 

\begin{align*}
\mathbb{E}_P\Big(\frac{\sin\big(2J(\tilde{x}-x)\big)}{\pi\sin(\tilde{x}-x)}\Big)=\sum_{j=1}^J\mathbb{E}_P\Big(\cos\big((2j-1)(\tilde{x}-x)\big)\Big)
\end{align*}
uniformly converges to a limit function $g(x)$. From \citet[Theorem~9.13]{Book_1957_Apostol}, we must have $g(x)=f(x)$: the density of $\operatorname{Beta}(19,19)$ distribution. And so Assumption \ref{Quantile:assumption:positive-density} and Assumption \ref{Quantile:assumption:uniform-convergence} are satisfied, and we can expect $\hat{q}_{\:J,\:p}(X)$ to approach  $\hat{q}_{\:p}(X)$ as $|X|$ and then $J$ grow large.

\subsection{Assumption \ref{Quantile:assumption:uniform-convergence} for \texttt{Uniform(0,1)} distribution:}\label{sec:verify_uniform(0,1)}
Here data $X$ is a permutation of the numbers $\nicefrac{1}{N}$, $\nicefrac{2}{N}$, $\dots$, $\nicefrac{N-1}{N}$. For the Uniform distribution on the unit interval, we have the expression for the uniform density function: $f(x)=\mathsf{1}(0<x<1)=\mathsf{1}(0<x)-\mathsf{1}(1\leq x)$. Now, observe that:
\begin{align*}
\mathbb{E}_P\Big(\cos\big((2j-1)(\tilde{x}-x)\big)\Big)&=\int_{0}^{1}\cos\big((2j-1)(t-x)\big)~dt \\
&=\frac{\sin\big((2j-1)(1-x)\big)}{2j-1}-\frac{\sin\big((2j-1)(0-x)\big)}{2j-1}
\end{align*}
Then, we have:
\begin{align*}
\mathbb{E}_P\Big(\frac{2\sin\big(2J(\tilde{x}-x)\big)}{\pi\sin(\tilde{x}-x)}\Big)&=\sum_{j=1}^J\mathbb{E}_P\Big(\cos\big((2j-1)(\tilde{x}-x)\big)\Big) \\
&=\sum_{j=1}^J\Big(\frac{\sin\big((2j-1)(1-x)\big)}{2j-1}-\frac{\sin\big((2j-1)(0-x)\big)}{2j-1}\Big).
\end{align*}
Now, the uniform density function $f(x)=\mathsf{1}(0<x)-\mathsf{1}(1\leq x)$ admits the Fourier series expansion $\sum_{j=1}^{\infty}\Big(\frac{\sin\big((2j-1)(1-x)\big)}{2j-1}-\frac{\sin\big((2j-1)(0-x)\big)}{2j-1}\Big)$, point-wise converging to its limit in $(0,1)$. It also follows that $\mathbb{E}_P\Big(\frac{2\sin\big(2J(\tilde{x}-x)\big)}{\pi\sin(\tilde{x}-x)}\Big)$ converges uniformly to uniform-density function for any fixed interior region of the unit interval (see Lemma \ref{Quantile:lemma:indicator-uniform-convergence} part A in Appendix \ref{proof:eps:prop:SOT-FAS-WEP}).

\section{Proof of Theorem \ref{Lowess:thm:prop}}\label{proof:Lowess:thm:strong_consistency}

We begin by recalling that in Theorem \ref{Lowess:thm:prop}, $\hat{h}_{\:J,\:\alpha,\:x}$ is defined to be a solution for $h$ to the implicit equation given by $\hat{F}_{\:J,\:x}(X,h)=\alpha$ (in contrast to the setting employed in Theorem \ref{Quantile:thm:_prop}). While such a solution may fail to exist for arbitrary values of $\alpha\in(0,1)$ and $J$, note that $\hat{F}_{\:J,\:x}(X,0)=0$, and that in practice $\alpha$ will typically be chosen close to $0$. Therefore, since $\hat{F}_{\:J,\:x}(X,h)$ is continuous in $h$, we can expect that $\hat{h}_{\:J,\:\alpha,\:x}(X)$ may exist even for small values of $J$; indeed, Lemma \ref{Lowess:lemma:sample_asymptotic} below guarantees that $\hat{h}_{\:J,\:\alpha,\:x}(X)$ both exists and is eventually unique almost surely $[P]$ for any fixed $\alpha\in(0,1)$ as $J \to \infty$.

\subsection{Auxiliary results to prove Theorem \ref{Lowess:thm:prop}}\label{sec:Lowess:proof:aux}

Recall that for our given probability triple $\big(\Omega,\mathcal{F},P\big)$, we have the random variable $\tilde{x}:\Omega\to(0,1)$. Given $x\in(0,1)$ and the parameter $h\in\Theta$, with $\Theta=(0,1)$ being the parameter space, let $F_{\:x}^{\:(P)}(h):=P(x-h\leq\tilde{x}\leq x+h)$. If Assumption \ref{Lowess:assumption:positive-density} holds, then $F_{\:x}^{\:(P)}(h)=\int_{x-h}^{x+h}f^{\:(P)}(t)\,dt$ for $h\in(0,1)$, and observe that $F_{\:x}^{\:(P)}(h)$ has derivative $f_{\:x}^{\:(P)}(h)=f^{\:(P)}(x-h)+f^{\:(P)}(x+h)$. 

First, we have the following two results:

\begin{lemma}\label{Lowess:lemma:inverse_continuity}
Suppose Assumption \ref{Lowess:assumption:positive-density} holds. Then: \\
(A)\ Given $x$, let $h_{\:n}\in(0,1)$ for $n\in\mathbb{N}$, and $h\in(0,1)$, such that $\lim\limits_{n\to\infty}F_{\:x}^{\:(P)}(h_{\:n})\to F_{\:x}^{\:(P)}(h)$. Then $\lim\limits_{n\to\infty}h_{\:n}\to h$. \\
(B)\ For any fixed $\alpha \in (0,1)$, there exists a unique solution $h^{\:(P)}_{\:\alpha,\:x}\in(0,1)$ for $h$ to the implicit equation $F_{\:x}^{\:(P)}(h)=\alpha$.
\end{lemma}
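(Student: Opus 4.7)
The plan is to closely mirror the proof of Lemma \ref{Quantile:lemma:inverse_continuity}, which handled the analogous statements for the one-sided CDF $F^{\:(P)}$. The new subtlety is that $F_{\:x}^{\:(P)}$ need not be strictly monotone on all of $(0,1)$: once $h \geq \max(x, 1-x)$, the interval $[x-h, x+h]$ covers the support of $\tilde{x}$ and $F_{\:x}^{\:(P)}(h) = 1$ identically. Strict monotonicity therefore holds only on $\bigl(0, \max(x, 1-x)\bigr)$, which is however exactly the regime in which any $\alpha \in (0,1)$-level solution must live.

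First I would record two basic properties. Continuity of $F_{\:x}^{\:(P)}$ on $(0,1)$ follows from absolute continuity of the Lebesgue integral applied to $F_{\:x}^{\:(P)}(h) = \int_{[x-h,\,x+h] \cap (0,1)} f^{\:(P)}(t)\,dt$. Strict monotonicity on $\bigl(0, \max(x,1-x)\bigr)$ follows because, for $0 < h_{\:1} < h_{\:2}$ in this interval, $F_{\:x}^{\:(P)}(h_{\:2}) - F_{\:x}^{\:(P)}(h_{\:1})$ equals the integral of $f^{\:(P)}$ over the symmetric difference $[x-h_{\:2}, x+h_{\:2}] \setminus [x-h_{\:1}, x+h_{\:1}]$---a set of positive Lebesgue measure inside $(0,1)$---combined with $f^{\:(P)} > 0$ almost everywhere on $(0,1)$ by Assumption \ref{Lowess:assumption:positive_density}. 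Coupled with $\lim_{h \to 0^+} F_{\:x}^{\:(P)}(h) = 0$ and $F_{\:x}^{\:(P)}(h) = 1$ on $[\max(x,1-x), 1)$, these facts reduce the lemma to classical one-dimensional analysis.

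Part (A) then becomes the continuity of the inverse of a strictly monotone continuous function restricted to $\bigl(0, \max(x,1-x)\bigr)$: any subsequential limit of $\{h_{\:n}\}$ must equal $h$, since otherwise continuity of $F_{\:x}^{\:(P)}$ together with strict monotonicity would force $\lim F_{\:x}^{\:(P)}(h_{\:n})$ to differ from $F_{\:x}^{\:(P)}(h)$, contradicting the hypothesis. For part (B), the intermediate value theorem applied on $\bigl[0, \max(x,1-x)\bigr]$ produces at least one $h^{\:(P)}_{\:\alpha,\:x} \in \bigl(0, \max(x,1-x)\bigr)$ with $F_{\:x}^{\:(P)}\bigl(h^{\:(P)}_{\:\alpha,\:x}\bigr) = \alpha$, and uniqueness follows immediately from strict monotonicity on the same interval, exactly as in the proof of Lemma \ref{Quantile:lemma:inverse_continuity}(B). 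The main obstacle is keeping careful track of the plateau region; in every downstream use, however, the target level $\alpha$ lies in $(0,1)$, so that the solution automatically lives in the strictly monotone regime and the plateau is harmless.
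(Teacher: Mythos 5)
Your proof proposal is essentially correct and in fact more careful than the paper's own proof, which contains a subtle error that you have identified and fixed. The paper's appendix proof asserts that $F_{\:x}^{\:(P)}(h)=\int_{x-h}^{x+h}f^{\:(P)}(t)\,dt$ is strictly monotone on all of $(0,1)$ and invokes continuity of the inverse of a strictly monotone function, exactly mirroring Lemma~\ref{Quantile:lemma:inverse_continuity}. As you correctly observe, this claim is false: once $h\geq\max(x,1-x)$ the interval $[x-h,\,x+h]$ covers $(0,1)$, so $F_{\:x}^{\:(P)}$ plateaus at $1$. (Indeed, as stated, part (A) is not literally true---take $x=\tfrac12$, $h=0.7$, and $h_{\:n}$ oscillating between $0.6$ and $0.8$; then $F_{\:x}^{\:(P)}(h_{\:n})\equiv 1=F_{\:x}^{\:(P)}(h)$ yet $h_{\:n}$ does not converge.) Your proposal repairs this by restricting the strict-monotonicity argument to $\bigl(0,\max(x,1-x)\bigr)$, verifying that the symmetric difference $[x-h_{\:2},x+h_{\:2}]\setminus[x-h_{\:1},x+h_{\:1}]$ intersects $(0,1)$ in a set of positive Lebesgue measure in that regime, and correctly noting that every downstream invocation (Lemma~\ref{Lowess:lemma:exact_convergence}, Lemma~\ref{Lowess:lemma:population_asymptotic}, and the proof of Theorem~\ref{Lowess:thm:strong_consistency}) applies only to $h=h_{\:\alpha,\:x}^{\:(P)}$ solving $F_{\:x}^{\:(P)}(h)=\alpha<1$, which forces $h<\max(x,1-x)$. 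Part (B) of your proposal matches the paper (intermediate value theorem plus strict monotonicity / the integral argument for uniqueness). In short: same strategy, but you caught and properly handled an imprecision the authors overlooked.
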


\begin{proof}
(A)\ We have $F_{\:x}^{\:(P)}(h)=\int_{x-h}^{x+h}f^{\:(P)}(t)\,dt$, since by Assumption \ref{Lowess:assumption:positive-density}, $f^{\:(P)}>0\:[P]$ in $(0,1)$, and so we must have $F_{\:x}^{\:(P)}(h)$ continuous and strictly monotone for $h\in(0,1)$. Then part A follows from the continuity of the inverse of a continuous function.

(B)\ Since $F_{\:x}^{\:(P)}(h)=P(x-h\leq\tilde{x}\leq x+h)$, we have $\lim\limits_{h\to 0}F_{\:x}^{\:(P)}(h)=0$. Also, since $(0,1)\subseteq(x-1,x+1)$ for $x\in\mathbb{X}$, we have $F_{\:x}^{\:(P)}(1)=1$. Given $\alpha\in(0,1)$, since $F_{\:x}^{\:(P)}$ is continuous, by the intermediate value theorem there exists at least one value $h^{\:(P)}_{\:\alpha,\:x}\in(0,1)$, such that $F_{\:x}\big(h^{\:(P)}_{\:\alpha,\:x}\big)=\alpha$. To show the unicity of $h^{\:(P)}_{\:\alpha,\:x}$, assume to the contrary that there exist two distinct values $h^{\:(P)}_{\:\alpha,\:x,\:1} < h^{\:(P)}_{\:\alpha,\:x,\:2}$ satisfying $F_{\:x}^{\:(P)}(h)=\alpha$. Then we have:
\begin{align*}
F_{\:x}\big(h^{\:(P)}_{\:\alpha,\:x,\:1}\big)-F_{\:x}\big(h^{\:(P)}_{\:\alpha,\:x,\:2}\big)=\alpha-\alpha=0\:\Leftrightarrow\:\int_{h^{\:(P)}_{\:\alpha,\:x,\:1}}^{h^{\:(P)}_{\:\alpha,\:x,\:2}}f_{\:x}^{\:(P)}(t)\,dt=0.
\end{align*}
This contradicts our assumption that $f^{\:(P)}>0\:[P]$, since $f^{\:(P)}>0\:[P]\Leftrightarrow f_{\:x}^{\:(P)}>0\:[P]$ for $x\in(0,1)$.
\end{proof}

\begin{lemma}\label{Lowess:lemma:exact_convergence}
Suppose Assumption \ref{Lowess:assumption:positive-density} holds. Then: \\
(A)\  $\hat{h}_{\:\alpha,\:x}(X)$ exists almost surely $[P]$. \\
(B)\  Given $x\in(0,1)$, for any fixed $\alpha\in(0,1)$, $\lim\limits_{|X|\to\infty}\hat{h}_{\:\alpha,\:x}(X)\overset{a.s.}{=}h^{\:(P)}_{\:\alpha,\:x}$.
\end{lemma}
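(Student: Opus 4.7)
The plan is to parallel Lemma \ref{Quantile:lemma:exact_convergence} essentially verbatim, swapping the role of the empirical CDF $\hat{F}(X,\theta)$ for the local indicator-count $\hat{F}_{\:x}(X,h)=\nicefrac{1}{|X|}\sum_{i\in I}\mathsf{1}(x-h\leq x_{\:i}\leq x+h)$. The key observation is that $\hat{F}_{\:x}(X,h)=\hat{F}(X,x+h)-\hat{F}\big(X,(x-h)^-\big)$, which lets me import the classical Glivenko--Cantelli theorem together with Lemma \ref{Lowess:lemma:inverse_continuity} just proved.

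For part (A), I first argue that, almost surely $[P]$, the observations $\{x_{\:i}\}_{i\in I}$ are mutually distinct and all differ from $x$; the argument is identical to part (A) of Lemma \ref{Quantile:lemma:exact_convergence}, using a countable union of null events of the form $\{x_{\:i}=x_{\:i'}\}$ and $\{x_{\:i}=x\}$, each of which has $P$-measure zero by absolute continuity under Assumption \ref{Lowess:assumption:positive_density}. On this full-measure event, the map $h\mapsto\hat{F}_{\:x}(X,h)$ is a non-decreasing right-continuous step function with jumps of height $1/|X|$ at the points $\{|x_{\:i}-x|:i\in I\}$, satisfying $\hat{F}_{\:x}(X,0^+)=0$ and $\hat{F}_{\:x}(X,h)=1$ for $h\geq\max(x,1-x)$ (since then $(x-h,x+h)\supset(0,1)$). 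Hence for any $\alpha\in(0,1)$ there exists $h\in(0,1)$ with $\hat{F}_{\:x}(X,h^-)\leq\alpha\leq\hat{F}_{\:x}(X,h)$, giving existence of $\hat{h}_{\:\alpha,\:x}(X)$.

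For part (B), I combine Glivenko--Cantelli with the defining inequality for $\hat{h}_{\:\alpha,\:x}(X)$. Since $\sup_{\theta\in(0,1)}|\hat{F}(X,\theta)-F^{\:(P)}(\theta)|\overset{a.s.}{\to}0$, the triangle inequality applied to the decomposition above gives $\sup_{h\in(0,1)}|\hat{F}_{\:x}(X,h)-F_{\:x}^{\:(P)}(h)|\overset{a.s.}{\to}0$. Combined with $|\hat{F}_{\:x}\big(X,\hat{h}_{\:\alpha,\:x}(X)\big)-\alpha|\leq 1/|X|$, which follows from distinctness of the $x_{\:i}$'s and the definition of $\hat{h}_{\:\alpha,\:x}(X)$ exactly as in the quantile case, I obtain $F_{\:x}^{\:(P)}\big(\hat{h}_{\:\alpha,\:x}(X)\big)\to\alpha=F_{\:x}^{\:(P)}\big(h^{\:(P)}_{\:\alpha,\:x}\big)$ almost surely, and then applying part (A) of Lemma \ref{Lowess:lemma:inverse_continuity} (continuity of the inverse of $F_{\:x}^{\:(P)}$) yields $\hat{h}_{\:\alpha,\:x}(X)\to h^{\:(P)}_{\:\alpha,\:x}$ almost surely as $|X|\to\infty$.

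The main obstacle I expect is bookkeeping rather than any genuinely new analytic content: the function $\hat{F}_{\:x}(X,\cdot)$ is not strictly increasing and one must be careful about one-sided limits at each jump, so the inequality $|\hat{F}_{\:x}(X,\hat{h}_{\:\alpha,\:x}(X))-\alpha|\leq 1/|X|$ needs to be justified by observing that the jump size is exactly $1/|X|$ under distinctness, and that $\alpha$ lies in the closed interval $[\hat{F}_{\:x}(X,h^-),\hat{F}_{\:x}(X,h)]$ by construction. Beyond that, the argument is a direct transcription of the quantile case and should present no further difficulty.
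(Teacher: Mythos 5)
Your proposal follows essentially the same route as the paper's proof: part (A) via an almost-sure distinctness argument, and part (B) via Glivenko--Cantelli together with Lemma \ref{Lowess:lemma:inverse_continuity}. The explicit decomposition $\hat{F}_{\:x}(X,h)=\hat{F}(X,x+h)-\hat{F}\big(X,(x-h)^-\big)$ that you use to import the Glivenko--Cantelli uniform bound, together with the continuity of $F^{\:(P)}$, is a useful clarification of a step the paper states tersely.

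One small error in your bookkeeping: on the almost-sure event where the $x_{\:i}$ are mutually distinct and all differ from $x$, the jumps of $h\mapsto\hat{F}_{\:x}(X,h)$ need not have height $1/|X|$. Two observations $x_{\:i}\neq x_{\:j}$ can be symmetric about $x$, that is $|x_{\:i}-x|=|x_{\:j}-x|$ with $x_{\:i}+x_{\:j}=2x$, in which case the jump at that value of $h$ has height $2/|X|$. Consequently your claimed inequality $|\hat{F}_{\:x}(X,\hat{h}_{\:\alpha,\:x}(X))-\alpha|\leq 1/|X|$ does not follow from the event you construct. You can repair this either by enlarging the null set to also exclude $\{x_{\:i}+x_{\:j}=2x\}$ for $i\neq j$ (each such event is $P$-null by absolute continuity, and there are countably many), or simply by using the weaker bound $2/|X|$ as the paper does. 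Either fix is harmless since the bound still tends to $0$ as $|X|\to\infty$, so your overall argument stands.
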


\begin{proof}
(A)\ If Assumption \ref{Lowess:assumption:positive-density} holds, by part A of Lemma \ref{Quantile:lemma:exact-convergence}, we know that $X$ has all distinct elements almost surely $[P]$. So we know in turn that $X$ has at least $\big\lceil\:\alpha\cdot|X|\:\big\rceil+1$ distinct elements almost surely $[P]$. So, given $x\in(0,1)$, we will be able to find at least $\big\lceil\:\alpha\cdot|X|\:\big\rceil$ elements in $(0,1)$ different than $x$ almost surely $[P]$. Thus we will be able to find $h\in(0,1)$ such that $\hat{F}_{\:x}(X,h^{\text{-}})\leq\alpha\leq\hat{F}_{\:x}(X,h)$.

(B)\ Let $\epsilon>0$ be given. By the Glivenko--Cantelli theorem, we have the result that  $\sup_{\theta\in(0,1)}\big|\:\hat{F}_{\:x}(X,h)-F_{\:x}^{\:(P)}(h)\:\big|\overset{a.s.}{=}0$. In particular for $h=\hat{h}_{\:\alpha,\:x}(X)$, we also have $\big|\:\hat{F}_{\:x}\big(X,\hat{h}_{\:\alpha,\:x}(X)\big)-F_{\:x}^{\:(P)}\big(\hat{h}_{\:\alpha,\:x}(X)\big)\:\big|\overset{a.s.}{=}0$. Hence, there exists a null set $N_{\:1}\subset\Omega$, such that for each $w\in{N_{\:1}}'$, there exists an $n_{\:w,\:1}$ such that for $n>n_{\:w,\:1}$ we have that $\Big|\:\hat{F}_{\:x}\Big(X(w,n),\hat{h}_{\:\alpha,\:x}\big(X(w,n)\big)\Big)-F^{\:(P)}\Big(\hat{h}_{\:\alpha,\:x}\big(X(w,n)\big)\Big)\:\Big|<\nicefrac{\epsilon}{2}$.

Now, from our definition of $\hat{h}_{\:\alpha,\:x}(X)$, we know that $\hat{h}_{\:\alpha,\:x}(X)$ satisfies the expression $\hat{F}_{\:x}\big(X,\hat{h}_{\:\alpha,\:x}(X)^{\text{-}}\big)\leq \alpha\leq \hat{F}_{\:x}\big(X,\hat{h}_{\:\alpha,\:x}(X)\big)$.
Observe that if the elements of $X$ are distinct, then for any $h\in(0,1)$, we have $\big|\hat{F}_{\:x}(X,h^{\text{-}})-\hat{F}_{\:x}(X,h)\big|\leq\nicefrac{2}{|X|}$. We can take $\theta=\hat{h}_{\:\alpha,\:x}(X)$ and consequently, if elements of $X$ are distinct, we have $\big|\hat{F}_{\:x}\big(X,\hat{h}_{\:\alpha,\:x}(X)\big)-\alpha\big|\leq\nicefrac{2}{|X|}$.
So, from part A we can conclude that there exists a null set $N_{\:2}\subset\Omega$, such that for each $w\in{N_{\:2}}'$ and any $n\in\mathbb{N}$, we have $\Big|\hat{F}_{\:x}\Big(X(w,n),\hat{h}_{\:\alpha,\:x}\big(X(w,n)\big)\Big)-\alpha\Big|\leq\nicefrac{2}{n}$. Clearly $N:=N_{\:1}\cup N_{\:2}$ is such a null set, and so if $w\in{N}'$, then by taking $n_{\:w}>\max(n_{\:w,\:1},\nicefrac{4}{\epsilon})$ we have:
\begin{align*}
&\hspace{-2em}\Big|\:F_{\:x}^{\:(P)}\Big(\hat{h}_{\:\alpha,\:x}\big(X(w,n)\big)\Big)-\alpha\:\Big| \\
&\leq\Big|\:F_{\:x}^{\:(P)}\Big(\hat{h}_{\:\alpha,\:x}\big(X(w,n)\big)\Big)-\hat{F}_{\:x}\Big(X(w,n),\hat{h}_{\:\alpha,\:x}\big(X(w,n)\big)\Big)\:\Big| \\
&\hspace{2em}+\Big|\:\hat{F}_{\:x}\Big(X(w,n),\hat{h}_{\:\alpha,\:x}\big(X(w,n)\big)\Big)-\alpha\:\Big| \\
&<\nicefrac{\epsilon}{2}+\nicefrac{2}{n},
\end{align*}
which sum is less than or equal to $\epsilon$ for $n$ sufficiently large; i.e., $n>n_{\:w}$. Now, from part B of  Lemma \ref{Lowess:lemma:inverse_continuity}, we have a unique solution $h^{\:(P)}_{\:\alpha,\:x}\in(0,1)$, to $F_{\:x}^{\:(P)}(h)=\alpha$ for $h$. Then from the above inequality, for the null set $N\in\Omega$, if $w\in{N}'$ and $n>n_{\:w}$, we have that $\Big|\:F_{\:x}^{\:(P)}\Big(\hat{h}_{\:\alpha,\:x}\big(X(w,n)\big)\Big)-F_{\:x}^{\:(P)}\big(h^{\:(P)}_{\:\alpha,\:x}\big)\big|<\epsilon$. Since $\epsilon$ is arbitrary, we must have $\lim\limits_{n\to\infty}F_{\:x}^{\:(P)}\Big(\hat{h}_{\:\alpha,\:x}\big(X(w,n)\big)\Big)=F_{\:x}^{\:(P)}\big(h^{\:(P)}_{\:\alpha,\:x}\big)$. From part A of Lemma \ref{Lowess:lemma:inverse_continuity}, we conclude: $\lim\limits_{n\to\infty}\hat{h}_{\:\alpha,\:x}\big(X(w,n)\big)=h^{\:(P)}_{\:\alpha,\:x}$. This is true for any $w\in{N}'$, since $N\in\Omega$ is a null set, and hence we have proved the statement $\lim\limits_{|X|\to\infty}\hat{h}_{\:\alpha,\:x}(X)\overset{a.s.}{=}h^{\:(P)}_{\:\alpha,\:x}$.
\end{proof}

Now, recall that just prior to the statement of Assumption \ref{Lowess:assumption:positive-density}, we defined $\mathsf{1}_{\:J,\:x}(\tilde{x},h)$ to be a $J$-term Fourier approximation to $\mathsf{1}(x-h\leq\tilde{x}\leq x+h)$. Noting that
\begin{align*}
&\mathsf{1}(x-h<\tilde{x}<x+h)=\mathsf{1}(x-h<\tilde{x})-\mathsf{1}(x+h\leq\tilde{x}), \\
&\mathsf{1}(x-h\leq\tilde{x}\leq x+h)=\mathsf{1}(x-h\leq\tilde{x})-\mathsf{1}(x+h<\tilde{x});
\end{align*}
we see that $\mathsf{1}_{\:J,\:x}(\tilde{x},h)$ can be expressed in terms of the Fourier approximation $\mathsf{1}_{\:J}(\cdot,\cdot)$ defined in Example \ref{eps:eg:ind}:
\begin{align*}
\mathsf{1}_{\:J,\:x}(\tilde{x},h)=\mathsf{1}_{\:J}(\tilde{x},x-h)-\mathsf{1}_{\:J}(\tilde{x},x+h).
\end{align*}

Recalling the notation introduced at the beginning of section \ref{sec:General:MapReduce}, we have the following.

\begin{lemma}\label{Lowess:lemma:delta_expression}
For $J\in\mathbb{N}$, the following expressions hold:
\begin{align*}
&(A)\ \mathsf{1}_{\:J,\:x}(\tilde{x},h)=\frac{4}{\pi}\cdot \sum_{j=1}^J\big(\mathsf{c}_{\:2j-1}(\tilde{x}-x)\big)\frac{\mathsf{c}_{\:2j}(h)}{2j-1}, \\
&(B)\ \hat{F}_{\:J,\:x}(X,h)=\frac{4}{\pi}\cdot \sum_{j=1}^J\big(\bar{\mathsf{C}}_{\:2j-1}(X)\cdot\mathsf{c}_{\:2j-1}(x)+\bar{\mathsf{C}}_{\:2j}(X)\cdot\mathsf{c}_{\:2j}(x)\big)\cdot\frac{\mathsf{c}_{\:2j}(h)}{2j-1}.
\end{align*}
\end{lemma}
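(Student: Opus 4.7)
For part (A), the plan is to substitute the Fourier expression from Example \ref{eps:eg:ind}, namely $\mathsf{1}_{\:J}(x,\theta) = \tfrac{1}{2} - \tfrac{2}{\pi}\sum_{j=1}^J \sin((2j-1)(x-\theta))/(2j-1)$, into the given definition $\mathsf{1}_{\:J,\:x}(\tilde{x},h) = \mathsf{1}_{\:J}(\tilde{x},x-h) - \mathsf{1}_{\:J}(\tilde{x},x+h)$. The two constant halves cancel, leaving a single sum in which each summand is a difference of two sines with arguments $(2j-1)(\tilde{x}-x\pm h)$. Applying the sum-to-product identity $\sin(\alpha+\beta) - \sin(\alpha-\beta) = 2\cos\alpha\sin\beta$ with $\alpha = (2j-1)(\tilde{x}-x)$ and $\beta = (2j-1)h$ collapses each summand into the product $\cos((2j-1)(\tilde{x}-x))\cdot\sin((2j-1)h)$, which by the notational convention $\mathsf{c}_{2j-1}(z)=\cos((2j-1)z)$ and $\mathsf{c}_{2j}(z)=\sin((2j-1)z)$ is precisely $\mathsf{c}_{2j-1}(\tilde{x}-x)\cdot\mathsf{c}_{2j}(h)$. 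Collecting the constant $4/\pi$ and dividing by $2j-1$ yields the claim in (A), subject to keeping the sign convention consistent with the chosen ordering of the two terms in the definition of $\mathsf{1}_{\:J,\:x}$.

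For part (B), I would begin from $\hat{F}_{\:J,\:x}(X,h) = |X|^{-1}\sum_{i\in I}\mathsf{1}_{\:J,\:x}(x_i,h)$ and plug in the expression from (A). Since the factors $\mathsf{c}_{2j}(h)/(2j-1)$ depend only on $h$ and $j$, I can exchange the order of summation and pull them outside the sum over $i$. What remains inside is $|X|^{-1}\sum_{i\in I}\cos((2j-1)(x_i-x))$. Applying the cosine angle-subtraction identity $\cos(A-B)=\cos A\cos B+\sin A\sin B$ separates $x_i$ from $x$, and the two resulting sample averages are exactly the standardized SEP statistics $\bar{\mathsf{C}}_{2j-1}(X)$ and $\bar{\mathsf{C}}_{2j}(X)$ introduced in the proof of Theorem \ref{Quantile:thm:strong_consistency}. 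Regrouping then yields the asserted expression in (B).

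Neither step presents a genuine obstacle: both identities reduce to a direct algebraic calculation using only two standard trigonometric identities (sum-to-product for sines in the first step, angle-subtraction for cosines in the second) together with the definition of $\bar{\mathsf{C}}_{j}(X)$. The only care required is to maintain a consistent sign convention in (A), since swapping the arguments of $\mathsf{1}_{\:J}$ in the definition would flip the overall sign, and to justify pulling $h$-dependent factors outside a finite sum in (B), which is immediate.
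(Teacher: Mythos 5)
Your proof is correct and follows essentially the same route as the paper's: cancel the constants, apply the sum-to-product identity $\sin(\alpha+\beta)-\sin(\alpha-\beta)=2\cos\alpha\sin\beta$ for (A), then distribute, use the cosine angle-subtraction identity, and identify the averaged cosine/sine terms with $\bar{\mathsf{C}}_{\:2j-1}(X)$ and $\bar{\mathsf{C}}_{\:2j}(X)$ for (B). Your hedge about the sign is well placed: the main-text definition $\mathsf{1}_{\:J,\:x}(\tilde{x},h)=\mathsf{1}_{\:J}(\tilde{x},x-h)-\mathsf{1}_{\:J}(\tilde{x},x+h)$ would in fact produce $-\tfrac{4}{\pi}\sum_{j}\mathsf{c}_{\:2j-1}(\tilde{x}-x)\,\mathsf{c}_{\:2j}(h)/(2j-1)$, i.e.\ the opposite sign; the paper's supplementary proof (and the decomposition $\mathsf{1}(x-h\leq\tilde{x}\leq x+h)=\mathsf{1}(\tilde{x}\leq x+h)-\mathsf{1}(\tilde{x}<x-h)$, which is what $\mathsf{1}_J$ actually approximates) uses the ordering $\mathsf{1}_{\:J}(\tilde{x},x+h)-\mathsf{1}_{\:J}(\tilde{x},x-h)$, which gives the stated positive sign, so the main text has a sign typo rather than your calculation being wrong.
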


\begin{proof}
(A)\ Recall that $\mathsf{1}_{\:J,\:x}(\tilde{x},h)=\mathsf{1}_{\:J}(\tilde{x},x+h)-\mathsf{1}_{\:J}(\tilde{x},x-h)$. From Example \ref{eps:eg:ind}, we have that
\begin{align*}
\mathsf{1}_{\:J}(\tilde{x},\theta)=\frac{1}{2}-\frac{2}{\pi}\cdot \sum_{j=1}^J\frac{\sin\big((2j-1)\cdot(\tilde{x}-\theta)\big)}{2j-1}.
\end{align*}
Substituting $x+h$ and $x-h$ respectively for $\theta$, we may then write:
\begin{align*}
&\hspace{-2em}\mathsf{1}_{\:J}(\tilde{x},x+h)-\mathsf{1}_{\:J}(\tilde{x},x-h) \\
&=\mathsf{1}_{\:J}(\tilde{x}-x-h)-\mathsf{1}_{\:J}(\tilde{x}-x+h) \\
&=\frac{2}{\pi}\cdot \sum_{j=1}^J\frac{-\sin\big((2j-1)(\tilde{x}-x-h)\big)+\sin\big((2j-1)(\tilde{x}-x+h)\big)}{2j-1} \\
&=\frac{2}{\pi}\cdot \sum_{j=1}^J\frac{2\cos\big((2j-1)(\tilde{x}-x)\big)\cdot\sin\big((2j-1)h\big)}{2j-1} \\
&=\frac{4}{\pi}\cdot \sum_{j=1}^J\mathsf{c}_{\:2j-1}(x_{\:i}-x)\cdot\frac{\mathsf{c}_{\:2j}(h)}{2j-1},
\end{align*}
which proves our statement. \\

(B)\ If we evaluate $\mathsf{1}_{\:J,\:x}(\tilde{x},h)$ from part A at $\tilde{x}=x_{\:i}$ and then sum over $i\in I$, we obtain:
\begin{align*}
\sum_{i\in I}\mathsf{1}_{\:J,\:x}(x_{\:i},h)&=\sum_{i\in I}\bigg(\frac{4}{\pi}\cdot \sum_{j=1}^J\mathsf{c}_{\:2j-1}(x_{\:i}-x)\cdot \frac{\mathsf{c}_{\:2j}(h)}{2j-1}\bigg) \\
&=\frac{4}{\pi}\cdot \sum_{j=1}^J\sum_{i\in I}\bigg(\mathsf{c}_{\:2j-1}(x_{\:i}-x)\cdot \frac{\mathsf{c}_{\:2j}(h)}{2j-1}\bigg) \\
&=\frac{4}{\pi}\cdot \sum_{j=1}^J\bigg(\Big(\sum_{i\in I}\mathsf{c}_{\:2j-1}(x_{\:i})\Big)\cdot \mathsf{c}_{\:2j-1}(x)+\Big(\sum_{i\in I}\mathsf{c}_{\:2j}(x_{\:i})\Big)\cdot \mathsf{c}_{\:2j}(x)\bigg)\cdot\frac{\mathsf{c}_{\:2j}(h)}{2j-1}.
\end{align*}
If we then divide both sides of this expression by $|X|$, we obtain:
\begin{align*}
&\hspace{-2em}\frac{1}{|X|}\cdot \sum_{i\in I}\mathsf{1}_{\:J,\:x}(x_{\:i},h) \\
&=\frac{4}{\pi}\cdot \sum_{j=1}^J\left(\frac{\sum_{i\in I}\mathsf{c}_{\:2j-1}(x_{\:i})}{|X|}\cdot \mathsf{c}_{\:2j-1}(x)+\frac{\sum_{i\in I}\mathsf{c}_{\:2j}(x_{\:i})}{|X|}\cdot \mathsf{c}_{\:2j}(x)\right)\cdot \frac{\mathsf{c}_{\:2j}(h)}{2j-1} \\
&\Leftrightarrow\hat{F}_{\:J,\:x}(X,h)=\frac{4}{\pi}\cdot \sum_{j=1}^J\Big(\bar{\mathsf{C}}_{\:2j-1}(X)\cdot \mathsf{c}_{\:2j-1}(x)+\bar{\mathsf{C}}_{\:2j}(X)\cdot \mathsf{c}_{\:2j}(x)\Big)\cdot \frac{\mathsf{c}_{\:2j}(h)}{2j-1},
\end{align*}
which proves our statement.
\end{proof}

Given $x\in\mathbb{X}$, $J\in\mathbb{N}$, and $\alpha\in(0,1)$, define $h_{\:J,\:\alpha,\:x}^{\:(P)}$ as a solution to $\mathbb{E}_{\:P}\big(\mathsf{1}_{\:J,\:x}(\tilde{x},h)\big)=\alpha$ for $h\in(0,1)$, provided such a solution exists. Let us then introduce the sequence of sets $\mathcal{A}_{\:J,\:x}^{\:(P)}$ for $J\in\mathbb{N}$ as follows:
\begin{align*}
\mathcal{A}_{\:J,\:x}^{\:(P)}:=\big\{\alpha\colon\exists\ h\in(0,1)\text{ such that }\mathbb{E}_{\:P}\big(\mathsf{1}_{\:J,\:x}(\tilde{x},h)\big)=\alpha\big\}. 
\end{align*}
Note that if $h_{\:J,\:\alpha,\:x}^{\:(P)}$ exists, it need not be unique; since $\mathbb{E}_{\:P}\big(\mathsf{1}_{\:J,\:x}(\tilde{x},h)\big)$ is weighted sum of periodic trigonometric functions in $h$, it is possible to have multiple solutions to $\mathbb{E}_{\:P}\big(\mathsf{1}_{\:J,\:x}(\tilde{x},h)\big)=\alpha$. Therefore we shall define the sets
\begin{align*}
\mathcal{H}_{\:J,\:\alpha,\:x}^{\:(P)}:=\big\{h\in(0,1)\colon\mathbb{E}_{\:P}\big(\mathsf{1}_{\:J,\:x}(\tilde{x},h)\big)=\alpha\big\}.
\end{align*}
Finally, for an arbitrary set $S\subset\mathbb{R}$, define $\bigtriangledown(S):=\max\big\{|h_{\:1}-h_{\:2}|\colon h_{\:1}\in S,h_{\:2}\in S\big\}$. 

We now show that under basic regulatory conditions, $h_{\:J,\:\alpha,\:x}^{\:(P)}$ exists and converges to the limit $h^{\:(P)}_{\:\alpha,\:x}$ when $J$ becomes large.

\begin{lemma}\label{Lowess:lemma:population_asymptotic}
Suppose Assumption \ref{Lowess:assumption:positive-density} holds. Then: \\
(A)\ $\lim\limits_{J\to\infty}\mathbb{E}_{\:P}\:\big(\mathsf{1}_{\:J,\:x}(\tilde{x},h)\big)\to F_{\:x}^{\:(P)}(h)$ uniformly in $x\in(0,1)$ and $h\in(0,1)$. \\
(B)\ $\lim\limits_{J\rightarrow\infty}\mathcal{A}_{\:J,\:x}^{\:(P)}=(0,1)$. \\
(C)\ $\lim\limits_{J\to\infty}h_{\:J,\:\alpha,\:x}^{\:(P)}\to h^{\:(P)}_{\:\alpha,\:x}$ and $\lim\limits_{J\to\infty}\bigtriangledown\big(\mathcal{H}_{\:J,\:\alpha,\:x}^{\:(P)}\big)\to 0$ for $\alpha\in(0,1)$ and $x\in\mathbb{X}$.
\end{lemma}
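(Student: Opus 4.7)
The plan is to mirror the three-step structure of the proof of Lemma \ref{Quantile:lemma:population_asymptotic}, exploiting the identity $\mathsf{1}_{\:J,\:x}(\tilde{x},h)=\mathsf{1}_{\:J}(\tilde{x},x+h)-\mathsf{1}_{\:J}(\tilde{x},x-h)$ (as used in the proof of Lemma \ref{Lowess:lemma:delta_expression}(A)) together with the analogous decomposition $F_{\:x}^{\:(P)}(h)=F^{\:(P)}(x+h)-F^{\:(P)}(x-h)$. Each step on the Lowess side then reduces to two applications of the quantile-side argument.

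For part (A), I would write
\begin{align*}
\mathbb{E}_{\:P}\bigl(\mathsf{1}_{\:J,\:x}(\tilde{x},h)\bigr)-F_{\:x}^{\:(P)}(h) &= \bigl[\mathbb{E}_{\:P}\mathsf{1}_{\:J}(\tilde{x},x+h)-F^{\:(P)}(x+h)\bigr] \\
&\quad - \bigl[\mathbb{E}_{\:P}\mathsf{1}_{\:J}(\tilde{x},x-h)-F^{\:(P)}(x-h)\bigr],
\end{align*}
where $F^{\:(P)}$ is extended trivially outside $(0,1)$ (equal to $0$ for $\theta\le 0$ and $1$ for $\theta\ge 1$, consistent with $\tilde{x}\in(0,1)$) and $\mathsf{1}_{\:J}(\tilde{x},\theta)$ retains its Fourier definition for $\theta\in[-1,2]$ since $|\tilde{x}-\theta|<\pi$ throughout. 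Each bracket is then bounded by $\sup_{\theta\in[-1,2]}|\mathbb{E}_{\:P}\mathsf{1}_{\:J}(\tilde{x},\theta)-F^{\:(P)}(\theta)|$; rerunning the argument of Lemma \ref{Quantile:lemma:population_asymptotic}(A) on this slightly enlarged interval (appealing to absolute continuity of $P$ and the uniform bound of Lemma \ref{Quantile:lemma:indicator_uniform_convergence}(B)) shows this supremum tends to $0$, delivering uniform convergence jointly in $x$ and $h$.

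For part (B), fix $\alpha\in(0,1)$ and choose $\epsilon_{\:\alpha}\in(0,\min(\alpha,1-\alpha)/2)$. By part (A), eventually in $J$ we have $|\mathbb{E}_{\:P}\mathsf{1}_{\:J,\:x}(\tilde{x},h)-F_{\:x}^{\:(P)}(h)|<\epsilon_{\:\alpha}$ uniformly in $h$. Since $F_{\:x}^{\:(P)}(h)\to 0$ as $h\downarrow 0$ while $F_{\:x}^{\:(P)}(1)=1$, it follows that $\mathbb{E}_{\:P}\mathsf{1}_{\:J,\:x}(\tilde{x},h)<\alpha$ for $h$ sufficiently small and $\mathbb{E}_{\:P}\mathsf{1}_{\:J,\:x}(\tilde{x},1)>\alpha$; continuity in $h$ of the Fourier partial sum and the intermediate value theorem then deliver $h_{\:J,\:\alpha,\:x}^{\:(P)}\in(0,1)$ satisfying $\mathbb{E}_{\:P}\mathsf{1}_{\:J,\:x}(\tilde{x},h_{\:J,\:\alpha,\:x}^{\:(P)})=\alpha$, so $\alpha\in\mathcal{A}_{\:J,\:x}^{\:(P)}$. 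For part (C), combining $F_{\:x}^{\:(P)}(h_{\:\alpha,\:x}^{\:(P)})=\alpha=\mathbb{E}_{\:P}\mathsf{1}_{\:J,\:x}(\tilde{x},h_{\:J,\:\alpha,\:x}^{\:(P)})$ with the uniform bound from part (A) yields $F_{\:x}^{\:(P)}(h_{\:J,\:\alpha,\:x}^{\:(P)})\to F_{\:x}^{\:(P)}(h_{\:\alpha,\:x}^{\:(P)})$, after which Lemma \ref{Lowess:lemma:inverse_continuity}(A) converts convergence of $F$-values into convergence $h_{\:J,\:\alpha,\:x}^{\:(P)}\to h_{\:\alpha,\:x}^{\:(P)}$; applying this same reasoning to any two elements of $\mathcal{H}_{\:J,\:\alpha,\:x}^{\:(P)}$ forces $\bigtriangledown\bigl(\mathcal{H}_{\:J,\:\alpha,\:x}^{\:(P)}\bigr)\to 0$.

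The principal obstacle lies in part (A): extending the uniform-convergence estimate from $\theta\in(0,1)$ (as handled in Lemma \ref{Quantile:lemma:population_asymptotic}(A)) to $\theta\in[-1,2]$, needed because the shifted arguments $x\pm h$ can stray just outside $(0,1)$. The Dirichlet-kernel bound underlying Lemma \ref{Quantile:lemma:indicator_uniform_convergence} remains valid on any compact subset avoiding multiples of $\pi$ (which is satisfied here since $2<\pi$), but care is required to track the constants so as to preserve uniformity jointly in $(x,h)$; once this is done, the absolute-continuity step from the quantile proof can be reprised verbatim.
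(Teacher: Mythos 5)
Your proposal is correct and its parts (B) and (C) track the paper's argument essentially verbatim: part (B) uses uniform convergence plus continuity of the Fourier partial sums and the intermediate value theorem to get existence of $h_{\:J,\:\alpha,\:x}^{\:(P)}$, and part (C) pushes the uniform estimate through $F_{\:x}^{\:(P)}$ and then applies Lemma \ref{Lowess:lemma:inverse_continuity}(A). The one genuine difference is in part (A). The paper does not split $\mathsf{1}_{\:J,\:x}$ into two one-sided pieces; instead it repeats the dominated-convergence argument of Lemma \ref{Quantile:lemma:population_asymptotic}(A) directly on the two-sided approximant, using the excised set $R_{\:\delta,\:h,\:x}:=(0,1)\setminus\big((x-h-\delta,x-h+\delta)\cup(x+h-\delta,x+h+\delta)\big)$ so that both $\tilde{x}-x+h$ and $\tilde{x}-x-h$ land in $(-\pi,\pi)\setminus(-\delta,\delta)$ simultaneously, then bounds the integral over the small complement using the uniform bound on $\mathsf{1}_{\:J,\:x}$ (which is $2(9/2+1/\pi)$, twice the one-sided bound) together with absolute continuity. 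Your decomposition $\mathsf{1}_{\:J,\:x}(\tilde{x},h)=\mathsf{1}_{\:J}(\tilde{x},x+h)-\mathsf{1}_{\:J}(\tilde{x},x-h)$ followed by two invocations of the one-sided estimate is a more modular route that achieves the same bound, and you correctly identify the only real technical point: the one-sided uniform estimate in Lemma \ref{Quantile:lemma:population_asymptotic}(A) is stated for $\theta\in(0,1)$, so you need it on the enlarged range $\theta\in(-1,2)$. Your justification is sound --- since $\tilde{x}\in(0,1)$ and $\theta\in(-1,2)$ give $|\tilde{x}-\theta|<3<\pi$, the Dirichlet-kernel bound of Lemma \ref{Quantile:lemma:indicator_uniform_convergence} still applies, and the absolute-continuity step is unaffected by the extended $\theta$-range because the integration variable $\tilde{x}$ still lives on $(0,1)$. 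Either argument is fine; the paper's saves you from reproving the one-sided lemma on a bigger interval, yours saves you from re-running the dominated-convergence bookkeeping for the two-sided kernel.

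One very small wrinkle in your part (B): you derive $\mathbb{E}_{\:P}\mathsf{1}_{\:J,\:x}(\tilde{x},h)<\alpha$ for small $h$ via uniform convergence and $F_{\:x}^{\:(P)}(h)\downarrow 0$, whereas the paper reads $\mathbb{E}_{\:P}\mathsf{1}_{\:J,\:x}(\tilde{x},0)=0$ directly from the closed form in Lemma \ref{Lowess:lemma:delta_expression}(A) (every term contains $\mathsf{c}_{\:2j}(0)=\sin(0)=0$). Both give what is needed, but the closed-form route is cleaner since it bypasses the limit as $h\downarrow 0$ entirely.
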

\begin{proof}
(A)\ Let $\epsilon>0$ and, given $x\in(0,1)$, fix an arbitrary $h\in(0,1)$. Since $\mathsf{1}_{\:J}(z)$ converges pointwise to both $\mathsf{1}(0<z)$ and $\mathsf{1}(0\leq z)$, for $z\in(-\pi,\pi)\setminus\{0\}$, we can conclude that $\mathsf{1}_{\:J,\:x}(\tilde{x},h)=\mathsf{1}_{\:J}(\tilde{x},x-h)-\mathsf{1}_{\:J}(\tilde{x},x+h)$ converges pointwise to $\mathsf{1}(x-h\leq\tilde{x}\leq x+h)=\mathsf{1}(x-h\leq\tilde{x})-\mathsf{1}(x+h<\tilde{x})$ for $\tilde{x}\in(0,1)\setminus\{x-h,x+h\}$. 

Now, from part A of Lemma \ref{Quantile:lemma:indicator-uniform-convergence}, we know that $\mathsf{1}_{\:J}(z)$ converges uniformly to both $\mathsf{1}(z<0)$ and $\mathsf{1}(z\leq 0)$ for $z\in(\-\pi,\pi)\setminus(-\delta,\delta)$ for any $\delta\in(0,\pi)$. So, there exists $J_{\:\epsilon,\:\delta}\in\mathbb{N}$ such that $\big|\mathsf{1}_{\:J}(z)-\mathsf{1}(z<0)\big|\leq\nicefrac{\epsilon}{2}$, and $\big|\mathsf{1}_{\:J}(z)-\mathsf{1}(z\leq 0)\big|\leq\nicefrac{\epsilon}{2}$ if $z\in(\-\pi,\pi)\setminus(-\delta,\delta)$ and $J>J_{\:\epsilon,\:\delta}$. Let us define the set
\begin{align*}
R_{\:\delta,\:h,\:x}:=(0,1)\setminus\big((x-h-\delta,x+h+\delta)\cup(x+h-\delta,x+h+\delta)\big).
\end{align*}
If $\tilde{x}\in R_{\:\delta,\:h,\:x}$, then $\tilde{x}-x+h\in(\-\pi,\pi)\setminus(-\delta,\delta)$ and $\tilde{x}-x-h\in(\-\pi,\pi)\setminus(-\delta,\delta)$, and in turn, we will have $\big|\mathsf{1}_{\:J,\:x}(\tilde{x},h)-\mathsf{1}(x-h\leq\tilde{x}\leq x+h)\big|\leq\nicefrac{\epsilon}{2}$ for $J>J_{\:\epsilon,\:\delta}$. 

Observe that the Lebesgue measure of $(0,1)\setminus R_{\:\delta,\:h,\:x}$ satisfies $\lambda\big((0,1)\setminus R_{\:\delta,\:h,\:x}\big)\leq 4\delta$, for any $h\in(0,1)$. Also, from part B of Lemma \ref{Quantile:lemma:indicator-uniform-convergence}, we realize that
\begin{align*}
\big|\:\mathsf{1}_{\:J,\:x}(\tilde{x},h)\:\big|\leq\big|\:\mathsf{1}_{\:J}(\tilde{x},x-h)\:\big|+\big|\:\mathsf{1}_{\:J}(\tilde{x},x+h)\:\big|\leq 2\cdot(9/2+1/\pi).
\end{align*}
Thus, for $\tilde{x}\in(0,1)$ and $h\in(0,1)$, we have:
\begin{align*}
\big|\:\mathsf{1}_{\:x}(\tilde{x},h)-\mathsf{1}_{\:J,\:x}(\tilde{x},h)\:\big|\leq\big|\:\mathsf{1}_{\:x}(\tilde{x},h)\:\big|+\big|\:\mathsf{1}_{\:J,\:x}(\tilde{x},h)\:\big|\leq 2+2\cdot(9/2+1/\pi).
\end{align*}
For convenience, let us denote this bound by $M_{\:0}$. 

Since by Assumption \ref{Lowess:assumption:positive-density}, the probability measure $P(\cdot)$ is absolutely continuous with respect to Lebesgue measure $\lambda(\cdot)$, there exists $\delta_{\:\epsilon}>0$ such that if $\lambda(A)<\delta_{\:\epsilon}$ for some $A\in\mathbb{B}(0,1)$, we have $P(A)<\nicefrac{\epsilon}{(2\cdot M_{\:0})}$. Take $A=R_{\:\nicefrac{\delta_{\:\epsilon}}{4},\:h,\:x}$, and for $J>J_{\:\epsilon,\:\nicefrac{\delta_{\:\epsilon}}{4}}$, we have:
\begin{align*}
&\hspace{-1em}\big|\:\mathbb{E}_{\:P}\big(\mathsf{1}_{\:J,\:x}(\tilde{x},h)\big)-F_{\:x}^{\:(P)}(h)\:\big|\\
&=\big|\:\mathbb{E}_{\:P}\big(\mathsf{1}_{\:J,\:x}(\tilde{x},h)\big)-\mathbb{E}_{\:P}\big(\mathsf{1}_{\:x}(\tilde{x},h)\big)\:\big| \\
&=\big|\:\mathbb{E}_{\:P}\big(\mathsf{1}_{\:x}(\tilde{x},h)-\mathsf{1}_{\:J,\:x}(\tilde{x},h)\big)\:\big|\\
&\leq\mathbb{E}_{\:P}\big(\big|\:\mathsf{1}_{\:x}(\tilde{x},h)-\mathsf{1}_{\:J,\:x}(\tilde{x},h)\:\big|\big) \\
&=\int_{\tilde{x}\in(0,1)}\big|\:\mathsf{1}_{\:x}(\tilde{x},h)-\mathsf{1}_{\:J,\:x}(\tilde{x},h)\:\big|\:dP \\
&=\int_{\tilde{x}\in R_{\:\nicefrac{\delta_{\:\epsilon}}{4},\:h,\:x}}\big|\:\mathsf{1}_{\:x}(\tilde{x},h)-\mathsf{1}_{\:J,\:x}(\tilde{x},h)\:\big|\:dP \\
&\hspace{2em}+\int_{\tilde{x}\in(0,1)\setminus R_{\:\nicefrac{\delta_{\:\epsilon}}{4},\:h,\:x}}\big|\:\mathsf{1}_{\:x}(\tilde{x},h)-\mathsf{1}_{\:J,\:x}(\tilde{x},h)\:\big|\:dP \\
&\leq\int_{\tilde{x}\in R_{\:\nicefrac{\delta_{\:\epsilon}}{4},\:h,\:x}}\frac{\epsilon}{2}\cdot dP+\int_{\tilde{x}\in(0,1)\setminus R_{\:\nicefrac{\delta_{\:\epsilon}}{4},\:h,\:x}}M_{\:0}\cdot dP \\
&=\frac{\epsilon}{2}\cdot P\big(R_{\:\nicefrac{\delta_{\:\epsilon}}{4},\:h,\:x}\big)+M_{\:0}\cdot P\big((0,1)\setminus R_{\:\nicefrac{\delta_{\:\epsilon}}{4},\:h,\:x}\big)\\
& <\frac{\epsilon}{2}\cdot 1+M_{\:0}\cdot \frac{\epsilon}{2\cdot M_{\:0}}=\epsilon.
\end{align*}
Note that $J_{\:\epsilon,\:\nicefrac{\delta_{\:\epsilon}}{2}}$ is independent of $x$ and $h$. Since $\epsilon$ is an arbitrary positive number, we must have that $\lim\limits_{J\to\infty}\mathbb{E}_{\:P}\big(\mathsf{1}_{\:J,\:x}(\tilde{x},h)\big)\to F_{\:x}^{\:(P)}(h)$ uniformly in $x\in(0,1)$ and $h\in(0,1)$. \\

(B)\ Let $x\in(0,1)$, and consider an arbitrary $\alpha\in(0,1)$. We can pick an $\epsilon_{\:\alpha}>0$ such that $\epsilon_{\:\alpha}<1-\alpha$. For the random variable $\tilde{x}$, we have $x-1\leq\tilde{x}\leq x+1$; that is, $\mathsf{1}_{\:x}(\tilde{x},1)=1$ for any $x$, and hence $\mathbb{E}_{\:P}\big(\mathsf{1}_{\:x}(\tilde{x},1)=P(x-1\leq\tilde{x}\leq x+1)=1$ for any $x$. By part A, there exists $J_{\:\alpha}\in\mathbb{N}$ such that $|\mathbb{E}_{\:P}\big(\mathsf{1}_{\:J,\;x}(\tilde{x},1)\big)-P(x-1\leq\tilde{x}\leq x+1)|<\epsilon_{\:\alpha}$ for $J>J_{\:\alpha}$. Thus we conclude that $\mathbb{E}_{\:P}\big(\mathsf{1}_{\:J,\:x}(\tilde{x},1)\big)>1-\epsilon_{\:\alpha}$ if $J>J_{\:\alpha}$. 

Now, from part A of Lemma \ref{Lowess:lemma:delta_expression}, we have:
\begin{align*}
\mathbb{E}_{\:P}\big(\mathsf{1}_{\:J,\:x}(\tilde{x},h)\big)=\frac{4}{\pi}\sum_{j=1}^J\mathbb{E}_{\:P}\big(\mathsf{c}_{\:2j-1}(\tilde{x}-x)\big)\cdot\frac{\mathsf{c}_{\:2j}(h)}{2j-1}.
\end{align*}
This expression is a continuous function of $h$, with $\mathbb{E}_{\:P}\big(\mathsf{1}_{\:J,\:x}(\tilde{x},0)\big)=0$ and $\mathbb{E}_{\:P}\big(\mathsf{1}_{\:J,\:x}(\tilde{x},1)\big)=\alpha'\text{ (say)}>1-\epsilon_{\:\alpha}>\alpha$. So, by the intermediate value theorem, there is a number $h=h_{\:J,\:\alpha,\:x}^{\:(P)}\in(0,1)$ satisfying $\mathbb{E}_{\:P}\big(\mathsf{1}_{\:J,\:x}(\tilde{x},h)\big)=\alpha$. So, we conclude that $\alpha\in\mathcal{A}_{\:J,\:x}^{\:(P)}$ if $J>J_{\:\alpha}$, and this is true for arbitrary $\alpha\in(0,1)$. Hence $\lim\limits_{J\to\infty}\mathcal{A}_{\:J,\:x}^{\:(P)}=(0,1)$. \\

(C)\ For $x\in(0,1)$ and $\alpha\in(0,1)$, consider any sequence of numbers $h_{\:J,\:\alpha,\:x}^{\:(P)}$ when such a sequence exists. From part B, we know that $h_{\:J,\:\alpha,\:x}^{\:(P)}$ exists when $J$ is large ($J>J_{\:\alpha}$, say). For such a $J$, we have $\mathbb{E}_{\:P}\Big(\mathsf{1}_{\:J,\:x}\big(\tilde{x},h_{\:J,\:\alpha,\:x}^{\:(P)}\big)\Big)=\alpha$. Given  $\epsilon>0$, we know from part A that there exists an integer $J_{\:0}$ such that $\big|\:\mathbb{E}_{\:P}\big(\mathsf{1}_{\:x}(\tilde{x},h)\big)-F_{\:x}^{\:(P)}(h)\:\big|<\epsilon$, for any $h\in(0,1)$ if $J>J_{\:0}$. If we replace $h$ with $h_{\:J,\:\alpha,\:x}^{\:(P)}$ in this inequality, we obtain $\big|\:\alpha-F_{\:x}^{\:(P)}\big(h_{\:J,\:\alpha,\;x}^{\:(P)}\big)\:\big|<\epsilon$ when $J>J_{\:\epsilon}=\max(J_{\:\alpha},J_{\:0})$. 

Now, we know that $F_{\:x}^{\:(P)}\big(h^{\:(P)}_{\:\alpha,\:x}\big)=\alpha$. Thus, if $J>J_{\:\epsilon}$, it follows that $\big|\:F_{\:x}^{\:(P)}\big(h^{\:(P)}_{\:\alpha,\:x}\big)-F_{\:x}^{\:(P)}\big(h_{\:J,\:\alpha,\:x}^{\:(P)}\big)\:\big|<\epsilon$. In other words, $\lim\limits_{J\to\infty}F_{\:x}^{\:(P)}\big(h_{\:J,\:\alpha,\:x}^{\:(P)}\big)\to F_{\:x}^{\:(P)}\big(h^{\:(P)}_{\:\alpha,\:x}\big)$, and hence by Lemma \ref{Lowess:lemma:inverse_continuity} we have that $\lim\limits_{J\to\infty}h_{\:J,\:\alpha,\:x}^{\:(P)}\to h^{\:(P)}_{\:\alpha,\:x}$.

Next, we will show that $\lim\limits_{J\to\infty}\bigtriangledown\big(\mathcal{H}_{\:J,\:\alpha,\:x}^{\:(P)}\big)\to 0$. Pick $\epsilon>0$. If $J>J_{\:\nicefrac{\epsilon}{2}}$, then
\begin{align*}
\big|\:h_{\:J,\:\alpha,\:x}^{\:(P)}-h^{\:(P)}_{\:\alpha,\:x}\:\big|<\nicefrac{\epsilon}{2}\text{ and }\big|\:{h'}_{\:J,\:\alpha,\:x}^{\:(P)}-h^{\:(P)}_{\:\alpha,\:x}\:\big|<\nicefrac{\epsilon}{2}
\end{align*}
for $\big\{h_{\:J,\:\alpha,\:x}^{\:(P)},{h'}_{\:J,\:\alpha,\:x}^{\:(P)}\big\}\in\mathcal{H}_{\:J,\:\alpha,\:x}^{\:(P)}$ and $h_{\:J,\:\alpha,\:x}^{\:(P)}\neq{h'}_{\:J,\:\alpha,\:x}^{\:(P)}$. Then, we have:
\begin{align*}
\big|\:h_{\:J,\:\alpha}^{\:(P)}(x)-{h'}_{\:J,\:\alpha}^{\:(P)}(x)\:\big|&=\big|\:\big(h_{\:J,\:\alpha}^{\:(P)}(x)-h_{\:J}^{\:(P)}(x)\big)-\big({h'}_{\:J,\:\alpha}^{\:(P)}(x)-h_{\:J}^{\:(P)}(x)\big)\:\big| \\
&\leq\big|\:h_{\:J,\:\alpha}^{\:(P)}(x)-h_{\:J}^{\:(P)}(x)\:\big|+\big|\:{h'}_{\:J,\:\alpha}^{\:(P)}(x)-h_{\:J}^{\:(P)}(x)\:\big|\\
&<\nicefrac{\epsilon}{2}+\nicefrac{\epsilon}{2}=\epsilon. 
\end{align*}
Since $\epsilon$ is arbitrary, we have proved that $\lim\limits_{J\to\infty}\bigtriangledown\big(\mathcal{H}_{\:J,\:\alpha,\:x}^{\:(P)}\big)\to 0$.
\end{proof}

We now characterize the different solution sets that are possible when considering the expression $\hat{F}_{\:J,\:x}(X,h)=\alpha$. Fix $x\in\mathbb{X}$, $J\in\mathbb{N}$ and $\alpha\in(0,1)$, and for a given data set $X$, denote by $\hat{h}_{\:J,\:\alpha,\:x}(X)$ a solution to the equation $\hat{F}_{\:J,\:x}(X,h)=\alpha$ for $h\in(0,1)$, whenever such a solution exists. For $J\in\mathbb{N}$, define the set:
\begin{align*}
\hat{\mathcal{A}}_{\:J,\:x}(X):=\big\{\alpha\colon\exists\ h\in(0,1),\text{ such that }\hat{F}_{\:J,\:x}(X,h)=\alpha\big\},
\end{align*}
and for $\alpha\in(0,1)$, $J\in\mathbb{N}$, $n\in\mathbb{N}$ and $w\in\Omega$ define the set:
\begin{align*}
\hat{\mathcal{H}}_{\:J,\:\alpha,\:x,\:n}(w):=\Big\{h\in(0,1)\colon\hat{F}_{\:J,\:x}\big(X(w,n),h\big)=\alpha\Big\}.
\end{align*}

Observe that, for small $n$, it is possible that $\hat{\mathcal{H}}_{\:J,\:\alpha,\:x,\:n}(w)=\emptyset$, meaning there is no solution to $\hat{F}_{\:J,\:x}\big(X(w,n),h\big)=\alpha$ for $h\in(0,1)$. It is also possible that there are multiple solutions to $\hat{F}_{\:J,\:x}\big(X(w,n),h\big)=\alpha$; however, $\hat{\mathcal{H}}_{\:J,\:\alpha,\:x,\:n}(w)$ is always a finite set, because the equation $\hat{F}_{\:J,\:x}\big(X(w,n),h\big)=\alpha$ can't have infinitely many solutions for $h\in(0,1)$. 

Next, we motivate the formulation of Assumption \ref{Lowess:assumption:uniform-convergence} and relate this to the solution sets of $\hat{F}_{\:J,\:x}(X,h)=\alpha$ introduced above. By part A of Lemma \ref{Lowess:lemma:population_asymptotic}, we have for $x\in(0,1)$ and $h\in(0,1)$:
\begin{align*}
&\hspace{-2em}\lim\limits_{J\to\infty}\mathbb{E}_{\:P}\big(\mathsf{1}_{\:J,\:x}(\tilde{x},h)\big)=F_{\:x}^{\:(P)}(h)=P(x-h\leq\tilde{x}\leq x+h) \\
&\Leftrightarrow\lim\limits_{J\to\infty}\int_{0}^1\mathsf{1}_{\:J,\:x}(t,h)f^{\:(P)}(t)\,dt=F_{\:x}^{\:(P)}(h).
\end{align*}
Using the expression for $\mathsf{1}_{\:J,\:x}(t,h)$ from part A of Lemma \ref{Lowess:lemma:delta_expression}, we have:
\begin{align}\label{Lowess:eq:c}
\lim\limits_{J\to\infty}\bigg(\frac{4}{\pi}\int_{0}^1\sum_{j=1}^{J}\mathsf{c}_{\:2j-1}(t-x)\cdot \frac{\mathsf{c}_{\:2j}(h)}{2j-1}\cdot f^{\:(P)}(t)\,dt\bigg)=F_{\:x}^{\:(P)}(h).
\end{align}

For any fixed $J\in\mathbb{N}$, we have:
\begin{align*}
\hspace{-2em}\int_0^1\sum_{j=1}^J\mathsf{c}_{\:2j-1}(t-x)\cdot \frac{\mathsf{c}_{\:2j}(h)}{2j-1}\cdot f^{\:(P)}(t)\,dt=\sum_{j=1}^J\int_0^1\mathsf{c}_{\:2j-1}(t-x)\cdot \frac{\mathsf{c}_{\:2j}(h)}{2j-1}\cdot f^{\:(P)}(t)\,dt.
\end{align*}
So, from equation \ref{Lowess:eq:c}, we have:
\begin{align*}
\frac{4}{\pi}\sum_{j=1}^{\infty}\bigg(\int_0^1\mathsf{c}_{\:2j-1}(t-x)\cdot f^{\:(P)}(t)\,dt\bigg)\frac{\mathsf{c}_{\:2j}(h)}{2j-1}=F_{\:x}^{\:(P)}(h).
\end{align*}

Next, observe that $\mathsf{c}_{\:2j-1}(t-x)=\mathsf{c}_{\:2j-1}(t)\cdot \mathsf{c}_{\:2j-1}(x)+\mathsf{c}_{\:2j}(t)\cdot \mathsf{c}_{\:2j}(x)$, and so we have:
\begin{align}\label{Lowess:eq:d}
\frac{4}{\pi}\sum_{j=1}^{\infty}\Big(\mathbb{E}_{\:P}\big(\mathsf{c}_{\:2j-1}(\tilde{x})\big)\cdot \mathsf{c}_{\:2j-1}(x)+\mathbb{E}_{\:P}\big(\mathsf{c}_{\:2j}(\tilde{x})\big)\cdot \mathsf{c}_{\:2j}(x)\Big)\frac{\mathsf{c}_{\:2j}(h)}{2j-1}=F_{\:x}^{\:(P)}(h).
\end{align}
We also have that ${\mathsf{c}_{\:2j}}'(h)=(2j-1)\cdot \mathsf{c}_{\:2j-1}(h)$, and so the derivative of the $j$th term of the left-hand side of equation \ref{Lowess:eq:d} is $\nicefrac{4}{\pi}\Big(\mathbb{E}_{\:P}\big(\mathsf{c}_{\:2j-1}(\tilde{x})\big)\mathsf{c}_{\:2j-1}(x)+\mathbb{E}_{\:P}\big(\mathsf{c}_{\:2j}(\tilde{x})\big)\mathsf{c}_{\:2j}(x)\Big)\mathsf{c}_{\:2j}(h)$. The corresponding partial sum in $J$ is then precisely $\eta_{\:J,\:x}^{\:(P)}(h)$ as defined in Assumption \ref{Lowess:assumption:uniform-convergence}:
\begin{align*}
\eta_{\:J,\:x}^{\:(P)}(h)&=\mathbb{E}_{\:P}\left[\frac{2}{\pi} \Big(\frac{\sin\left(2J(\tilde{x}-x+h)\right)}{\sin(\tilde{x}-x+h)}+\frac{\sin\left(2J(\tilde{x}-x-h)\right)}{\sin(\tilde{x}-x-h)}\Big)\right]\\
&=\frac{4}{\pi} \sum_{j=1}^J\Big(\mathbb{E}_{\:P}\big(\mathsf{c}_{\:2j-1}(\tilde{x})\big)\mathsf{c}_{\:2j-1}(x)+\mathbb{E}_{\:P}\big(\mathsf{c}_{\:2j}(\tilde{x})\big)\mathsf{c}_{\:2j}(x)\Big)\mathsf{c}_{\:2j}(h).
\end{align*}
As seen from part A of Lemma \ref{Lowess:lemma:delta_expression}, $\eta_{\:J,\:x}^{\:(P)}(h)$ can equivalently be expressed as the derivative with respect to $h$ of $\mathbb{E}_{\:P}\big(\mathsf{1}_{\:J,\:x}(\tilde{x},h)\big)$.

From \citet[Theorem 9.13]{Book_1957_Apostol}, we know if $\eta_{\:J,\:x}^{\:(P)}(h)$ converges uniformly to a limit, then this limit will be the derivative of the right-hand side of equation \ref{Lowess:eq:d}: $f^{\:(P)}(x-h)+f^{\:(P)}(x+h)$. Thus, for $x\in(0,1)$ and $J\in\mathbb{N}$, define the corresponding distance $\xi_{\:J,\:x}^{\:(P)} \in \mathbb{R} \cup \infty $ as follows:
\begin{align*}
\xi_{\:J,\:x}^{\:(P)}:=\sup\limits_{h\in(0,1)}\big|\:f^{\:(P)}(x-h)+f^{\:(P)}(x+h)-\eta_{\:J,\:x}^{\:(P)}(h)\:\big|.
\end{align*}
Below, to prove Theorem \ref{Lowess:thm:prop}, we shall use Assumption \ref{Lowess:assumption:uniform-convergence} to force $\xi_{\:J,\:x}^{\:(P)}$ to zero in $J$. For the moment, however, we use $\xi_{\:J,\:x}^{\:(P)}$ simply to characterize limit points with respect to $\hat{\mathcal{H}}_{\:J,\:\alpha,\:x}(w)$, defined for $\alpha\in(0,1)$, $J\in\mathbb{N}$, and $w\in\Omega$ as the set
\begin{align*}
\hat{\mathcal{H}}_{\:J,\:\alpha,\:x}(w):=\bigcup_{n\in\mathbb{N}}\hat{\mathcal{H}}_{\:J,\:\alpha,\:x,\:n}(w).
\end{align*}

\begin{lemma}\label{Lowess:lemma:sample_asymptotic}
Suppose Assumption \ref{Lowess:assumption:positive-density} holds. Then: \\
(A)\ $\lim\limits_{|X|\to\infty}\sup\limits_{\{x,h\}\in(0,1)}\big|\:\hat{F}_{\:J,\:x}(X,h)-\mathbb{E}_{\:P}\big(\mathsf{1}_{\:J,\:x}(\tilde{x},h)\big)\:\big|\overset{a.s.}{=} 0$. \\
(B)\ $\lim\limits_{J\rightarrow\infty}\lim\limits_{|X|\rightarrow\infty}\hat{\mathcal{A}}_{\:J,\:x}(X)\overset{a.s.}{=}(0,1)$. \\
(C)\ Recalling that $h_{\:J,\:\alpha,\:x}^{\:(P)}$ is any solution to $\mathbb{E}_{\:P}\big(\mathsf{1}_{\:J,\:x}(\tilde{x},h)\big)=\alpha$, for $J$ sufficiently large, there exists a null set $N$ such that for $w\in N'$, if $\hat{h}_{\:J,\:\alpha,\:x}\big(X(w,\cdot)\big)$ is a limit point with respect to the set $\hat{\mathcal{H}}_{\:J,\:\alpha,\:x}(w)$, then $\Big|\:F_{\:x}^{\:(P)}\Big(\hat{h}_{\:J,\:\alpha,\:x}\big(X(w,\cdot)\big)\Big)-F_{\:x}^{\:(P)}\Big(h_{\:J,\:\alpha,\:x}^{\:(P)}\Big)\:\Big|\leq\xi^{\:(P)}_{\:J,\:x}$.
\end{lemma}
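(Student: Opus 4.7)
For part (A), I would start from the Fourier representation of Lemma \ref{Lowess:lemma:delta_expression}(B), yielding
\begin{align*}
\hat{F}_{\:J,\:x}(X,h) - \mathbb{E}_{\:P}\big(\mathsf{1}_{\:J,\:x}(\tilde{x},h)\big) = \frac{4}{\pi}\sum_{j=1}^J \big(r_{\:2j-1}(X)\mathsf{c}_{\:2j-1}(x) + r_{\:2j}(X)\mathsf{c}_{\:2j}(x)\big)\frac{\mathsf{c}_{\:2j}(h)}{2j-1},
\end{align*}
where $r_{\:j}(X) := \bar{\mathsf{C}}_{\:j}(X) - \mathbb{E}_{\:P}\big(\mathsf{c}_{\:j}(\tilde{x})\big) \overset{a.s.}{\to} 0$ for each $j$ by Kolmogorov's strong law. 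Applying Abel's summation in $j$ to the factor $\mathsf{c}_{\:2j}(h)/(2j-1)$ -- in precise parallel to the proof of Lemma \ref{Quantile:lemma:sample_asymptotic}(A) -- the relevant partial sums collapse to $(1/2) - \mathsf{1}_{\:j}(h)$, which is uniformly bounded in $h$ by Lemma \ref{Quantile:lemma:indicator_uniform_convergence}(B). Combined with the trivial bounds $|\mathsf{c}_{\:2j-1}(x)|,|\mathsf{c}_{\:2j}(x)| \leq 1$, this dominates the full difference, uniformly in $(x,h)$, by a constant multiple of $\sum_{j=1}^{2J}|r_{\:j}(X)|$. Intersecting the countably many null sets outside of which each $r_{\:j}$ converges then gives a single full-measure event on which the uniform claim holds.

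For part (B), fix $\alpha \in (0,1)$ and apply Lemma \ref{Lowess:lemma:population_asymptotic}(A)--(B) to find, for $J$ sufficiently large, values $h^{(1)}, h^{(2)} \in (0,1)$ with $\mathbb{E}_{\:P}\big(\mathsf{1}_{\:J,\:x}(\tilde{x},h^{(1)})\big)$ strictly below $\alpha$ and $\mathbb{E}_{\:P}\big(\mathsf{1}_{\:J,\:x}(\tilde{x},h^{(2)})\big)$ strictly above $\alpha$. Part (A) then gives, almost surely and for $n$ large, the same strict inequalities for $\hat{F}_{\:J,\:x}(X(w,n),\cdot)$ at $h^{(1)}$ and $h^{(2)}$. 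Continuity of $\hat{F}_{\:J,\:x}(X,\cdot)$ in $h$ (as a finite trigonometric sum) together with the intermediate value theorem then produces $\hat{h}_{\:J,\alpha,x}(X(w,n)) \in (0,1)$ solving $\hat{F}_{\:J,\:x}(X(w,n),h) = \alpha$, proving $\alpha \in \hat{\mathcal{A}}_{\:J,\:x}(X(w,n))$ eventually.

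For part (C), add Assumption \ref{Lowess:assumption:uniform_convergence}. On a suitable full-measure event and for $J,n$ large, parts (A)--(B) give $\hat{F}_{\:J,\:x}(X(w,n),\hat{h}_{\:J,\alpha,x}) = \alpha = \mathbb{E}_{\:P}\big(\mathsf{1}_{\:J,\:x}(\tilde{x},h_{\:J,\alpha,x}^{\:(P)})\big)$ together with $\big|\mathbb{E}_{\:P}\big(\mathsf{1}_{\:J,\:x}(\tilde{x},\hat{h}_{\:J,\alpha,x})\big) - \alpha\big| < \epsilon/2$ for arbitrary $\epsilon > 0$. Because $\eta_{\:J,\:x}^{\:(P)}$ is the $h$-derivative of $\mathbb{E}_{\:P}\big(\mathsf{1}_{\:J,\:x}(\tilde{x},\cdot)\big)$ (as established in the derivation preceding equation (\ref{Lowess:eq:d})), the fundamental theorem of calculus recasts this bound as one on the integral of $\eta_{\:J,\:x}^{\:(P)}$ between $\hat{h}_{\:J,\alpha,x}$ and $h_{\:J,\alpha,x}^{\:(P)}$. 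Adding and subtracting the corresponding integral of $f_{\:x}^{\:(P)}$ and invoking the definition of $\xi_{\:J,\:x}^{\:(P)}$ as a uniform bound on $|\gamma_{\:J,\:x}^{\:(P)}|$ then gives $\big|F_{\:x}^{\:(P)}(\hat{h}_{\:J,\alpha,x}) - F_{\:x}^{\:(P)}(h_{\:J,\alpha,x}^{\:(P)})\big| \leq \xi_{\:J,\:x}^{\:(P)}\cdot|\hat{h}_{\:J,\alpha,x}-h_{\:J,\alpha,x}^{\:(P)}| + \epsilon/2 \leq \xi_{\:J,\:x}^{\:(P)} + \epsilon/2$. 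Passing along a subsequence realizing the stated limit point, invoking continuity of $F_{\:x}^{\:(P)}$ (Lemma \ref{Lowess:lemma:inverse_continuity}(A)), and letting $\epsilon \to 0$ then yields the claimed bound.

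The main obstacle is ensuring that the Abel-summation estimate in part (A) is genuinely uniform over the two-parameter box $(x,h) \in (0,1)^2$ rather than merely in each variable separately -- this is the essential complication relative to the quantile case of Lemma \ref{Quantile:lemma:sample_asymptotic}(A). The saving observation is a clean decoupling of the two variables in the Fourier expression: the Abel summation acts only on the $h$-indexed sequence $\{\mathsf{c}_{\:2j}(h)/(2j-1)\}$, producing $h$-uniform bounds via Lemma \ref{Quantile:lemma:indicator_uniform_convergence}(B), while the $x$-dependent factors contribute only the trivial pointwise-in-$x$ bound $\leq 1$. Once this decoupling is exploited, the remainder of the argument ports over from Lemma \ref{Quantile:lemma:sample_asymptotic} with only cosmetic changes.
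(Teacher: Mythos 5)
Your proposal is correct and follows essentially the same route as the paper's proof: the Fourier representation of Lemma \ref{Lowess:lemma:delta_expression}(B), Abel summation on the $h$-indexed factors $\mathsf{c}_{2j}(h)/(2j-1)$ with the uniform bound from Lemma \ref{Quantile:lemma:indicator_uniform_convergence}(B) and the trivial bound $|\mathsf{c}_j(x)|\le 1$ for part (A), a bracketing-plus-IVT argument for part (B), and the FTC/$\eta^{(P)}_{J,x}$-derivative argument with the uniform error $\xi^{(P)}_{J,x}$ for part (C). The only small divergences are cosmetic: the paper uses a single upper bracketing point in (B) together with the exact identity $\hat{F}_{J,x}(X,0)=0$ rather than two bracketing points, and in the final step of (C) your appeal to Lemma \ref{Lowess:lemma:inverse_continuity}(A) cites the inverse-continuity direction whereas what is actually needed is the forward continuity of $F_x^{(P)}$ (established inside the proof of that lemma and invoked by the paper simply as ``continuity of probability'').
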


\begin{proof}
(A)\ Suppose Assumption \ref{Lowess:assumption:positive-density} holds. Let $x\in(0,1)$ and $h\in(0,1)$, and consider arbitrary $\epsilon>0$. Since $|\mathsf{c}_{\:j}(\tilde{x})|\leq 1$, we have $\big|\mathbb{E}_{\:P}\big(\mathsf{c}_{\:j}(\tilde{x})\big)\big|\leq 1$ for  $j\in\mathbb{N}$. By Kolmogorov's strong law of large numbers, we have $\bar{\mathsf{C}}_{\:j}(X)\overset{a.s.}{\longrightarrow}\mathbb{E}_{\:P}\big(\mathsf{c}_{\:j}(\tilde{x})\big)$. So, for each $j$, there exists a null set $N_{\:j}\subset \Omega$, such that $\bar{\mathsf{C}}_{\:j}\big(X(w,n)\big)\to\mathbb{E}_{\:P}\big(\mathsf{c}_{\:j}(\tilde{x})\big)$ as $n\to\infty$, if $w\in {N_{\:j}}'$. 

Take $N=\bigcup_{j\in\mathbb{N}}N_{\:j}$; since $N$ is a countable union of null sets, it is also a null set. Now if $w\in N'$, then $\bar{\mathsf{C}}_{\:j}\big(X(w,n)\big)\to\mathbb{E}_{\:P}\big(\mathsf{c}_{\:j}(\tilde{x})\big)$ for $j\in\mathbb{N}$. Consider arbitrary $w\in N'$; then, from parts A and B of Lemma \ref{Lowess:lemma:delta_expression}, we have:
\begin{align*}
&\hspace{-1em}\hat{F}_{\:J,\:x}\big(X(w,n),h\big)-\mathbb{E}_{\:P}\big(\mathsf{1}_{\:J,\:x}(\tilde{x},h)\big) \\
&=\frac{4}{\pi}\cdot \sum_{j=1}^{J}\bigg(\Big(\bar{\mathsf{C}}_{\:2j-1}\big(X(w,n)\big)-\mathbb{E}_{\:P}\big(\mathsf{c}_{\:2j-1}(\tilde{x})\big)\Big)\cdot \mathsf{c}_{\:2j-1}(x) \\
&\hspace{3cm}+\Big(\bar{\mathsf{C}}_{\:2j}\big(X(w,n)\big)-\mathbb{E}_{\:P}\big(\mathsf{c}_{\:2j}(\tilde{x})\big)\Big)\cdot \mathsf{c}_{\:2j}(x)\bigg)\cdot\frac{\mathsf{c}_{\:2j}(h)}{2j-1} \\
&=\frac{4}{\pi}\cdot \sum_{j=1}^{J}r_{\:j,\:x}\big(X(w,n),\tilde{x}\big)\cdot \frac{\mathsf{c}_{\:2j}(h)}{2j-1},\text{ where we define} \\
&\hspace{-1em}r_{\:j,\:x}\big(X(w,n),\tilde{x}\big):=\Big(\bar{\mathsf{C}}_{\:2j-1}\big(X(w,n)\big)-\mathbb{E}_{\:P}\big(\mathsf{c}_{\:2j-1}(\tilde{x})\big)\Big)\cdot \mathsf{c}_{\:2j-1}(x) \\
&\hspace{10em}+\:\Big(\bar{\mathsf{C}}_{\:2j}\big(X(w,n)\big)-\mathbb{E}_{\:P}\big(\mathsf{c}_{\:2j}(\tilde{x})\big)\Big)\cdot \mathsf{c}_{\:2j}(x).
\end{align*}
Now, from Abel's identity for partial sums, we have:
\begin{align*}
&\hspace{-3em}\hat{F}_{\:J,\:x}\big(X(w,n),h\big)-\mathbb{E}_{\:P}\big(\mathsf{1}_{\:J,\:x}(\tilde{x},h)\big)=\frac{4}{\pi}\cdot \sum_{j=1}^Jr_{\:j,\:x}\big(X(w,n),\tilde{x}\big)\cdot \frac{\mathsf{c}_{\:2j}(h)}{2j-1} \\
&=r_{\:J,\:x}\big(X(w,n),\tilde{x}\big)\cdot \Big(\frac{4}{\pi}\cdot \sum_{k=1}^J\frac{\mathsf{c}_{\:2k}(h)}{2k-1}\Big) \\
&\qquad+\sum_{j=1}^{J-1}\Big(r_{\:j,\:x}\big(X(w,n),\tilde{x}\big)-r_{\:j+1,\:x}\big(X(w,n),\tilde{x}\big)\Big)\cdot \Big(\frac{4}{\pi}\cdot \sum_{k=1}^j\frac{\mathsf{c}_{\:2k}(h)}{2k-1}\Big).
\end{align*}
Recall that immediately before Lemma \ref{Quantile:lemma:indicator-uniform-convergence}, we defined for $h\in(0,1)$ and $j\in\mathbb{N}$ the quantity
\begin{align*}
&\hspace{-1em}\mathsf{1}_{\:j}(h)=\frac{1}{2}-\frac{2}{\pi}\cdot\sum_{k=1}^j\frac{\sin\big((2k-1)h\big)}{2k-1}=\frac{1}{2}-\frac{2}{\pi}\cdot \sum_{k=1}^j\frac{\mathsf{c}_{\:2k}(h)}{2k-1} \\
&\Leftrightarrow\:\frac{4}{\pi}\cdot \sum_{k=1}^j\frac{\mathsf{c}_{\:2k}(h)}{2k-1}=2\cdot \Big(\frac{1}{2}-\mathsf{1}_{\:j}(h)\Big).
\end{align*}
Thus we may write
\begin{multline*}
\hat{F}_{\:J,\:x}\big(X(w,n),h\big)-\mathbb{E}_{\:P}\big(\mathsf{1}_{\:J,\:x}(\tilde{x},h)\big) =r_{\:J,\:x}\big(X(w,n),\tilde{x},x\big)\cdot 2\cdot \Big(\frac{1}{2}-\mathsf{1}_{\:J}(h)\Big) \\
+\sum_{j=1}^{J-1}\Big(r_{\:j,\:x}\big(X(w,n),\tilde{x}\big)-r_{\:j+1,\:x}\big(X(w,n),\tilde{x}\big)\Big)\cdot 2\cdot\Big(\frac{1}{2}-\mathsf{1}_{\:j}(h)\Big).
\end{multline*}

It follows from part B of Lemma \ref{Quantile:lemma:indicator-uniform-convergence} that $\mathsf{1}_{\:j}(h)$ is uniformly bounded for $j\in\mathbb{N}$ and $h\in(0,1)$. So, there exists $M\in\mathbb{R}$ such that $\big|\:\nicefrac{1}{2}-\mathsf{1}_{\:j}(h)\:\big|<M$ for $j\in\mathbb{N}$ and $h\in(0,1)$, and
\begin{align*}
&\hspace{-1em}\Big|\:\hat{F}_{\:J,\:x}\big(X(w,n),h\big)-\mathbb{E}_{\:P}\big(\mathsf{1}_{\:J,\:x}(\tilde{x},h)\big)\:\Big| \\
&\leq\big|\:r_{\:J,\:x}\big(X(w,n),\tilde{x}\big)\:\big|\cdot 2\cdot\Big|\:\Big(\mathsf{1}_{\:J,\:x}(h)-\frac{1}{2}\Big)\:\Big| \\
&\quad+\sum_{j=1}^{J-1}\left|\:\Big(r_{\:j}\big(X(w,n),\tilde{x}\big)-r_{\:j+1}\big(X(w,n),\tilde{x}\big)\Big)\:\right|\cdot 2\cdot \left|\:\Big(\mathsf{1}{\:j}(h)-\frac{1}{2}\Big)\:\right| \\
&<\left|\:r_{\:J}\big(X(w,n),\tilde{x}\big)\:\right|\cdot 2\cdot M \\
&\quad+\sum_{j=1}^{J-1}\Big(\big|\:r_{\:j,\:x}\big(X(w,n),\tilde{x}\big)\:\big|+\big|\:r_{\:j+1,\;x}\big(X(w,n),\tilde{x}\big)\:\big|\Big)\cdot 2\cdot M \\
&<2M\cdot\sum_{j=1}^J\big|\:r_{\:j,\:x}\big(X(w,n),\tilde{x}\big)\:\big| \\
&=2M\cdot \sum_{j=1}^J\Big|\:\Big(\bar{\mathsf{C}}_{\:2j-1}\big(X(w,n)\big)-\mathbb{E}_{\:P}\big(\mathsf{c}_{\:2j-1}(\tilde{x})\big)\Big)\cdot \mathsf{c}_{\:2j-1}(x) \\
&\hspace{5cm}+\:\Big(\bar{\mathsf{C}}_{\:2j}\big(X(w,n)\big)-\mathbb{E}_{\:P}\big(\mathsf{c}_{\:2j}(\tilde{x})\big)\Big)\cdot \mathsf{c}_{\:2j}(x)\:\Big| \\
&\leq 2M \cdot \sum_{j=1}^{2J}\Big|\:\Big(\bar{\mathsf{C}}_{\:j}\big(X(w,n)\big)-\mathbb{E}_{\:P}\big(\mathsf{c}_{\:j}(\tilde{x})\big)\Big)\:\Big|,
\end{align*}
with the final inequality following as $|\mathsf{c}_{\:j}(x)|\leq 1\text{ for }j\in\mathbb{N},x\in(0,1)$.

Now, for $j\in\mathbb{N}$, pick $n_{\:w,\:j}\in\mathbb{N}$ large enough so that for $n>n_{\:w,\:j}$, we have
\begin{align*}
\left|\:\bar{\mathsf{C}}_{\:j}\big(X(w,n)\big)-\mathbb{E}_{\:P}\big(\mathsf{c}_{\:j}(\tilde{x})\big)\right|<\frac{\epsilon}{2M}\cdot 2^{-j}.
\end{align*}
Define $n_{\:w,\:(1:2J)}:=\max(n_{\:w,\:1},n_{\:w,\:2},\dots,n_{\:w,\:2J})$. Then for $n>n_{\:w,\:(1:2J)}$, we have
\begin{align*}
\Big|\:\hat{F}_{\:J,\:x}\big(X(w,n),h\big)-\mathbb{E}_{\:P}\big(\mathsf{1}_{\:J,\:x}(\tilde{x},h)\big)\:\Big|<2M\cdot \sum_{j=1}^{2J}\frac{\epsilon}{2M}\cdot 2^{-j}<\epsilon.
\end{align*}
Observe that $n_{\:w,\:(1:2J)}$ does not depend on $x$ or $h$, and so the above inequality holds uniformly for any $x\in(0,1)$ and $h\in(0,1)$. In other words, for $w\in N'$, if $n>n_{\:w,\:(1:2J)}$, then
\begin{align*}
\underset{\:\{x,h\}\in(0,1)}{\sup}\Big|\:\hat{F}_{\:J,\:x}\big(X(w,n),h\big)-\mathbb{E}_{\:P}\big(\mathsf{1}_{\:J,\:x}(\tilde{x},h)\big)\:\Big|<\epsilon.
\end{align*}
Since $P(N)=0$, we conclude that 
$\underset{|X|\to\infty}{\lim}\underset{\{x,h\}\in(0,1)}{\sup}\Big|\:\hat{F}_{\:J,\:x}(X,h)-\mathbb{E}_{\:P}\big(\mathsf{1}_{\:J,\:x}(\tilde{x},h)\big)\:\Big|\overset{a.s.}{=}0$.

(B)\ Let $x\in(0,1)$ and $\alpha\in(0,1)$, and observe that we may choose an $\epsilon_{\:\alpha}>0$ such that $\epsilon_{\:\alpha}<1-\alpha$. Take $\alpha^{\:(1)}\in(\alpha+\epsilon_{\:\alpha},1)$. Recall from part B of Lemma \ref{Lowess:lemma:population_asymptotic} that $\lim\limits_{J\rightarrow\infty}\mathcal{A}_{\:J,\:x}^{\:(P)}=(0,1)$, where $\mathcal{A}_{\:J,\:x}^{\:(P)}=\big\{\alpha\colon\exists\ h\in(0,1)\text{ such that }\mathbb{E}_{\:P}\big(\mathsf{1}_{\:J,\:x}(\tilde{x},h)\big)=\alpha\big\}$. So, there exists a $J_{\:\alpha}$ such that if $J>J_{\:\alpha}$, then we can find an $h_{\:J,\:\alpha^{\:(1)},\:x}^{\:(P)}\in(0,1)$ such that $\mathbb{E}_{\:P}\Big(\mathsf{1}_{\:J,\:x}\big(\tilde{x},h_{\:J,\:\alpha^{\:(1)},\:x}^{\:(P)}\big)\Big)=\alpha^{\:(1)}$. 

By part A we know that there exists a null set $N$ such that for $w\in N'$, we can find an $n_{\:w,\:(1:2J)}\in\mathbb{N}$, such that for $n>n_{\:w,\:(1:2J)}$ and any $h\in(0,1)$, we have
\begin{align*}
\big|\:\hat{F}_{\:J,\:x}\big(X(w,n),h\big)-\mathbb{E}_{\:P}\big(\mathsf{1}_{\:J,\:x}(\tilde{x},h)\big)\:\big|\leq\epsilon_{\:\alpha}.
\end{align*}
In particular, for $h=h_{\:J,\:\alpha^{\:(1)},\:x}^{\:(P)}$, if $n>n_{\:w,\:(1:2J)}$ we have
\begin{align*}
\Big|\:\hat{F}_{\:J,\:x}\big(X(w,n),h_{\:J,\:\alpha^{\:(1)},\:x}^{\:(P)}\big)-\mathbb{E}_{\:P}\Big(\mathsf{1}_{\:J,\:x}\big(\tilde{x},h_{\:J,\:\alpha^{\:(1)},\:x}^{\:(P)}\big)\Big)\:\Big|\leq\epsilon_{\:\alpha}.
\end{align*}
It thus follows that if $n>n_{\:w,\:(1:2J)}$, then $\hat{F}_{\:J,\:x}\big(X(w,n),h_{\:J,\:\alpha^{\:(1)},\:x}^{\:(P)}\big)\geq\alpha^{\:(1)}-\epsilon_{\:\alpha}>\alpha$. 

Observe from part B of Lemma \ref{Lowess:lemma:delta_expression} that $\hat{F}_{\:J,\:x}\big(X(w,n),0\big)=0$, and furthermore that $\hat{F}_{\:J,\:x}\big(X(w,n),h\big)$ is a continuous function of $h$. So, by the intermediate value theorem, we can find $h$ such that $\hat{F}_{\:J,\:x}\big(X(w,n),h\big)=\alpha$. Recalling that $\alpha\in(0,1)$ is arbitrary, we thus conclude that $\alpha\in\hat{\mathcal{A}}_{\:J,\;x}\big(X(w,n)\big)$ for any $\alpha\in(0,1)$, as long as $n>n_{\:w,\:(1:2J)}$, $w\in N'$, and $J>J_{\:\alpha}$. Therefore, since $P(N)=0$, we conclude that $
\lim\limits_{J\rightarrow\infty}\lim\limits_{|X|\rightarrow\infty}\hat{\mathcal{A}}_{\:J,\:x}(X)\overset{a.s.}{=}(0,1)$.

(C)\ Given $\alpha\in(0,1)$ and $x\in(0,1)$, let us consider arbitrary $\epsilon>0$. We know from part B of Lemma \ref{Lowess:lemma:population_asymptotic} that there exists $J_{\:1}\in\mathbb{N}$ such that $h_{\:J,\:\alpha,\:x}^{\:(P)}$ exists for $J>J_{\:1}$, which implies that $\mathbb{E}_{\:P}\big(\mathsf{1}_{\:J,\:}(\tilde{x},h_{\:J,\:\alpha,\:x}^{\:(P)})\big)=\alpha$. 

We also know from part B that there exists some $J_{\:2}\in\mathbb{N}$ such that, for $J>J_{\:2}$, we have a null set $N_{\:1,\:J}\subset \Omega$ such that, for $w\in N'_{\:1,\:J}$, there exists a $n_{\:w,\:J,\:1}\in\mathbb{N}$ such that $\hat{h}_{\:J,\:\alpha,\:x}\big((X(w,n)\big)$ exists and we have $\hat{F}_{\:J,\:x}\Big(X(w,n),\hat{h}_{\:J,\:\alpha,\:x}\big(X(w,n)\big)\Big)=\alpha$. 

Next, part A implies that for any $J\in\mathbb{N}$, there exists a null set $N_{\:2,\:J}$, such that if $w\in N'_{\:2,\:J}$, then there exists $n_{\:w,\:J,\:2}$ such that for $n>n_{\:w,\:J,\:2}$, we will have that $\big|\:\hat{F}_{\:J,\:x}\big(X(w,n),h\big)-\mathbb{E}_{\:P}\big(\mathsf{1}_{\:J,\:x}(\tilde{x},h)\big)\:\big|<\nicefrac{\epsilon}{2}$ for any $h\in(0,1)$. 

Set $J_{\:0}=\max(J_{\:1},J_{\:2})$ and $N=\bigcup_{\:J>J_{\:0}}\big(N_{\:1,\:J}\cup N_{\:2,\:J}\big)$. Clearly $N$ is a null set. Now, for $w\in N'$ and $J>J_{\:0}$, define $n_{\:w,\:J}=\operatorname{max}(n_{\:w,\:J,\:1},n_{\:w,\:J,\:2})$. Then, for $n>n_{\:w,\:J}$, $w\in N'$, and $J>J_{\:0}$, we have for $h\in(0,1)$:
\begin{align*}
\big|\:\hat{F}_{\:J,\:x}\big(X(w,n),h\big)-\mathbb{E}_{\:P}\big(\mathsf{1}_{\:J,\:x}(\tilde{x},h)\big)\:\big|<\frac{\epsilon}{2}.
\end{align*}
So, we can take $h=\hat{h}_{\:J,\:\alpha,\:x}\big(X(w,n)\big)$, and we will have that
\begin{align*}
\bigg|\:\hat{F}_{\:J,\:x}\Big(X(w,n),\hat{h}_{\:J,\:\alpha,\:x}\big(X(w,n)\big)\Big)-\mathbb{E}_{\:P}\bigg(\mathsf{1}_{\:J,\:x}\Big(\tilde{x},\hat{h}_{\:J,\:\alpha,\:x}\big(X(w,n)\big)\Big)\bigg)\:\bigg|<\frac{\epsilon}{2}.
\end{align*}
Since $\hat{F}_{\:J,\:x}\Big(X(w,n),\hat{h}_{\:J,\:\alpha,\:x}\big(X(w,n)\big)\Big)=\alpha=\mathbb{E}_{\:P}\Big(\mathsf{1}_{\:J,\:x}\big(\tilde{x},h_{\:J,\:\alpha,\:x}^{\:(P)}\big)\Big)$, we then have:
\begin{align*}
\bigg|\:\mathbb{E}_{\:P}\Big(\mathsf{1}_{\:J,\:x}\big(\tilde{x},h_{\:J,\:\alpha,\:x}^{\:(P)}\big)\Big)-\mathbb{E}_{\:P}\bigg(\mathsf{1}_{\:J,\:x}\Big(\tilde{x},\hat{h}_{\:J,\:\alpha,\:x}\big(X(w,n)\big)\Big)\bigg)\:\bigg|<\frac{\epsilon}{2}.
\end{align*}

Recall from the discussion following equation \ref{Lowess:eq:d} that $\eta^{\:(P)}_{\:J,\:x}(h)$ is the derivative of $\mathbb{E}_{\:P}\big(\mathsf{1}_{\:J,\:x}(\tilde{x},h)\big)$ with respect to $h$. Hence, by the fundamental theorem of calculus, we have:
\begin{align*}
\mathbb{E}_{\:P}\Big(\mathsf{1}_{\:J,\:x}\big(\tilde{x},h_{\:J,\:\alpha,\:x}^{\:(P)}\big)\Big)-\mathbb{E}_{\:P}\bigg(\mathsf{1}_{\:J,\:x}\Big(\tilde{x},\hat{h}_{\:J,\:\alpha,\:x}\big(X(w,n)\big)\Big)\bigg) &=\int_{\hat{h}_{\:J,\:\alpha,\:x}\big(X(w,n)\big)}^{h_{\:J,\:\alpha,\:x}^{\:(P)}}\eta^{\:(P)}_{\:J,\:x}(h)~dh.
\end{align*}
It thus follows that for $n>n_{\:w,\:J}$, $w\in N'$, and $J>J_{\:0}$, $\left|\:\int_{\hat{h}_{\:J,\:\alpha,\:x}\big(X(w,n)\big)}^{h_{\:J,\:\alpha,\:x}^{\:(P)}}\eta^{\:(P)}_{\:J,\:x}(h)~dh\:\right|<\frac{\epsilon}{2}.$

Recall that by Assumption \ref{Lowess:assumption:positive-density}, $F_{\:x}^{\:(P)}(h)=\int_{x-h}^{x+h}f^{\:(P)}(t)\,dt$ has the derivative $f_{\:x}^{\:(P)}(h)=f^{\:(P)}(x-h)+f^{\:(P)}(x+h)$, for $h\in(0,1)$. Therefore, for $n>n_{\:w,\:J}$, $w\in N'$, and $J>J_{\:0}$, by the fundamental theorem of calculus, we also have:
\begin{align*}
&\hspace{-1em}\Big|\:F_{\:x}^{\:(P)}\big(h_{\:J,\:\alpha,\:x}^{\:(P)}\big)-F_{\:x}^{\:(P)}\Big(\hat{h}_{\:J,\:\alpha,\:x}\big(X(w,n)\big)\Big)\:\Big|\\
&=\bigg|\:\int_{\hat{h}_{\:J,\:\alpha,\:x}\big(X(w,n)\big)}^{h_{\:J,\:\alpha,\:x}^{\:(P)}}f_{\:x}^{\:(P)}(h)~dh\:\bigg| \\
&\leq\bigg|\:\int_{\hat{h}_{\:J,\:\alpha,\:x}\big(X(w,n)\big)}^{h_{\:J,\:\alpha,\:x}^{\:(P)}}\big(f_{\:x}^{\:(P)}(h)-\eta^{\:(P)}_{\:J,\:x}(h)\big)~dh\:\bigg|+\,\bigg|\:\int_{\hat{h}_{\:J,\:\alpha,\:x}\big(X(w,n)\big)}^{h_{\:J,\:\alpha,\:x}^{\:(P)}}\eta^{\:(P)}_{\:J,\:x}(h)~dh\:\bigg| \\
&<\int_{\hat{h}_{\:J,\:\alpha,\:x}\big(X(w,n)\big)}^{h_{\:J,\:\alpha,\:x}^{\:(P)}}\big|\:f_{\:x}^{\:(P)}(h)-\eta^{\:(P)}_{\:J,\:x}(h)\:\big|~dh+\frac{\epsilon}{2}\\
&\leq\int_{\hat{h}_{\:J,\:\alpha,\:x}\big(X(w,n)\big)}^{h_{\:J,\:\alpha,\:x}^{\:(P)}}\xi^{\:(P)}_{\:J,\:x}\:~dh+\frac{\epsilon}{2}\\
&=\xi^{\:(P)}_{\:J,\:x}\cdot\big|\:\hat{h}_{\:J,\:\alpha,\:x}\big(X(w,n)\big)-h_{\:J,\:\alpha,\:x}^{\:(P)}\:\big|+\frac{\epsilon}{2}.
\end{align*}
Since $\big\{\hat{h}_{\:J,\:\alpha,\:x}\big(X(w,n)\big),h_{\:J,\:\alpha,\:x}^{\:(P)}\big\}\in(0,1)$, we conclude from this expression that 
\begin{align*}
\Big|\:F_{\:x}^{\:(P)}\big(h_{\:J,\:\alpha,\:x}^{\:(P)}\big)-F_{\:x}^{\:(P)}\Big(\hat{h}_{\:J,\:\alpha,\:x}\big(X(w,n)\big)\Big)\:\Big|<\xi_{\:J,\:x}^{\:(P)}+\frac{\epsilon}{2}.
\end{align*}

Now, suppose $\hat{h}_{\:J,\:\alpha,\:x}\big(X(w,\cdot)\big)$ is a limit point with respect to the set $\hat{\mathcal{H}}_{\:J,\:\alpha,\:x}(w)$. Then there exists a sub-sequence $\{n_{\:k}\}_{k\in\mathbb{N}}$ such that
\begin{align*} 
\lim\limits_{k\to\infty}\hat{h}_{\:J,\:\alpha,\:x}\big(X(w,n_{\:k})\big)=\hat{h}_{\:J,\:\alpha,\:x}\big(X(w,\cdot)\big).
\end{align*}
By continuity of probability, it then follows directly that
\begin{align*}
\lim\limits_{k\to\infty}F_{\:x}^{\:(P)}\Big(\hat{h}_{\:J,\:\alpha,\:x}\big(X(w,n_{\:k})\big)\Big)= F_{\:x}^{\:(P)}\Big(\hat{h}_{\:J,\:\alpha,\:x}\big(X(w,\cdot)\big)\Big),
\end{align*}
and moreover there exists some $k_{\:w,\:J}\in\mathbb{N}$ such that for $k>k_{\:w,\:J}$, we have
\begin{align*}
\Big|\:F_{\:x}^{\:(P)}\Big(\hat{h}_{\:J,\:\alpha,\:x}\big(X(w,\cdot)\big)\Big)-F_{\:x}^{\:(P)}\Big(\hat{h}_{\:J,\:\alpha,\:x}\big(X(w,n_{\:k})\big)\Big)\:\Big|<\frac{\epsilon}{2}.
\end{align*}

Hence if we choose $k$ such that $k>k_{\:w,\:J}$ and $n_k>\max(n_{\:w,\:J,\:1},n_{\:w,\:J,\:2})$, then 
\begin{align*}
&\hspace{-2em}\Big|\:F_{\:x}^{\:(P)}\Big(\hat{h}_{\:J,\:\alpha,\:x}\big(X(w,\cdot)\big)\Big)-F_{\:x}^{\:(P)}\big(h_{\:J,\:\alpha,\:x}^{\:(P)}\big)\:\Big| \\
&\leq\Big|\:F_{\:x}^{\:(P)}\Big(\hat{h}_{\:J,\:\alpha,\:x}\big(X(w,\cdot)\big)\Big)-F_{\:x}^{\:(P)}\Big(\hat{h}_{\:J,\:\alpha,\:x}\big(X(w,n_{\:k})\big)\Big)\:\Big| \\
&\qquad+\,\Big|\:F_{\:x}^{\:(P)}\big(h_{\:J,\:\alpha,\:x}^{\:(P)}\big)-F_{\:x}^{\:(P)}\Big(\hat{h}_{\:J,\:\alpha,\:x}\big(X(w,n_{\:k})\big)\Big)\:\Big| \\
&<\ \Big(\frac{\epsilon}{2}+\xi_{\:J,\:x}^{\:(P)}\Big)+\frac{\epsilon}{2}.
\end{align*}
Finally, since $\epsilon$ is arbitrary, we must then have the claimed result that for any $w\in N'$,
\begin{align*}
\Big|\:F^{\:(P)}\Big(\hat{h}_{\:J,\:\alpha,\:x}\big(X(w,\cdot)\big)\Big)-F^{\:(P)}\big(h_{\:J,\:\alpha,\:x}^{\:(P)}\big)\:\Big|\leq\xi_{\:J,\:x}^{\:(P)}.
\end{align*}
\end{proof}

\subsection{Main proof of Theorem \ref{Lowess:thm:prop}}
\begin{proof}
Suppose Assumption \ref{Lowess:assumption:positive-density} and Assumption \ref{Lowess:assumption:uniform-convergence} hold. First of all, since Assumption \ref{Lowess:assumption:positive-density} is true, from part C of Lemma \ref{Lowess:lemma:population_asymptotic}, we know that $\lim\limits_{J\to\infty}h_{\:J,\:\alpha,\:x}^{\:(P)}=h^{\:(P)}_{\:\alpha,\:x}$, and hence that $\lim\limits_{J\to\infty}F_x\left(h_{\:J,\:\alpha,\:x}^{\:(P)}\right)=F_x\left(h^{\:(P)}_{\:\alpha,\:x}\right)$. Secondly, since Assumption \ref{Lowess:assumption:uniform-convergence} is true, for fixed $x\in(0,1)$, $\eta_{\:J,\:x}^{\:(P)}(h)$ converges uniformly to $f^{\:(P)}(x-h)+f^{\:(P)}(x+h)$ for $h\in(0,1)$, and so, for the set of real numbers $\{\xi_{\:J,\:x}^{\:(P)}\}_{J\in\mathbb{N}}$, we have $\lim\limits_{J\to\infty}\xi_{\:J,\:x}^{\:(P)}\to 0$ for any $x\in(0,1)$. Now fix $\epsilon>0$, and observe that we can chose a $J_{\:1}$ large enough such that for $J>J_{\:1}$, both $\left|\:F_{\:x}\big(h_{\:J,\:\alpha,\:x}^{\:(P)}\big)-F_{\:x}\big(h^{\:(P)}_{\:\alpha,\:x}\big)\:\right|<\nicefrac{\epsilon}{2}$ and $\xi_{\:J,\:x}^{\:(P)}<\nicefrac{\epsilon}{2}$. 

Observe that $\limsup\limits_{n\to\infty}\hat{h}_{\:J,\:\alpha,\:x}\big(X(w,n)\big)$ is a limit point of the tail set $\hat{\mathcal{H}}_{\:J,\:\alpha,\:x}(w)$. Since Assumption \ref{Lowess:assumption:positive-density} holds, from part C of Lemma \ref{Lowess:lemma:sample_asymptotic}, we know that for $J > J_{\:2}$ there exists some null set $N$, such that for $w\in N'$, we have the inequality:
\begin{align*}
\left|\:F_{\:x}^{\:(P)}\Big(\limsup\limits_{n\to\infty}\hat{h}_{\:J,\:\alpha,\:x}\big(X(w,n)\big)\Big)-F_{\:x}^{\:(P)}\Big(h_{\:J,\:\alpha,\:x}^{\:(P)}\Big)\:\right|\leq\xi_{\:J,\:x}^{\:(P)}.
\end{align*}

Then we have for any $J>\max(J_{\:1},J_{\:2})$ that
\begin{align*}
&\hspace{-2em}\Big|\:F_{\:x}^{\:(P)}\Big(\limsup\limits_{n\to\infty}\hat{h}_{\:J,\:\alpha,\:x}\big(X(w,n)\big)\Big)-F_{\:x}^{\:(P)}\Big(h^{\:(P)}_{\:\alpha,\:x}\Big)\:\Big| \\
&\leq\Big|\:F_{\:x}^{\:(P)}\Big(\limsup\limits_{n\to\infty}\hat{h}_{\:J,\:\alpha,\:x}\big(X(w,n)\big)\Big)-F_{\:x}^{\:(P)}\Big(h_{\:J,\:\alpha,\:x}^{\:(P)}\Big)\:\Big| \\
&\hspace{2em}+\Big|\:F_{\:x}\Big(h_{\:J,\:\alpha,\:x}^{\:(P)}\Big)-F_{\:x}\Big(h^{\:(P)}_{\:\alpha,\:x}\Big)\:\Big| \\
&<\xi_{\:J,\:x}^{\:(P)}+\frac{\epsilon}{2},
\end{align*}
which is less than $\epsilon$ by the result of the preceding paragraph. Since $\epsilon$ is arbitrary, we conclude
\begin{align*}
\lim\limits_{J\to\infty}F_{\:x}^{\:(P)}\Big(\limsup\limits_{n\to\infty}\hat{h}_{\:J,\:\alpha,\:x}\big(X(w,n)\big)\Big)=F_{\:x}^{\:(P)}\Big(h^{\:(P)}_{\:\alpha,\:x}\Big).
\end{align*}

Next, under Assumption \ref{Lowess:assumption:positive-density}, part A of Lemma \ref{Lowess:lemma:inverse_continuity} implies that
\begin{align*}
\lim\limits_{J\to\infty}\limsup\limits_{n\to\infty}\hat{h}_{\:J,\:\alpha,\:x}\big(X(w,n)\big)=h^{\:(P)}_{\:\alpha,\:x}.
\end{align*}
Since this is true for any $w\in N'$ relative to the null set $N$, we have that
\begin{align*}
\lim\limits_{J\to\infty}\limsup\limits_{|X|\to\infty}\hat{h}_{\:J,\:\alpha,\:x}(X)\overset{a.s.}{=}h^{\:(P)}_{\:\alpha,\:x}.
\end{align*}

 Under Assumption \ref{Lowess:assumption:positive-density}, from part B of Lemma \ref{Lowess:lemma:exact_convergence}, we have the result that
\begin{align*}
\lim\limits_{|X|\to\infty}\hat{h}_{\:\alpha,\:x}(X)\overset{a.s.}{=}h^{\:(P)}_{\:\alpha,\:x},
\end{align*}
and so we must have that 
\begin{align*}
\lim\limits_{J\to\infty}\limsup\limits_{|X|\to\infty}\hat{h}_{\:J,\:\alpha,\:x}(X)\overset{a.s.}{=}\lim\limits_{|X|\to\infty}\hat{h}_{\:\alpha,\:x}(X).
\end{align*}

The result  $\lim\limits_{J\to\infty}\liminf\limits_{|X|\to\infty}\hat{h}_{\:J,\:\alpha,\:x}(X)\overset{a.s.}{=}\lim\limits_{|X|\to\infty}\hat{h}_{\:\alpha,\:x}(X)$ follows analogously.
\end{proof}

\end{document}